\documentclass[10.5pt,onecolumn,draftclsnofoot]{IEEEtran}
\IEEEoverridecommandlockouts
\usepackage{bm}
\usepackage{amsmath, amssymb, cite}
\usepackage{amsthm,bm,bbm}
\usepackage{mathtools}
\usepackage[size=small]{caption}

\usepackage{amsthm,  multirow, color, amsfonts}
\usepackage{tabulary}
\usepackage{subcaption}
\usepackage{graphicx}
\usepackage{setspace}
\usepackage{enumerate}

\usepackage[ruled]{algorithm2e}
\usepackage{comment}
\usepackage{pdfpages}
\usepackage{hyperref}
\usepackage{stfloats}
\usepackage{lscape, hhline}

\usepackage{etoolbox}
\makeatletter
\patchcmd{\@makecaption}
  {\scshape}
  {}
  {}
  {}
\makeatother

\makeatletter
\def\BState{\State\hskip-\ALG@thistlm}
\makeatother

\usepackage[utf8]{inputenc} 
\usepackage[T1]{fontenc}
\usepackage{url}
\usepackage{ifthen}

\usepackage[dvipsnames]{xcolor}

\definecolor{pptgreen}{RGB}{0,176,80}

\DeclareMathOperator{\minrank}{minrk}
\DeclareMathOperator{\rk}{{\rm rk}}
              
\newtheorem{theo}{Theorem}
\newtheorem{defi}{Definition}

\newtheorem{prop}{Proposition}
\newtheorem{remark}{Remark}

\newtheorem{cor}{Corollary}

\newtheorem{lem}{Lemma}
\newtheorem{ex}{Example}

\usepackage{mathrsfs}

\newcommand{\deryar}[1]{\textcolor{red}{ #1}}

\newcommand{\GU}{G_{\mathrm B}}

\newcommand{\GUn}{G^{\dot{\vee} n}_{\mathrm B}} 
\newcommand{\GUt}{G^{\dot{\vee} t}_{\mathrm B}}

\def\compactify{\itemsep=0pt \topsep=0pt \partopsep=0pt \parsep=0pt}
\let\latexusecounter=\usecounter

\ifx\du\undefined
  \newlength{\du}
\fi
\usepackage{siunitx}
\usepackage{tikz}
\usetikzlibrary{shapes.geometric}
\usepackage{verbatim}
\usepackage{color,soul}
\usetikzlibrary{spy,calc}

\usepackage{breqn}

\hypersetup{breaklinks=true}
\usepackage{diagbox}

\tikzstyle{vertex}=[circle,minimum size=1pt,inner sep=0pt]
\tikzstyle{selected vertex} = [vertex, fill=red!24]
\tikzstyle{edge} = [draw,thick,-]
\tikzstyle{weight} = [font=\small]
\tikzstyle{selected edge} = [draw,line width=5pt,-,red!50]
\tikzstyle{ignored edge} = [draw,line width=5pt,-,black!20]

\usepackage{ifthen}
\usetikzlibrary{shapes.geometric}

\tikzset{
    edge/.style={draw},
    weight/.style={font=\small}
}

\tikzset{
    rednode/.style={draw, rectangle, inner sep=2pt, fill=red},
    greennode/.style={draw, regular polygon, regular polygon sides=3, inner sep=2pt, fill=green},
    bluenode/.style={draw, circle, inner sep=2pt, fill=blue},
    edge/.style={draw},
    weight/.style={font=\small}
}

\begin{document}
\title{Graph-Theoretic Bounds for Non-Linear Function Computation Broadcast 
\thanks{This research was partially supported by European Research Council ERC-StG Project SENSIBILITÉ (101077361), the Huawei France-Funded Chair Toward Future Wireless Networks, and the French National Research Agency (ANR-22-PEFT-0010) as part of France 2030.}}

\author{Derya Malak, Vijith Kumar K. P., and M. Reza Deylam Salehi\thanks{D. Malak is with the Communication Systems Department, EURECOM, Sophia Antipolis, FRANCE. This work was conducted in part while V. Kumar K. P. and M. R. Deylam Salehi were with EURECOM. 
The order of authors does not represent their relative contributions to this work.\hfill Corresponding author: D. Malak (malak@eurecom.fr)} \thanks{An early version of the paper appeared at the 2025 Int. Symp. Inf. Theory~\cite{DeyKizMal2025nonlinear}. \hfill Manuscript last revised: {\today}.}}

\maketitle
\vspace{-2.5em}
\begin{abstract}
This work studies non-linear function computation broadcast (NFCB), in which a sender with access to $N$ datasets $(X_1,\dots,X_N)$ broadcasts a common message to $K$ users, each possessing side information and requesting a function of the datasets. The goal is to minimize the rate required for asymptotically lossless recovery of all demands. We introduce a broadcast graph that jointly captures the source distribution, side information, and demanded functions. Using K\"orner’s characteristic-graph framework, we develop an achievable scheme for arbitrary $K$, general source distributions, and general finite-field demands, including linear, non-separable, and non-linear functions, without restricting the encoding or decoding operations to be linear. We also present a side-information-assisted independent-set scheme and characterize the optimal graph-based achievable rate for compatible functions. For the converse, we derive a multi-letter response-profile bound that strengthens a basic side-information converse, zero-error and asymptotically lossless clique-entropy bounds based on the operational block broadcast graph, and a genie-aided lower bound. For binary NFCB with $N=K$ and side information $X_i$ at user $i$, we characterize the optimal rate in several special cases and bound the worst-case and average additive gaps between the proposed achievable rate and the genie-aided converse for $K=3$ and $K=4$. Finally, three-user examples with Boolean and linear demands illustrate the proposed bounds and show that non-linear encoding can strictly outperform the best scalar and vector linear schemes.
\end{abstract}

\begin{IEEEkeywords}
Non-linear function computation broadcast (NFCB), noiseless broadcast channel, distributed functional compression, characteristic graphs, union graphs, distributed non-linear function computation, and minrank.
\end{IEEEkeywords}

\section{Introduction}
\label{sec:intro}

Large-scale data processing increasingly relies on communication networks in distributed computation settings, including distributed learning~\cite{yang2023federated}, over-the-air computing~\cite{csahin2023survey}, private-key encryption~\cite{kate2010distributed}, and coded distributed computing~\cite{li2016coded,li2017fundamental, MaddahAli2013,tandon2017gradient,dutta2019optimal,khalesi2022multi,yu2020straggler, namboodiri2026fundamental}. Frameworks such as MapReduce~\cite{dean2008mapreduce} and Spark~\cite{zaharia2010spark} support large-scale parallel processing across distributed servers, but limited computation and communication resources, privacy constraints, and bounded user memory create fundamental bottlenecks in the design of distributed computing schemes.
Beyond classical parallelization, coded distributed computing (CDC) showed that coding across intermediate computation results can substantially reduce communication load and mitigate straggler and bandwidth bottlenecks~\cite{li2017fundamental, MaddahAli2013}. This idea has been applied to distributed machine learning~\cite{lee2017speeding} and coded matrix multiplication~\cite{jia2021capacity}. Several works (see, e.g.,~\cite{tandon2017gradient,dutta2019optimal,khalesi2022multi,yu2020straggler, namboodiri2026fundamental}) optimized the computation-communication tradeoff for linearly separable computations, but extending these approaches to non-linear tasks remains challenging.

Broadcast transmission also underlies many applications, including coded caching~\cite{MaddahAli2013, NadMadAve2016,bidokhti2018noisy}, coded distributed computing~\cite{elkordy2021compressed}, index coding~\cite{bar2011index, LeeDimaindex,tahmasbi2014critical}, federated learning~\cite{tian2022fedbroadcast}, and hypothesis testing~\cite{salehkalaibar2021distributed}. In such problems, a sender with access to all datasets transmits a common broadcast message to multiple users, each with side information and a demand given by a multivariate, possibly nonlinear, function of the datasets. The side information available at each user and the dependencies among data and functions determine the communication cost. Related applications arise in multiple description coding~\cite{gamal1982achievable}, video broadcasting~\cite{reimers1998digital}, cryptography~\cite{cryptoeprint:2024/154}, coordination~\cite{montalban2021broadcast}, and synchronization~\cite{halpern1991clock}. These connections motivate the study of NFCB and its communication complexity.

\subsection{Function Computation Broadcast and Known Instances}
\label{sec:known_instances}

We review several known instances of function computation broadcast: index coding, linear computation broadcast, and complementary delivery.

\subsubsection{Index Coding}
\label{sec:index_coding}

Index coding problem involves a sender with $N$ independent datasets ${\bf X}_{[N]}$ and $K=N$ users, where user $i$ demands $f_i=X_i$ and has side information $\mathcal{S}_i\subseteq \{X_j\}_{j\in[N]}$. The sender broadcasts a common message $M$ so that each user can recover its demand. A standard representation is the {\emph{side information graph}}.

\begin{defi}[{\bf Side information graph}~\cite{bar2011index}]
\label{def:side_info_graph}
Given an index coding problem with $N$ users and datasets ${\bf X}_{[N]}$, the side information graph is a directed graph $G=(\mathcal{V}(G),\mathcal{E}(G))$, where $\mathcal{V}=[N]$ and $\mathcal{E}=\{ i\to j\}$, with $i\to j$ indicating that user $i$ knows dataset $X_j$.
\end{defi}

The optimal length of a binary linear index code equals the minimum rank of a matrix fitting the side information graph $G$, denoted $\minrank_2(G)$, where $\rk_2(\cdot)$ denotes the rank over $\mathbb{F}_2$~\cite{bar2011index}.

\begin{defi}
\label{def:minrank}
Given a side information graph $G$, the minrank over $\mathbb{F}_q$ is~\cite{alon2020minrank}
\begin{align}
\label{eq:minrank}
\minrank_q(G)= \min\{\rk({\bf A}) \ : \ {\bf A}=(a_{ij})\in\mathbb{F}_q^{N\times N}\,\,\mbox{fits}\,\, G \} 
\end{align}
where ${\bf A}$ fits $G$ if $a_{ii}\neq 0$ for all $i$, and $a_{ij}=0$ whenever edge $i \to j$ does not exist in $G$.
\end{defi}

The graph-theoretic quantity $\minrank_q(G)$ characterizes the optimal scalar linear index coding rate over $\mathbb{F}_q$~\cite{bar2011index}, while also providing a lower bound on the minimum length of non-linear index codes for arbitrary graphs~\cite{bar2011index}. 
In particular, 
there exists an optimal scalar linear code of length $\minrank_2(G)$~\cite[Theorem~1]{bar2011index}. The following converse also holds for general index codes~\cite[Theorem~3]{bar2011index}.

\begin{theo}
\label{theo:MAIS}
Given a side information graph $G=(\mathcal{V}(G),\mathcal{E}(G))$, let $G|_{\mathcal J}$ be the subgraph induced by $\mathcal J\subseteq \mathcal{V}(G)$. The size of the maximum acyclic induced subgraph is
\begin{align}
\label{eq:MAIS_bound}
{\mathrm{MAIS}}(G)=\max\limits_{\mathcal J\subseteq \mathcal{V}\ : \ G|_{\mathcal J}\ \mbox{is acyclic}}\ |\mathcal J| .
\end{align}
The length of any $\delta$-error randomized index code for $G$ is lower bounded by~\cite[Theorem~3]{bar2011index}
\begin{align}
R_{\mathrm{MAIS}}={\mathrm{MAIS}}(G)\cdot (1-h(\delta)) \ .
\end{align}
\end{theo}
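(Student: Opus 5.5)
The plan is to reduce to the acyclic induced subgraph and then run a counting/Fano argument on that subproblem. Fix $\mathcal J\subseteq\mathcal V$ with $G|_{\mathcal J}$ acyclic and $|\mathcal J|={\mathrm{MAIS}}(G)$. Consider inputs in which every $X_j$ with $j\notin\mathcal J$ is fixed to a constant, say $0$. The restricted code is a valid $\delta$-error index code for $G|_{\mathcal J}$: each user $i\in\mathcal J$ ignores side information outside $\mathcal J$, since it is constant, and still decodes $X_i$ with error at most $\delta$. Its length is no larger than that of the original code. It therefore suffices to show that any $\delta$-error randomized index code for an acyclic graph on $m$ vertices with independent uniform bits has length at least $m(1-h(\delta))$.

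For the acyclic case, order $\mathcal J$ topologically as $j_1,\dots,j_m$ so that every edge of $G|_{\mathcal J}$ goes from $j_k$ to some $j_\ell$ with $\ell>k$. Then the side information of user $j_k$ is contained in $\{X_{j_{k+1}},\dots,X_{j_m}\}$. Let $M$ be the broadcast message, which may depend on the shared randomness $W$ independent of ${\bf X}$. Condition on $W$ throughout and apply the chain rule:
\begin{align*}
H(M\mid W)\ge I({\bf X}_{\mathcal J};M\mid W)=\sum_{k=1}^{m} I(X_{j_k};M\mid X_{j_{k+1}},\dots,X_{j_m},W).
\end{align*}
The $k$-th term equals $1-H(X_{j_k}\mid M,X_{j_{k+1}},\dots,X_{j_m},W)$. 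User $j_k$ decodes $X_{j_k}$ from $(M,W,\text{its side information})$, and all of this is a function of the conditioning variables. Fano's inequality for a binary variable therefore bounds the conditional entropy by $h(\delta_k)$, where $\delta_k\le\delta$ is user $j_k$'s error probability. Summing and using monotonicity of $h$ on $[0,1/2]$ gives $H(M\mid W)\ge m(1-h(\delta))$. Since a length-$\ell$ code has $H(M\mid W)\le\ell$, this yields the bound.

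The main obstacle is interpreting the error criterion correctly. If $\delta$ bounds each user's error probability averaged over uniform inputs and the randomness, the Fano step is direct; applying Fano conditionally on $W$ requires averaging, which works because $h$ is concave (Jensen). If instead the error guarantee is worst-case over inputs, it implies the average-case one under the uniform distribution, so the same argument applies. A secondary check is that fixing the $X_j$ with $j\notin\mathcal J$ does not increase the error. To avoid this issue, I would average over those inputs rather than fix them, treating them as extra independent randomness shared only partially. The chain-rule argument tolerates this if one conditions on ${\bf X}_{\mathcal V\setminus\mathcal J}$ as well.
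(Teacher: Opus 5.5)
Your proof is correct and is essentially the argument behind the cited result of Bar-Yossef et al.~\cite{bar2011index}, which the paper quotes without reproving: pass to a maximum acyclic induced subgraph, take uniform inputs, expand $I(\mathbf X_{\mathcal J};M\mid W)$ by the chain rule along a topological order, and bound each term below by $1-h(\delta)$ via Fano. State the tacit assumption $\delta\le 1/2$ explicitly, since that is what gives $h(\delta_k)\le h(\delta)$; your alternative of conditioning on $\mathbf X_{\mathcal V\setminus\mathcal J}$ rather than fixing it makes the argument hold whether the error guarantee is worst-case or averaged.
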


The index coding problem is closely related to network coding and matroid theory~\cite{arbabjolfaei2018fundamentals,el2010index}. Extensions include 
multiple senders jointly serving user demands~\cite{khalesi2025hyper,li2018multi}, coded cooperative data exchange~\cite{el2010coding}, where all users recover the full dataset, and functional index coding, where user demands are functions of the datasets~\cite{gupta2016relation}.

\subsubsection{Linear Computation Broadcast}
\label{sec:MinRank_ach}

{\emph{Linear computation broadcast}} generalizes index coding by allowing users to possess and request linear functions of the dataset~\cite {sun2019capacity,yao2022capacity,yao2024capacity,yao2024generic,MaTunin2025}. In a system with $N$ datasets and $K$ users, user $i$ requests $f_i={\bf X}_{[N]}^{\intercal}{\bf V}_i$ 
and the side information 
may be an arbitrary linear function of the datasets, namely $\mathcal{S}_i={\bf X}_{[N]}^{\intercal}{\bf V}'_i$, given a linear encoding matrix ${\bf V}'_i$, where ${\bf V}_i,{\bf V}'_i\in\mathbb{F}_q^{N\times 1}$. 
The optimal broadcast message can be obtained by solving a generalized index coding problem. In particular, \cite{LeeDimaindex} reformulated this task as a minrank problem over $\mathbb{F}_q$, where the fitting matrix captures the subspace relations induced by users’ demands and side information. This optimization is NP-hard.

The optimal broadcast rate has been characterized for two-user systems when the server stores either the full dataset~\cite{sun2019capacity} or an arbitrary linear function of it~\cite{yao2022capacity}. For the three-user case, the optimal rate under vector linear coding was derived using linear subspace decompositions~\cite{yao2024capacity}. This approach was extended to arbitrary $K$, yielding an achievable scheme~\cite{MaTunin2025}. For the symmetric $K$-user setting, the optimal rate is known for $K>d$, and the generic capacity for arbitrary $K$ is characterized within a factor of $2$~\cite{yao2024generic}.

For coded side information and all $K\geq 2$, adapting~\cite[Theorem~1]{LeeDimaindex}, the optimal linear vector index code length over $\mathbb{F}_q$ is given by~\cite{LeeDimaindex}
\begin{align}
\label{eq-min-rank-linear-demand} 
\minrank_q(\{{\bf V}_i, {\bf V}'_i\}_{i\in[K]})
\triangleq\min_{\lambda_1, \dots, \lambda_K}\left\{ \rk({\bf A}) \ : \ {\bf A}(i,:)\triangleq 
 {\bf V}_i^\intercal \oplus_q \lambda_i {\bf V}^{\prime\intercal}_i \ , \ \lambda_i\in\mathbb{F}_q\ ,\,\forall\, i\in[K] \right\} ,
 \end{align}
which generalizes matrix minrank to coded side information~\cite{LeeDimaindex}, and remains open in general. In the special binary case where ${\bf V}_i$ and ${\bf V}'_i$ encode standard demands and uncoded side information,~\eqref{eq-min-rank-linear-demand} reduces to~\eqref{eq:minrank}.

\subsubsection{Complementary Delivery}
\label{sec:comp_del}

The complementary delivery problem is a special case of NFCB with $f_1=X_2$ and $f_2=X_1$. 
In the more general two-user non-linear setting in which each decoder seeks a function of $(X_1,X_2)$ and has complementary side information,~\cite{ravi2016broadcast} characterizes the broadcast rate using the $f_1f_2$-modified rook’s graph.


\begin{defi}[Modified rook's graph~\cite{ravi2016broadcast}]
\label{def:modified_rook_graph}
$f_1f_2$-modified rook's graph $G^{f_1f_2}_{X_1X_2}=(\mathcal{V},\mathcal{E})$ is a subgraph of the rook's graph on $\mathcal{X}_1\times\mathcal{X}_2$, defined as follows: $\mathcal{V}$ is the vertex set
 \begin{align}
 \label{eq:vertex_set_rook}
\mathcal{V}=\{((x_1^{(k)},x_2^{(k)})\in\mathcal{X}_1\times\mathcal{X}_2\} \ ,
 \end{align}
and $\mathcal{E}$ is the edge set
 \begin{align}
 \label{eq:edge_set_rook}
\mathcal{E}=\{\{(x_1^{(1)},x_2^{(1)}),(x_1^{(2)},x_2^{(2)})\}\ : \ &(x_1^{(1)},x_2^{(1)}),(x_1^{(2)},x_2^{(2)})\in\mathcal{V},\ (x_1^{(1)},x_2^{(1)})\neq (x_1^{(2)},x_2^{(2)})\ , \nonumber\\
&x_1^{(1)}=x_1^{(2)}\quad \mbox{and}\quad f_1(x_1^{(1)},x_2^{(1)})\neq f_1(x_1^{(2)},x_2^{(2)}) \ ,\nonumber\\
&x_2^{(1)}=x_2^{(2)}\quad \mbox{and}\quad f_2(x_1^{(1)},x_2^{(1)})\neq f_2(x_1^{(2)},x_2^{(2)})\} \ .
 \end{align}
\end{defi}

Under an $\epsilon$-error criterion, for a $(2,2,\{\mathcal{S}_i\}_{i\in[2]},\{f_i\}_{i\in[2]})$ NFCB problem with complementary side information, i.e., $\mathcal{S}_1=X_1$ and $\mathcal{S}_2=X_2$,~\cite[Theorem~3]{ravi2016broadcast} gave a single-letter characterization of the optimal broadcast rate when either (i) $f_1(X_1,X_2)$ and $f_2(X_1,X_2)$ belong to a compatible class, including the case $f_1=f_2$, or (ii) $X_1,X_2\in\mathbb F_2$. 

\begin{defi}[Compatible functions~\cite{ravi2016broadcast}]
\label{def:compatible_functions}
Functions $f_1(X_1, X_2)$ and $f_2(X_1, X_2)$ are compatible if there exists a function $f(X_1, X_2)$ such that $G^{ff}_{X_1X_2} = G^{f_1f_2}_{X_1X_2}$.
\end{defi}

The NFCB framework generalizes index coding~\cite{BirKol2006,el2010index,BarBirJayKol2011,arbabjolfaei2018fundamentals} and remains open in general. Prior work has mainly focused on linear computation broadcast~\cite{wan2021optimal,sun2019capacity, yao2022capacity,yao2024capacity, yao2024generic} or non-linear settings with complementary demands~\cite{ravi2016broadcast}. These results typically rely on strong structural assumptions and often hold only for specific numbers of users. The framework also contains multiple description coding~\cite{gamal1982achievable} as a special case. To the best of our knowledge, characterizing the optimal broadcast rate for general $(K,N,\{\mathcal{S}_i\}_{i\in[K]},\{f_i\}_{i\in[K]})$ NFCB is at least as hard as solving linear computation broadcast, and hence at least as hard as the general index coding problem. Our goal is thus to derive general upper and lower bounds on the broadcast rate.

\subsection{Contributions}
\label{sec:contributions} 

Motivated by the above, this work studies non-linear function computation broadcast (NFCB) with general function demands, including linear, non-separable, and non-linear functions over finite fields. A sender with access to $N$ datasets transmits a common broadcast message $M$ to $K$ users. Each user has side information and seeks to compute a possibly non-linear function of the datasets using the broadcast message and its side information. The communication rate is referred to as the \emph{computation broadcast rate}.

The technical contributions of this work are as follows.
\begin{itemize}

\item On the achievability side, extending~\cite{ravi2016broadcast,sun2019capacity, yao2022capacity,yao2024capacity, yao2024generic} to arbitrary $K$, we develop a \emph{characteristic graph-based encoding scheme}—originally due to K\"orner~\cite{korner1971coding}—for NFCB with general source distributions and non-linear demands over finite fields. 
In Theorem~\ref{theo-general-achievable-rate-via-broadcast}, the sender exploits the structure of the $N$ datasets and $K$ functions to construct a broadcast message enabling asymptotically lossless decoding at all users, without requiring linearity or separability, or linear encoding/decoding. 
Corollary~\ref{cor:side-information-assisted-graph-bound} provides a side-information-assisted independent-set coding scheme.  
Proposition~\ref{prop_achievable-rate-via-broadcast_compatible} characterizes the optimal achievable rate for compatible functions.

\item We derive novel graph-theoretic converse bounds on the broadcast rate for $(K\geq 2, N)$ NFCB. In particular, we present a multi-letter response-profile converse bound (Theorem~\ref{theo-lower-bound-mutual-info}) on the broadcast rate for the $(K, N,\{\mathcal{S}_i\}_{i\in[K]},\{f_i\}_{i\in[K]})$ NFCB scheme. By carefully capturing each user’s demand and side information, Theorem~\ref{theo-lower-bound-mutual-info} gives tighter converses than a simple side-information-based converse. Compared with the achievable rate in Theorem~\ref{theo-general-achievable-rate-via-broadcast}, which optimizes the coloring distribution over MISs of $\GU$, the converse bound in Theorem~\ref{theo-lower-bound-mutual-info} relies on a carefully designed variable to approximate the fundamental rate limits induced by $({\bf X}_{[N]}, \{\mathcal{S}_i\}_{i\in[K]},\{f_i\}_{i\in[K]})$. 
We will subsequently detail two characteristic graph-based converses, namely the zero-error clique-entropy converse  (Theorem~\ref{thm:zero_error_clique_converse}), and the vanishing-error clique-entropy converse (Theorem~\ref{thm:vanishing_error_clique_converse}) using the length-$n$ operational block broadcast graph $G_{\mathrm B}^{(n)}$, and relying on cliques that capture the demand and side information of each user. 
We will then present an additional genie-aided lower bound (Proposition~\ref{prop-genie}). 

\item We further establish a genie-aided lower bound  (Proposition~\ref{prop-genie}
) that extend techniques from the $(2,N)$ setting in~\cite{jafar2006capacity} and~\cite{sun2018capacity} to the $(K>2,N)$ case. We then upper bound the average additive gap between 
Corollary~\ref{cor:side-information-assisted-graph-bound} 
and Proposition~\ref{prop-genie}. For general NFCB with $q=2$, covering both linear and non-linear demands, our converses are shown to be tight for several special topologies (see Propositions~\ref{prop-2-user-rate_optimality},~\ref{prop-3-user-gap-genie}, and~\ref{prop-4-user-gap-genie}).

\item Examples~\ref{ex-nonlinear-comp},~\ref{ex:nonlinear-outperforms-linear}, and~\ref{ex-nonlinear-comp_self-complementary_graph} examine Boolean (non-linear) function demands, while Example~\ref{ex-linear-comp-jafar} focuses on linear function demands. 
Example~\ref{ex:nonlinear-outperforms-linear} demonstrates the clear savings of non-linear encoding over best (scalar and vector) linear encodings (cf.~\cite{yao2024capacity, yao2022capacity, MaTunin2025}) through Proposition~\ref{prop:advantage_nonlinear}, and upper bounds the zero-error and asymptotically lossless operational broadcast entropies by the performance of linear encoding via Proposition~\ref{prop:nonlinear-graph-separation}, underscoring the advantage of the proposed graph-based approach.
\end{itemize}

\subsection{Organization}
\label{sec:organization}

Section~\ref{sec:intro} reviewed known instances of computation broadcast and summarized our contributions. Section~\ref{sec:sys-mod} presents the NFCB system model. Section~\ref{sec:achievability} details achievability schemes that exploit the notion of operational block broadcast graph. Section~\ref{sec:converse} develops converse bounds. Section~\ref{sec:optimality_gap} characterizes the average additive gap between our achievability scheme (
Corollary~\ref{cor:side-information-assisted-graph-bound}) and the genie-aided converse  (Proposition~\ref{prop-genie}). Section~\ref{sec:3-user-comp-broad} compares our scheme with the state of the art via examples. Section~\ref{sec:conc} concludes the paper and discusses future directions.

{\bf Notation.} 
We denote random variables and their realizations in regular typeface, and matrices in boldface, with elements chosen from a finite field $\mathbb{F}_q$  of order $q\geq 2$, and the nonzero elements of $\mathbb{F}_q$ are denoted by $\mathbb{F}_q^*$. Let $\log$ and $\log_q$ be the binary logarithm and the logarithm in base $q > 2$, respectively. $\oplus_q$ denotes modulo-$q$ sum operation.
Given a pair of discrete random variables $(X,Y)$ with values over the space $\mathcal{X}\times\mathcal {Y}$, we denote by $P_{X,Y}$ their joint distribution, by $P_X$ and $P_Y$ their marginal distributions, respectively, and by $P_{X\mid Y}$ the conditional distribution of $Y$ given $X$. 
The Shannon entropies of $X$ in binary and q-ary units are given by $H_2(X)$ and $H_q(X) =\mathbb{E}[-\log_q P_X]= (1/ \log q)H_2(X)$, respectively, where $\mathbb{E}[X]$ is the expected value of a random variable $X$. Similarly, $H_q(X, Y)$ and $H_q(X \mid Y)$ denote the joint and conditional entropies, respectively. 
For a probability vector $\mathbf p=(p_1,\ldots,p_m)$, define $\mathsf H_q(\mathbf p)\triangleq-\sum_{j=1}^{m}p_j\log_q p_j$, where $0\log_q 0\triangleq 0$. 
The mutual information is defined as $I_q(X;Y)=D_{\mathrm {KL} }(P_{X,Y}\parallel P_X\otimes P_Y)$, where $D_{\mathrm {KL} }$ is the Kullback–Leibler divergence, and $P_X\otimes P_Y$ is the outer product distribution which assigns probability $P_X(x)\cdot P_Y(y)$ to each $(x,y)$. 
Given $X^n=X_1, X_2 \cdots, X_n$, an independent and identically distributed (i.i.d.) sequence, each element $X_i$, $i\in [n]$ is distributed according to $P_X$.

We use $X \sim \mathrm{Unif}(\mathbb{F}_q)$ to denote a uniform random variable over $\mathbb{F}_q$, and $X \sim {\mathrm{Bern}}(\epsilon)$ to denote a Bernoulli random variable with parameter $\epsilon \in [0, 1]$, with $h(\epsilon)$ denoting the binary entropy function. 
A trit is a ternary digit with three states, i.e., $1 \ \mbox{trit}=\log_2 3 \  \mbox{bits}$. 
We denote the probability of event $A$ by $\mathbb{P}(A)$. If $A$ and $B$ are sets, their union is $A\cup B=\{x:x\in A{\text{  or  }}x\in B\}$, and $A \setminus B = \{x \in A \mid x \notin B \}$.  
We let $[n]=\{1,\dots,n\}$ for $n\in\mathbb{Z}^+$, and $[a,b]=\{a, a+1,\cdots, b\}$ for $a,b\in\mathbb{Z}^+$. $\mathrm{Perm}([n])$ denotes the set of all permutations of $[n]$.  
Given a matrix ${\bf A}=(a_{ij})_{i\in [m],\ j\in [n]}\in\mathbb{F}_q^{m\times n}$, its $i$-th row and $j$-th column are given by ${\bf A}(i,:)$ and ${\bf A}(:,j)$, respectively, and its transpose and rank are denoted by ${\bf A}^{\intercal}$ and $\rk({\bf A})$, respectively. 
We denote a column vector by ${\bf a} = (a_i)_{i\in[m]} \in \mathbb{F}^{m\times 1}_q$ (or $\mathbb{F}^{m}_q$) and a row vector by ${\bf a} = (a_j)_{j\in[n]} \in \mathbb{F}^{1\times n}_q$, respectively. 
%
$\wedge$ and $\vee$ represent binary AND and OR operations, respectively. 
$a_n=o(n)$ means that $a_n$ grows sublinearly in $n$, i.e., $\lim\limits_{n\to\infty}\frac{a_n}{n}=0$.


\section{System Model}
\label{sec:sys-mod}

This section formulates our $(K, N,\{\mathcal{S}_i\}_{i\in[K]},\{f_{i}(X_1, \dots, X_N)\}_{i\in[K]})$ NFCB problem. The model, shown in Figure~\ref{fig:CBNs}, consists of a sender node, that stores $N\in\mathbb{Z}^+$ datasets, denoted by ${\bf X}_{[N]}\triangleq (X_j)_{j\in[N]}\in\mathbb{F}_q^N$, where $q \geq 2$, and $K$ distributed users. 
The side information available to user~$i$ is $\mathcal S_i\triangleq{\bf X}_{\mathcal A_i}=\left(X_j:j\in\mathcal A_i\right)$. Its alphabet is $\mathsf S_i\triangleq\prod_{j\in\mathcal A_i}\mathcal X_j$ where $\mathcal A_i\subseteq[N]$ is the side-information index set. 
User $i\in[K]$ seeks to compute a function $f_{i}: \mathbb{F}_q^N\to\mathbb{F}_q$ that takes as its arguments the $N$ datasets. Given $t\in[n]$, $i\in[K]$, the pair $f_{i}(\mathbf X_{[N]},t)\triangleq f_{i}({X}_{1,t}, \cdots, X_{N,t})$ and $\mathcal{S}_{i,t}$ are potentially correlated. The requested functions $\{f_{i}\}_{i\in [K]}$ may be distinct.

Consider a length-$n$ source sequence $\mathbf{x}_{[N]}^n=(\mathbf{x}_{[N],1},\ldots,\mathbf{x}_{[N],n})\in\mathcal X_{[N]}^n$, where its components may be arbitrarily dependent. This sequence consists of $n$ i.i.d. realizations drawn according to $P_{X_{[N]}}$. Thus, $X_{[N],t}$ is i.i.d. across $t$. 
%
Define the side-information sequence of user $i$ as $s_i^n(\mathbf{x}_{[N]}^n)\triangleq\bigl(s_i(\mathbf{x}_{[N],1}),\ldots,s_i(\mathbf{x}_{[N],n})\bigr)$, 
and the demand sequence of user $i$ as $f_i^n(\mathbf{x}_{[N]}^n)\triangleq\bigl(f_i(\mathbf{x}_{[N],1}),\ldots,f_i(\mathbf{x}_{[N],n})\bigr)$, which is a length-$n$ i.i.d. sequence of component-wise 
realizations of the desired function $f_i$.


In an $(n,R_n)$ computation-broadcast code, the source is block-encoded with blocklength $n$. The coding scheme consists of an encoder of the source $\mathbf X_{[N]}$ defined by a function $\psi_n:\mathcal X_{[N]}^n\longrightarrow\mathcal M_n$, where $\mathcal M_n$ is a finite broadcast message alphabet, and $K$ decoders $g_{i,n}:\mathcal M_n\times\mathsf S_i^n\longrightarrow\mathcal F_i^n$, $i\in[K]$. 
Specifically, broadcast message $M_n=\psi_n({\bf X}_{[N]}^n)$ is the random message taking values in $\mathcal M_n$, which is a function of $N$ datasets and the set of demanded functions $\{f_i\}_{i\in[K]}$ and side information $\{\mathcal{S}_i\}_{i\in[K]}$ by all users.

The tuple of functions $(\psi_n,\{g_{i,n}\}_{i\in[K]})$ is called an $(n,\epsilon)$-coding scheme if the block error probability satisfies 
\begin{align} 
\label{eq:block-error-probability}
P_{\mathrm{e}}^{(n)}\triangleq\mathbb{P}\left(\bigcup_{i=1}^K\left\{\widehat f_i^n\neq f_i^n\right\}\right)<\epsilon  
\end{align}
where $\widehat f_i^n\triangleq g_{i,n}(M_n,\mathcal S_i^n)$ denotes the length-$n$ sequence estimate. 

Our goal is to minimize the communication cost from the server so that each user can recover its desired function asymptotically losslessly from length-$n$ i.i.d. dataset realizations. 
The system operates in four phases: demand, computing, communication, and reconstruction.

\emph{Demand phase.} Each user $i\in[K]$ communicates over noiseless and interference free channels its requested function $f_{i}(X_1, \dots, X_N)$ to the sender node, which we denote by $f_i$.

\emph{Computing phase.} The sender node, using length-$n$ i.i.d. sequence ${\bf X}_{[N]}^n$, the sender constructs the common broadcast message $M_n=\psi_n({\bf X}_{[N]}^n)$. We assume no constraints on the sender node's computational ability, i.e., the sender can perform non-linear encoding, and the broadcast message can be a nonlinear function of the datasets.

\emph{Communication phase.} The sender node broadcasts $M_n$ to all users. We assume the communication links between the sender node and the users to be noiseless and non-interfering. Given ${\bf X}_{[N]}^n$, the sender transmits $M_n$ at a rate
\begin{align}
\frac1nH_q(M_n)=\frac1n I_q({\bf X}_{[N]}^n;M_n).
\end{align}

At finite blocklenth $n$, the minimum blocklength-$n$ rate with error probability at most $\epsilon$ is defined as
\begin{align}
\label{eq:min_finite_n_rate_lossless}
R_{n,\epsilon}^*(K,N)\triangleq\inf\limits_{\substack{\psi_n,\ \{g_{i,n}\}_{i\in[K]}:\\P_e^{(n)}\leq \epsilon}} \frac1n H_q(M_n) .
\end{align}
Setting $\epsilon=0$ in~\eqref{eq:min_finite_n_rate_lossless}, the minimum blocklength-$n$ zero-error rate is denoted by $R_{n,0}^*(K,N)$. 

\emph{Reconstruction phase}. Each user $i\in[K]$ is allowed to perform any computation (possibly non-linear) using the received broadcast message $M_n$ and its length-$n$ i.i.d. sequence of realizations of the side information $\mathcal{S}_i$, defined as $\mathcal{S}^n_i=\{\mathcal{S}_{i,t}\}_{t\in[n]}$, employing the decoder $g_{i,n}$, user $i$ forms $\widehat f_i^n$.

A rate $R$ is said to be achievable if there exists a sequence of blocklength-$n$ encoders $\psi_n$ and decoders $g_{i,n}$, $i\in[K]$, such that the resulting sequence of codes achieves a block error probability $\lim_{n\to\infty} P_{\mathrm{e}}^{(n)}=0$. 
Equivalently, for the entropy-based formulation, we require the users' probability of error must vanish as the blocklength $n$ increases, while the communication rate approaches a value no larger than $R$: 
\begin{align}
\lim_{n\to\infty}\frac{1}{n}H_q(f_i^n\mid M_n,\mathcal{S}_i^n)=0,\quad i\in[K] , \qquad \lim_{\epsilon\downarrow0}\limsup_{n\to\infty}R_{n,\epsilon}(K,N)\leq R  .
\end{align}
The optimal computation-broadcast rate is the lowest achievable rate
\begin{align}
\label{eq:comm_cost_optimal}
R^*(K,N)\triangleq\inf\left\{R:\ R\text{ is achievable}\right\} 
\end{align}
where the infimum is taken over all sequences of encoding and decoding policies that jointly determine the computing, communication, and reconstruction phases and may depend on 
$(\{f_i\}_{i\in[K]},\{\mathcal{S}_i\}_{i\in[K]})$.

In the case of asymptotic zero-error computation, we require each user $i\in[K]$ to reconstruct $f_i^n$ with zero error, using $M_n$ and $\mathcal S_i^n$, while the communication rate approaches a value no larger than $R_0$: 
\begin{align}
f_i^n = g_{i,n}(M_n,\mathcal S_i^n) \, ,\, \quad i \in [K],\qquad \limsup_{n\to\infty}R_{n,0}(K,N)\leq R_0 .
\end{align}
The optimal computation-broadcast rate is the lowest achievable rate
\begin{align}
\label{eq:def_communication_cost_exact_computing}
R_0^*(K,N)\triangleq\inf\left\{R_0:\ R_0\text{ is achievable}\right\} .
\end{align}

\begin{figure}[t!]
\centering
\includegraphics[width=0.45\textwidth]{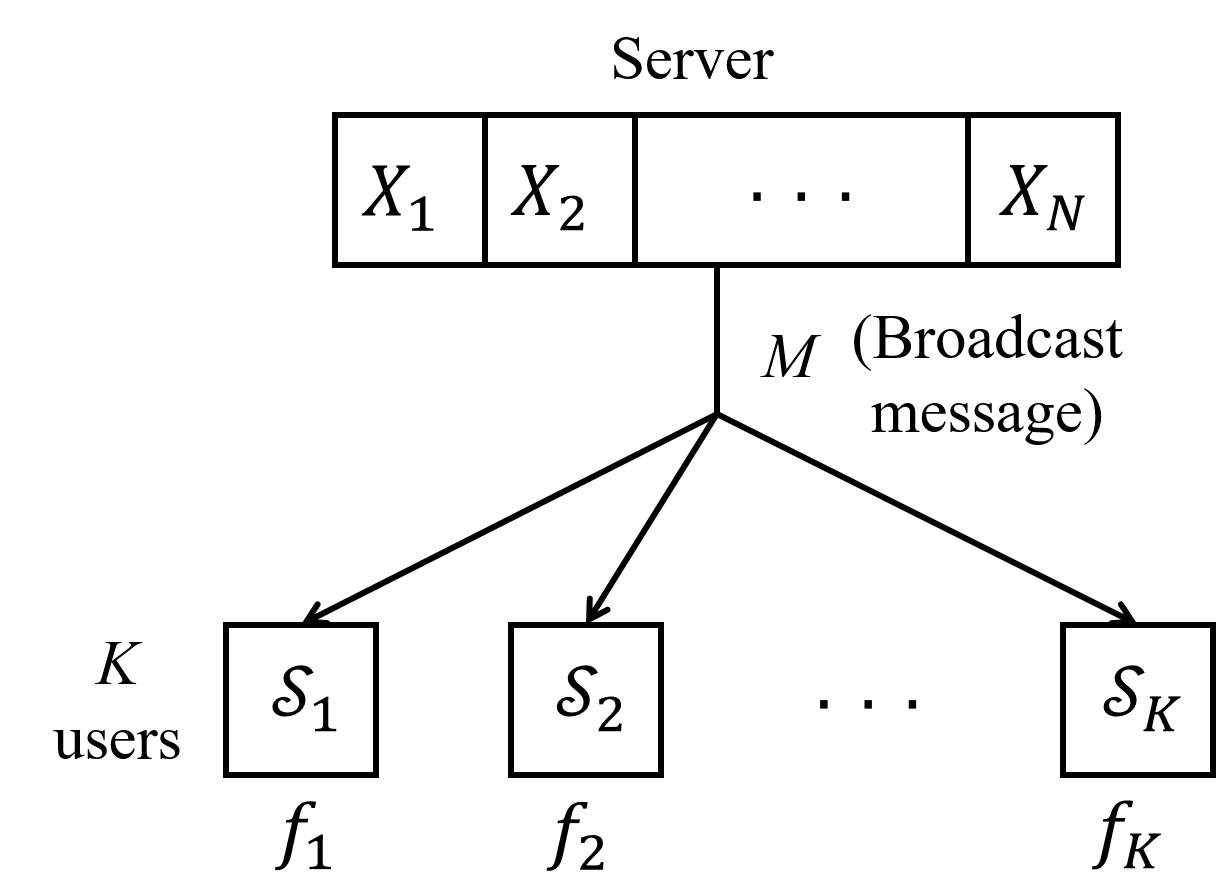}
\caption{\small A $(K,N,\{\mathcal{S}_i\}_{i\in[K]},\{f_i\}_{i\in[K]})$ NFCB model with $N$ datasets ${\bf X}_{[N]}$, $K$ users, with side information $\mathcal{S}_i$ and demand $f_i$, for each $i\in [K]$.}
\label{fig:CBNs}
\end{figure}

\section{Achievability}
\label{sec:achievability}

This section presents novel achievability schemes for the $(K, N,\{\mathcal{S}_i\}_{i\in[K]},\{f_i\}_{i\in[K]})$ NFCB problem, using characteristic graph-based compression. Specifically, we introduce the broadcast graph notion $G_{\mathrm B}$, drawing on the characteristic graph framework (see Definition~\ref{def:broadcast_graph}). The characteristic graph framework, introduced in~\cite{korner1971coding}, can capture structural dependencies among datasets, user-requested functions, and side information of each user, and does not impose any structural assumptions on them. 
The same framework has been used to characterize the communication rate for a point-to-point model, where a source holds a random variable $X_1$ and a receiver with side information $X_2$ aims to reconstruct an arbitrary function $f(X_1, X_2)$ in an asymptotically lossless manner~\cite{OrlRoc2001}. 
Theorem~\ref{theo-general-achievable-rate-via-broadcast} details an achievability scheme under the asymptotically lossless reconstruction criterion. 
Corollary~\ref{cor:side-information-assisted-graph-bound} provides a side-information-assisted independent-set coding scheme.  
Proposition~\ref{prop_achievable-rate-via-broadcast_compatible} characterizes the optimal achievable rate provided that the demanded functions are compatible.

We next define characteristic graphs, generalizing the definitions of source and side information sets (cf.~\cite{korner1971coding},~\cite{OrlRoc2001}).

\begin{defi}[{\bf Characteristic graph}~\cite{OrlRoc2001}]
\label{def-char-graph}
Let ${\bf X}_{[N]}=(X_1,\ldots,X_N)$ be a finite-alphabet source with distribution $P_{{\bf X}_{[N]}}$. Consider a receiver having side information $\mathcal S={\bf X}_{\mathcal A}$, $\mathcal A\subseteq[N]$, and requesting the function $f({\bf X}_{[N]})$.

The source characteristic graph associated with $({\bf X}_{[N]},\mathcal S,f)$ is $G_f=(\mathcal V, \mathcal E)$, where
\begin{align}
\label{eq:vertex_set}
\mathcal V \triangleq {\mathrm{supp}}\left(P_{{\bf X}_{[N]}}\right)\subseteq \prod_{j=1}^N\mathcal X_j \ , 
\end{align} 
and two distinct vertices ${\bf x}_{[N]},\widetilde{\bf x}_{[N]}\in\mathcal V$ are adjacent if and only if
\begin{align}
\label{eq:characteristic-graph-adjacency}
\left\{{\bf x}_{[N]},\widetilde{\bf x}_{[N]}\right\}\in\mathcal E\,\,\Longleftrightarrow\,\,{\bf x}_{\mathcal A}=\widetilde{\bf x}_{\mathcal A}\ \text{and}\ f({\bf x}_{[N]})\neq f(\widetilde{\bf x}_{[N]}) .
\end{align}

\end{defi}

A proper coloring of $G$ is such that adjacent vertices in $G$ (cf.~\eqref{eq:characteristic-graph-adjacency}) have distinct colors. Let $\mathscr C(G)$ denote the collection of proper colorings of a graph $G$ and $c(\mathbf X_{[N]})$ be a proper coloring. The one-shot chromatic entropy is 
\begin{align}
\label{eq:chromatic_entropy_def}
H_\chi(G,{\bf X}_{[N]})\triangleq\min_{c\in\mathscr C(G)}H_q\left(c({\bf X}_{[N]})\right) .
\end{align}

\begin{defi}[{\bf Independent set~\cite{karp2009reducibility}}]
\label{def-In-MIS independent set}
An independent set $\text{IS}$ in a graph $G(\mathcal{V}, \mathcal{E})$ is a subset of vertices with no edges between them. Such non-adjacent vertices can be encoded without ambiguity, i.e., assigned the same codeword. 
A maximal independent set (MIS) is an IS that is not a subset of any other independent set
~\cite{karp2009reducibility}. The set of all MISs of $G$ is denoted by $\Gamma(G)$, and an MIS of $G$ by $W\in \Gamma(G)$, respectively. 
\end{defi}

K\"orner introduced the notion of graph entropy in~\cite{korner1971coding}, which is defined next.
\begin{defi}[K\"orner's graph entropy~\cite{korner1971coding}]
\label{def:Korner_graph_entropy}
Given a graph $G = (\mathcal{V} ,\mathcal{E})$ with a distribution on its vertices $\mathcal{V}$ and the set of all MISs $\Gamma(G)$, graph entropy is
\begin{align}
\label{eq:graph_entropy}
H_\kappa(G,{\bf X}_{[N]})
\triangleq\min_{\substack{P_{W\mid{\bf X}_{[N]}}:\\{\bf X}_{[N]}\in W,\;W\in\Gamma(G)}}I_q({\bf X}_{[N]};W)
=\lim_{n\to\infty}\frac1n H_\chi\left(G^{\dot \vee n},{\bf X}_{[N]}^n\right) 
\end{align}
where ${\bf X}_{[N]}\in W \in \Gamma(G)$ means that the minimization is over all distributions $P_{{\bf X}_{[N]},W}({\bf x}_{[N]},w)$ such that $P_{{\bf X}_{[N]},W}({\bf x}_{[N]},w) > 0$ implies ${\bf x}_{[N]} \in w$, where $w$ is a MIS of $G$.
\end{defi}

\begin{defi}[$n$-fold OR power]
\label{def:or-power}
The $n$-fold OR power of a characteristic graph $G=(\mathcal V(G),\mathcal E(G))$ is denoted by
\begin{align}
G^{\dot \vee n}=\left(\mathcal V^n,\mathcal E^{\dot \vee n}\right) ,
\end{align}
which has vertex set $\mathcal V^n={\mathrm{supp}}\left(P_{\mathbf X_{[N]}^n}\right)=\left({\mathrm{supp}}\left(P_{\mathbf X_{[N]}}\right)\right)^n$, and two distinct sequences ${\bf x}_{[N]}^n,\widetilde{\bf x}_{[N]}^n\in\mathcal V^n$ are adjacent if they are adjacent in $G_f$ in at least one coordinate:
\begin{align}
\label{eq:or-power-adjacency}
\left\{{\bf x}_{[N]}^n,\widetilde{\bf x}_{[N]}^n\right\}\in\mathcal E^{\dot \vee n} \,\,\Longleftrightarrow\,\, \exists t\in[n]:\left\{{\bf x}_{[N],t},\widetilde{\bf x}_{[N],t}\right\}\in\mathcal E \,\,\Longleftrightarrow\,\,
\exists t\in[n]:
\begin{cases}
{\bf x}_{\mathcal A,t}=\widetilde{\bf x}_{\mathcal A,t},\\
f({\bf x}_{[N],t})\neq f(\widetilde{\bf x}_{[N],t}).
\end{cases} 
\end{align}
\end{defi}


If $c$ is a coloring of G then $W \triangleq c^{-1}(c(X))$, the color class of ${\bf X}_{[N]}$, ranges disjointly over $\Gamma(G)$ and always contains ${\bf X}_{[N]}$. 
Conversely, When $W$ ranges disjointly over $\Gamma(G)$ and always contains ${\bf X}_{[N]}$, it represents the color class of ${\bf X}_{[N]}$ in a coloring of $G$. 
$H_q(W\mid {\bf X}_{[N]})=0$. In all these cases, $H_q(W\mid {\bf X}_{[N]})=0$. We write $W\overset{\circ}{\in}\Gamma(G)$, if each $W$ is in $\Gamma(G)$ and distinct values are disjoint. 
Thus, the one-shot chromatic entropy satisfies 
\begin{align}
\label{eq:chromatic_entropy_vs_graph_entropy}
H_\chi(G,{\bf X}_{[N]})=\min\limits_{\substack{P_{W\mid{\bf X}_{[N]}}:\\{\bf X}_{[N]}\in W,\;W\overset{\circ}{\in}\Gamma(G)}}\ H_q(W) \geq \min\limits_{\substack{P_{W\mid{\bf X}_{[N]}}:\\{\bf X}_{[N]}\in W,\;W\in\Gamma(G)}}\ H_q(W)\geq H_\kappa(G,{\bf X}_{[N]}) .
\end{align}
Thus, graph entropy $H_\kappa$ generalizes chromatic entropy $H_\chi$ by assigning a random color to every vertex (cf.~\cite{CarFloJor2008}).

Characteristic graphs identify which source values can share the same codeword without introducing ambiguity to the user, enabling distributed computation in various settings, as explored in prior literature (see e.g.,~\cite{AlonOrlit1996, OrlRoc2001, feizi2014network, salehi2023achievable, malak2024multis, deylam2025graph, korner1971coding,malak2023distributed}). For a point-to-point scheme with side information $\mathcal{S}$, exploiting~\cite{OrlRoc2001}, the minimum rate required to compute $f(\mathbf X_{[N]})$ with vanishing error 
is characterized by the conditional graph entropy of ${\bf X}_{[N]}$ given $\mathcal{S}$:
\begin{align}
\label{eq:Rate_ach}
H_\kappa(G,{\bf X}_{[N]}\mid \mathcal{S})\triangleq \min_{\substack{P_{W\mid{\bf X}_{[N]}}:\\W-{\bf X}_{[N]}-\mathcal S,\\{\bf X}_{[N]}\in W\in\Gamma(G)\ \text{a.s.}}}\ I_q({\bf X}_{[N]};W\mid \mathcal{S}) 
\end{align}
where $W-{\bf X}_{[N]}-\mathcal{S}$ indicates a Markov chain,  $I_q({\bf X}_{[N]}; W \mid \mathcal{S})=H_{q}(W\mid \mathcal{S})-H_{q}(W\mid {\bf X}_{[N]})$, and  ${\bf X}_{[N]}\in W\in \Gamma(G)$ means that the minimization is over all $P_{\mathbf X_{[N]}}(\mathbf x_{[N]},w)>0$ such that $W$ is an IS of $G$. 

To address the NFCB problem, below we generalize the characteristic graph framework to a broadcast graph framework, which captures the statistics of ${\bf X}_{[N]}$, the $K$ (possibly non-linear) demanded functions $\{f_i\}_{i\in[K]}$, and the distinguishability requirements induced by $\{f_i,\mathcal{S}_i\}_{i\in[K]}$. For each user $i\in[K]$, we denote by $G_{f_i}=(\mathcal{V}, \mathcal{E}_i)$ the characteristic graph constructed from $\mathbf X_{[N]}$ under $P_{\mathbf X}$, with side information $\mathcal{S}_i$ and demand $f_i$.


\begin{defi}[{\bf Broadcast graph}~\cite{korner1986fredman}]
\label{def:broadcast_graph}
For each $i\in[K]$, let $G_{f_i} =(\mathcal{V}, \mathcal{E}_i)$ be the characteristic graph associated with $(f_i,\mathcal S_i)$ according to~\eqref{eq:characteristic-graph-adjacency}, where $\mathcal{V}={\mathrm{supp}}\left(P_{{\bf X}_{[N]}}\right)$. The single-letter broadcast graph is
\begin{align}
\label{eq:single-letter_broadcast_graph}
G_{\mathrm B}\triangleq \bigcup_{i=1}^{K}G_{f_i}=\left(\mathcal V,\bigcup_{i=1}^{K}\mathcal E_i\right) . 
\end{align}
\end{defi}

We next define the blocklength-$n$ characteristic graph of user $i$.
\begin{defi}[Length-$n$ graph]
\label{def:blocklength_n_BC}
For each $i\in[K]$, let $G_{f_i}^{(n)}=\bigl(\mathcal X_{[N]}^n,\mathcal E_i^{(n)}\bigr)$ be the length-$n$ graph, where
\begin{align}
\label{eq:block_same_side_information_block_distinct_demands}
\mathcal E_i^{(n)}\triangleq\left\{\left\{{\bf x}_{[N]}^n,\widetilde{\bf x}_{[N]}^n\right\}:\;
{\bf x}_{\mathcal A_i}^n=\widetilde{\bf x}_{\mathcal A_i}^n,\ 
f_i^n({\bf x}_{[N]}^n)\neq f_i^n(\widetilde{\bf x}_{[N]}^n)
\right\} .
\end{align}
\end{defi}

Using Definition~\ref{def:blocklength_n_BC}, we next define the operational block broadcast graph.

\begin{defi}[Length-$n$ operational block broadcast graph]
\label{def:operational-block-graph}
For each blocklength $n$, let 
\begin{align}
\label{eq:block_broadcast_graph}
G_{\mathrm B}^{(n)}\triangleq\bigcup_{i=1}^K G_{f_i}^{(n)}=\left(\mathcal X_{[N]}^n,\mathcal E_{\mathrm B}^{(n)}\right) ,\qquad \text{where}\quad \mathcal E_{\mathrm B}^{(n)}\triangleq\bigcup_{i=1}^K\mathcal E_i^{(n)} ,
\end{align}
denote the operational block broadcast graph, where $\mathcal E_i^{(n)}$ is given in~\eqref{eq:block_same_side_information_block_distinct_demands}. 
Thus,
\begin{align}
\label{eq:block-broadcast-adjacency}
\bigl\{{\bf x}_{[N]}^n,\widetilde{\bf x}_{[N]}^n\bigr\}\in\mathcal E_{\mathrm B}^{(n)}
\,\,\Longleftrightarrow\,\,
\exists i\in[K]:\bigl\{{\bf x}_{[N]}^n,\widetilde{\bf x}_{[N]}^n\bigr\}\in
\mathcal E_i^{(n)}
\,\,\Longleftrightarrow\,\, \exists i\in[K]:
\begin{cases}
{\bf x}_{\mathcal A_i}^n=\widetilde{\bf x}_{\mathcal A_i}^n,\\[1mm]
f_i^n({\bf x}_{[N]}^n)\neq f_i^n(\widetilde{\bf x}_{[N]}^n) .
\end{cases}
\end{align}
Thus, two source sequences are adjacent in the length-$n$ broadcast graph if there exists one user whose side-information sequences are identical but whose demanded function sequences are different.
\end{defi}

In general, for a given $q\geq 2$, $|\mathcal{V}|=q^N$ for the broadcast graph $\GU=(\mathcal{V},\mathcal E)$, the $n$-fold OR product $\GUn=G(\mathcal{V}^n,\mathcal E^{\dot \vee n})$ where  $|\mathcal{V}^n|=|\mathcal{V}|^n = q^{Nn}$. The exponential growth in $n$ renders the characteristic graph-based approach impractical, as finding a closed-form expression for~\eqref{eq:graph_entropy} (see also Theorem~\ref{theo-general-achievable-rate-via-broadcast}) is infeasible. 
Independent sets induced by the single-letter block graph are a subset of the independent sets of the exact block broadcast graph $G_{\mathrm B}^{(n)}$, which may contain fewer edges\footnote{Recall that the $n$-fold OR power $\GUn$ (Definition~\ref{def:or-power}) permits the user witnessing confusability to vary across coordinates, whereas $G_{\mathrm B}^{(n)}$ (Definition~\ref{def:operational-block-graph}) requires one fixed user to have identical side information over the entire block.} versus the OR-power construction $\GUn$ of the single-letter graph $G_{\mathrm B}=G_{\mathrm B}^{(1)}$, i.e.,
\begin{align}
\label{eq:operational-subgraph-or-power}
G_{\mathrm B}^{(n)}\subseteq \GUn ,
\end{align}
and the inclusion can be strict, as Example~\ref{ex-linear-comp-jafar} demonstrates. 
Employing~\eqref{eq:operational-subgraph-or-power}, we will later show that the resulting expression that uses $\GUn$ in Theorem~\ref{theo-general-achievable-rate-via-broadcast} is an achievable bound, and it need not equal the optimal broadcast rate.

The operational block broadcast graph definition, i.e., Definition~\ref{def:operational-block-graph} (cf.~\eqref{eq:block_broadcast_graph}), requires a single user to witness confusability over the entire block. 
The block graph is generally neither an OR nor an AND power. 
A single-letter characteristic graph records only whether two source realizations must be distinguished. It does not separately record all the relations needed to determine whether two length-$n$ source sequences are confusable for the same user.

For each user $i\in[K]$, define the side-information compatibility and the demand-disagreement relations
\begin{align}
\label{eq:side_information_compatibility_demand_disagreement}
A_i(\mathbf{x}_{[N]},\widetilde{\mathbf{x}}_{[N]})\triangleq\mathbf{1}\left\{\mathbf{x}_{\mathcal A_i}=\widetilde{\mathbf{x}}_{\mathcal A_i}\right\} ,\qquad
D_i(\mathbf{x}_{[N]},\widetilde{\mathbf{x}}_{[N]})\triangleq\mathbf{1}\left\{f_i(\mathbf{x}_{[N]})\neq f_i(\widetilde{\mathbf{x}}_{[N]})\right\} ,
\end{align}
respectively. Two distinct realizations $\mathbf{x}_{[N]},\widetilde{\mathbf{x}}_{[N]}\in\mathcal X_{[N]}$ are adjacent in the single-letter graph $G_{f_i}=(\mathcal X_{[N]},\mathcal E_i)$ if and only if $\{\mathbf{x}_{[N]},\widetilde{\mathbf{x}}_{[N]}\}\in \mathcal E_i \,\Longleftrightarrow\, A_i(\mathbf{x}_{[N]},\widetilde{\mathbf{x}}_{[N]}) D_i(\mathbf{x}_{[N]},\widetilde{\mathbf{x}}_{[N]})=1$. Thus, $G_{f_i}$ retains the conjunction $A_i\wedge D_i$, but does not separately retain the 
compatibility relation $A_i$, which becomes relevant at blocklength $n$.

\medskip
\noindent\textit{\textbf{Operational block-confusability condition.}}
Consider two source sequences
\begin{align}
\mathbf{x}_{[N]}^n=(\mathbf{x}_{[N],1},\ldots,\mathbf{x}_{[N],n})\in\mathcal X_{[N]}^n,\qquad 
\widetilde{\mathbf{x}}_{[N]}^n=(\widetilde{\mathbf{x}}_{[N],1},\ldots,\widetilde{\mathbf{x}}_{[N],n})\in\mathcal X_{[N]}^n .
\end{align}
They are confusable for user $i$ if that user observes the same side-information sequence under both source realizations, while the corresponding demand sequences differ in at least one coordinate. Alternatively, 
\begin{align}
\label{eq:block_side_information_compatibility_block_demand_disagreement}
\mathbf{x}_{{\mathcal A_i},t}=\widetilde{\mathbf{x}}_{{\mathcal A_i},t},\quad \forall t\in[n] ,\qquad\text{and}\qquad f_i(\mathbf{x}_{[N],t})\neq f_i(\widetilde{\mathbf{x}}_{[N],t})\quad\text{for some }t\in[n] .
\end{align}
In terms of~\eqref{eq:side_information_compatibility_demand_disagreement},
\begin{align}
\label{eq:block_user_confusability_relations}
\{\mathbf{x}_{[N]}^n,\widetilde{\mathbf{x}}_{[N]}^n\}\in \mathcal E_i^{(n)}\,\,\Longleftrightarrow\,\,\left[\bigwedge_{t=1}^n A_i(\mathbf{x}_{[N],t},\widetilde{\mathbf{x}}_{[N],t})\right]\wedge \left[\bigvee_{t=1}^n D_i(\mathbf{x}_{[N],t},\widetilde{\mathbf{x}}_{[N],t})\right]=1 .
\end{align}
The single-letter graph $G_{f_i}$ records only whether $\mathbf{x}_{\mathcal A_i}=\widetilde{\mathbf{x}}_{\mathcal A_i}$ and $f_i(\mathbf{x}_{[N]})\neq f_i(\widetilde{\mathbf{x}}_{[N]})$. However, it does not separately record the non-edge relation $\mathbf{x}_{\mathcal A_i}=\widetilde{\mathbf{x}}_{\mathcal A_i}$ and $f_i(\mathbf{x}_{[N]})=f_i(\widetilde{\mathbf{x}}_{[N]})$, which is also needed to construct the correct block (the blocklength-$n$) confusability graph $G_{\mathrm B}^{(n)}$ obtained by taking the union over the users:
\begin{align}
\label{eq:block_broadcast_confusability_relations}
\{\mathbf{x}_{[N]}^n,\widetilde{\mathbf{x}}_{[N]}^n\}\in\mathcal E_{\mathrm B}^{(n)}
\,\,\Longleftrightarrow\,\,\bigvee_{i=1}^K\Bigg\{\left[\bigwedge_{t=1}^n A_i(\mathbf{x}_{[N],t},\widetilde{\mathbf{x}}_{[N],t})\right]\wedge\left[\bigvee_{t=1}^n D_i(\mathbf{x}_{[N],t},\widetilde{\mathbf{x}}_{[N],t})\right]\Bigg\}=1.
\end{align}
Importantly, the same user $i$ must satisfy both block conditions in~\eqref{eq:block_broadcast_confusability_relations}. That user must observe identical side information over the entire block and require different demand sequences.

\medskip
\noindent\textit{\textbf{Comparison with the OR power.}}
The $n$-fold OR power of the single-letter broadcast graph $G_{\mathrm B}$ declares two sequences adjacent if
\begin{align}
\label{eq:or_power_relation}
\{\mathbf{x}_{[N]}^n,\widetilde{\mathbf{x}}_{[N]}^n\}\in \mathcal E^{\dot \vee n}\,\,\Longleftrightarrow\,\,\bigvee_{t=1}^n \bigvee_{i=1}^K \left[A_i(\mathbf{x}_{[N],t},\widetilde{\mathbf{x}}_{[N],t})\wedge D_i(\mathbf{x}_{[N],t},\widetilde{\mathbf{x}}_{[N],t})\right]=1.
\end{align}
Condition~\eqref{eq:or_power_relation} requires only one coordinate at which some user is confused. It does not require the user's side-information realizations to agree at all the remaining coordinates.

For example, let $n=2$ and suppose that
\begin{align}
A_1(\mathbf{x}_{[N],1},\widetilde{\mathbf{x}}_{[N],1})=1,
\qquad
D_1(\mathbf{x}_{[N],1},\widetilde{\mathbf{x}}_{[N],1})=1,
\qquad
A_1(\mathbf{x}_{[N],2},\widetilde{\mathbf{x}}_{[N],2})=0.
\end{align}
The OR power declares $\mathbf{x}_{[N]}^2$ and $\widetilde{\mathbf{x}}_{[N]}^2$ adjacent because their first-coordinate symbols are adjacent in $G_{f_1}$. However, user~$1$ can distinguish the two source sequences from its side information at the second coordinate. Hence, the two sequences are not necessarily operationally confusable for user~$1$.

Moreover, the user witnessing a coordinatewise edge may vary with the coordinate. For instance, user~$1$ may induce the first-coordinate edge, whereas user~$2$ may induce the second-coordinate edge. In that case, the OR power detects coordinatewise edges, even though no single user necessarily observes identical side-information sequences over the complete block. Therefore, $G_{\mathrm B}^{(n)}\neq \GUn$ in general.

\medskip
\noindent\textit{\textbf{Comparison with the AND power.}}
The $n$-fold AND power requires that every coordinate where the two source symbols differ corresponds to an edge in the single-letter graph, which can be too restrictive for function computation.

For example, consider one user and $n=2$. Suppose that at the first coordinate, $\mathbf{x}_{[N],1}\neq\widetilde{\mathbf{x}}_{[N],1}$, $A_i(\mathbf{x}_{[N],1},\widetilde{\mathbf{x}}_{[N],1})=1$, and $D_i(\mathbf{x}_{[N],1},\widetilde{\mathbf{x}}_{[N],1})=1$, so that $\{\mathbf{x}_{[N],1},\widetilde{\mathbf{x}}_{[N],1}\}\in \mathcal E_i$. 
At the second coordinate, suppose that $\mathbf{x}_{[N],2}\neq\widetilde{\mathbf{x}}_{[N],2}$, $A_i(\mathbf{x}_{[N],2},\widetilde{\mathbf{x}}_{[N],2})=1$, and $D_i(\mathbf{x}_{[N],2},\widetilde{\mathbf{x}}_{[N],2})=0$. The second-coordinate symbols are distinct but nonadjacent because they produce the same demanded value for user~$i$. Nevertheless, the complete source sequences are operationally confusable as user~$i$ observes the same side-information sequence under both realizations, while its demand sequences differ at the first coordinate. Hence, $\{\mathbf{x}_{[N]}^2,\widetilde{\mathbf{x}}_{[N]}^2\}
\in \mathcal E_i^{(2)}$, although the AND power does not connect them because the second-coordinate pair is nonadjacent. Therefore, $G_{\mathrm B}^{(n)}\neq G_{\mathrm B}^{\dot \wedge n}$ in general.

The single-letter graph $G_{f_i}$ records only whether $\mathbf{x}_{\mathcal A_i}=\widetilde{\mathbf{x}}_{\mathcal A_i}$ and $f_i(\mathbf{x}_{[N]})\neq f_i(\widetilde{\mathbf{x}}_{[N]})$. However, it does not separately record the non-edge relation $\mathbf{x}_{\mathcal A_i}=\widetilde{\mathbf{x}}_{\mathcal A_i}$ and $f_i(\mathbf{x}_{[N]})=f_i(\widetilde{\mathbf{x}}_{[N]})$, which is also needed to construct the correct block confusability graph $G_{\mathrm B}^{(n)}$~\eqref{eq:block_broadcast_confusability_relations}. 
Thus, $G_{f_i}$ alone generally does not contain enough information to recover the block graph through an ordinary OR or AND product. 

Taking the union of the graphs $G_{f_i}$, $i\in[K]$ before forming a graph power also removes the identity of the user responsible for each edge. Thus, neither the OR nor the AND power generally commutes with the multi-user union:
\begin{align}
\label{nonequality_BroadcastBlockGraph_OR_AND}
\left(\bigcup_{i=1}^K G_{f_i}\right)^{\dot{\vee}}\neq \bigcup_{i=1}^K G_{f_i}^{(n)} ,\quad \left(\bigcup_{i=1}^K G_{f_i}\right)^{\dot{\wedge}}\neq \bigcup_{i=1}^K G_{f_i}^{(n)} .
\end{align}

Every proper coloring $c_n$ of $G_{\mathrm B}^{(n)}$ gives a zero-error message $M_n=c_n(\mathbf X_{[N]}^n)$. Conversely, the preimage of each message produced by a deterministic zero-error encoder is an independent set of $G_{\mathrm B}^{(n)}$. Hence, every deterministic zero-error encoder induces a proper coloring. 
The zero-error operational graph entropy is then defined next.
\begin{defi}[Zero-error operational broadcast graph entropy]
\label{def:zero-error-operational-entropy}
The chromatic entropy of $G_{\mathrm B}^{(n)}$ is
\begin{align}
\label{eq:operational-chromatic-entropy}
H_\chi\left(G_{\mathrm B}^{(n)},{\bf X}_{[N]}^n\right)\triangleq\min\limits_{c_n\in\mathscr C(G_{\mathrm B}^{(n)})}H_q\left(c_n({\bf X}_{[N]}^n)\right) .
\end{align}
Defining $R_{n,0}^*(K,N)=\frac{1}{n}H_\chi\left(G_{\mathrm B}^{(n)},{\bf X}_{[N]}^n\right)$, the asymptotic zero-error operational broadcast graph entropy is
\begin{align}
\label{eq:asymptotic-zero-error-broadcast-entropy}
R_0^*(K,N)
\triangleq
\limsup_{n\to\infty}R_{n,0}^*(K,N)
\end{align}
where the existence of the limit is ensured by subadditivity\footnote{Concatenating proper colorings of $G_{\mathrm B}^{(m)}$ and $G_{\mathrm B}^{(n)}$ produces a proper coloring of $G_{\mathrm B}^{(m+n)}$. Therefore, $H_\chi\left(G_{\mathrm B}^{(m+n)},{\bf X}_{[N]}^{m+n}\right)\leq H_\chi\left(G_{\mathrm B}^{(m)},{\bf X}_{[N]}^m\right)+H_\chi\left(G_{\mathrm B}^{(n)},{\bf X}_{[N]}^n\right)$. Thus, Fekete's lemma gives $R_0^*(K,N)
=\lim\limits_{n\to\infty}\frac{1}{n}H_\chi\left(G_{\mathrm B}^{(n)},{\bf X}_{[N]}^n\right)$.} of the sequence $\{H_\chi\left(G_{\mathrm B}^{(n)},{\bf X}_{[N]}^n\right)\}_{n\geq1}$. 
\end{defi}

From $G_{\mathrm B}^{(n)} \subseteq \GUn$ and K\"orner's graph-entropy theorem, we obtain 
$H_\chi\left( G_{\mathrm B}^{(n)},{\bf X}_{[N]}^n \right) \leq H_\chi\left( \GUn,{\bf X}_{[N]}^n \right)$. 
After normalization and taking the limit, 
\begin{align}
\label{eq:operational-entropy-upper-bound}
R_0^*(K,N)
\leq H_\kappa\left(G_{\mathrm B},\mathbf X_{[N]}\right) .
\end{align}

The asymptotically lossless operational broadcast graph entropy is defined as follows.
\begin{defi}[Asymptotically lossless operational graph entropy]
\label{def:asymptotically-lossless-operational-entropy}
For $\epsilon\in(0,1)$, let
\begin{align}
H_{\chi,\epsilon}\Bigl(G_{\mathrm B}^{(n)},\mathbf X_{[N]}^n\Bigr)\triangleq\min_{\substack{\mathcal T_n\subseteq\mathcal X_{[N]}^n:\\
\mathbb{P}({\bf X}_{[N]}^n\in\mathcal T_n)\geq1-\epsilon}}\;
\min_{c_n\in\mathscr C(G_{\mathrm B}^{(n)}[\mathcal T_n])} H_q\left(\widetilde c_n({\bf X}_{[N]}^n)\right) 
\label{eq:epsilon-chromatic-entropy}
\end{align}
denote the minimum chromatic entropy after excluding a set of source sequences having total probability at most $\epsilon$, 
where $G_{\mathrm B}^{(n)}[\mathcal T_n]$ is the subgraph induced by $\mathcal T_n\subseteq\mathcal X_{[N]}^n$, and $\widetilde c_n$ is the extension $\widetilde c_n({\bf x}_{[N]}^n)=c_n({\bf x}_{[N]}^n) \mathbf{1}\left\{{\bf x}_{[N]}^n\in\mathcal T_n\right\}+\mathsf e \mathbf{1}\left\{{\bf x}_{[N]}^n\notin\mathcal T_n\right\}$, with $\mathsf e$ denoting an error symbol, and $H_{\chi,\epsilon}$.

Defining $R_{n,\epsilon}^*(K,N)=\frac{1}{n}H_{\chi,\epsilon}\left(G_{\mathrm B}^{(n)},\mathbf X_{[N]}^n\right)$, the asymptotically lossless operational broadcast graph entropy is
\begin{align}
\label{eq:asymptotically-lossless-operational-entropy}
R^*(K,N)
\triangleq\lim_{\epsilon\downarrow0}\limsup_{n\to\infty} R_{n,\epsilon}^*(K,N) .
\end{align}
\end{defi}


For every $\epsilon>0$, the choice $\mathcal T_n=\mathcal X_{[N]}^n$ is feasible in the definition of $H_{\chi,\epsilon}$. Therefore,
\begin{align}
H_{\chi,\epsilon}\left(G_{\mathrm B}^{(n)},{\bf X}_{[N]}^n\right)\leq H_\chi\left(G_{\mathrm B}^{(n)},\mathbf X_{[N]}^n\right) = \lim_{\epsilon\downarrow0} H_{\chi,\epsilon}\left(G_{\mathrm B}^{(n)},{\bf X}_{[N]}^n\right)
\end{align}
where the equality is because for every fixed $n$, the source alphabet is finite. 
Therefore,
\begin{align}
R_0^*(K,N)
=\lim_{n\to\infty}\lim_{\epsilon\downarrow0}\frac{1}{n}H_{\chi,\epsilon}\left(G_{\mathrm B}^{(n)},\mathbf X_{[N]}^n\right) .
\end{align}

Noting that every zero-error code is also asymptotically lossless, and employing~\eqref{eq:operational-entropy-upper-bound}
\begin{align}
\label{eq:broadcast-entropy-hierarchy}
R^*(K,N)
\leq R_0^*(K,N)
\leq H_\kappa\left(G_{\mathrm B},{\bf X}_{[N]}\right) 
\end{align}
where $R_0^*(K,N)
$ and $R^*(K,N)
$ are given  in~\eqref{eq:asymptotic-zero-error-broadcast-entropy} and in~\eqref{eq:asymptotically-lossless-operational-entropy}, respectively.


Next, exploiting Definition~\ref{def:broadcast_graph}, we present our achievability scheme.


\begin{theo}
\label{theo-general-achievable-rate-via-broadcast}
Consider a $(K, N,\{\mathcal{S}_i\}_{i\in[K]},\{f_i\}_{i\in[K]})$ NFCB model, where $\mathcal S_i={\bf X}_{\mathcal A_i}$, $\mathcal A_i\subseteq[N]$. 
Let $G_{\mathrm B}\triangleq G_{\cup_{i=1}^{K}f_i}$ denote the broadcast characteristic graph, and let $\Gamma(G_{\mathrm B})$ denote the collection of independent sets of $G_{\mathrm B}$. 
Define 
\begin{align}
\label{eq-theo-general-achievable-rate-via-broadcast}
 H_\kappa(G_{\mathrm B},{\bf X}_{[N]}) \triangleq\min\limits_{\substack{P_{W\mid{\bf X}_{[N]}}:\\{\bf X}_{[N]}\in W,\;W\in\Gamma(G_{\mathrm B})}} I_q({\bf X}_{[N]};W) .
\end{align}
Then $H_\kappa(G_{\mathrm B},{\bf X}_{[N]})$ is achievable under the asymptotically lossless reconstruction criterion. Thus,
\begin{align}
R^*(K,N)\leq H_\kappa(G_{\mathrm B},{\bf X}_{[N]}) 
\end{align}
where $H_\kappa(G_{\mathrm B},{\bf X}_{[N]})$ is K\"orner's graph entropy of $\mathbf X_{[N]}$ with respect to the single-letter broadcast graph $G_{\mathrm B}$.
\end{theo}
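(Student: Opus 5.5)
The plan is to reduce the statement to Körner's coding theorem for graph entropy, applied to the single-letter broadcast graph $G_{\mathrm B}$. Two facts do the work: every proper coloring of the OR power $\GUn$ is a valid zero-error broadcast code, and the normalized chromatic entropy of $\GUn$ converges to $H_\kappa(G_{\mathrm B},\mathbf X_{[N]})$ by Definition~\ref{def:Korner_graph_entropy}. The theorem then follows from~\eqref{eq:broadcast-entropy-hierarchy}. To keep the argument self-contained, I would nonetheless write it as an explicit random-binning scheme over typical independent-set sequences. This version also shows directly that vanishing error suffices.

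\emph{Step 1 (decodability of a coloring).} Fix $n$ and a proper coloring $c_n$ of $\GUn$, and set $M_n=c_n(\mathbf X_{[N]}^n)$. Given $(m,s_i^n)$, user $i$ outputs $f_i^n(\widetilde{\mathbf x}_{[N]}^n)$ for any $\widetilde{\mathbf x}_{[N]}^n$ in the support with $c_n(\widetilde{\mathbf x}_{[N]}^n)=m$ and $\widetilde{\mathbf x}_{\mathcal A_i}^n=s_i^n$.

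Suppose two candidates $\mathbf x^n$ and $\widetilde{\mathbf x}^n$ are consistent with $(m,s_i^n)$ but have $f_i^n(\mathbf x^n)\neq f_i^n(\widetilde{\mathbf x}^n)$. Then some coordinate $t$ has $f_i(\mathbf x_t)\neq f_i(\widetilde{\mathbf x}_t)$ and $\mathbf x_{\mathcal A_i,t}=\widetilde{\mathbf x}_{\mathcal A_i,t}$. So $\{\mathbf x_t,\widetilde{\mathbf x}_t\}\in\mathcal E_i\subseteq\mathcal E(G_{\mathrm B})$, and hence the two sequences are adjacent in $\GUn$ by~\eqref{eq:or-power-adjacency}. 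This contradicts their sharing a color, so every user decodes with zero error.

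In fact only adjacency in $G_{\mathrm B}^{(n)}$ is needed, and this is consistent with~\eqref{eq:operational-subgraph-or-power}. Variable-length (Huffman) coding of $M_n$ then gives rate at most $\frac1n H_\chi(\GUn,\mathbf X^n_{[N]})+\frac1n$.

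\emph{Step 2 (rate).} Invoke the limit in Definition~\ref{def:Korner_graph_entropy}, $\frac1n H_\chi(\GUn,\mathbf X_{[N]}^n)\to H_\kappa(G_{\mathrm B},\mathbf X_{[N]})$, which is Körner's theorem and is assumed from earlier in the paper. Since a zero-error code is in particular asymptotically lossless, this gives $R^*(K,N)\le R_0^*(K,N)\le H_\kappa(G_{\mathrm B},\mathbf X_{[N]})$.

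\emph{Step 3 (direct construction).} To avoid depending on the chromatic-entropy limit, take $P^*_{W\mid\mathbf X_{[N]}}$ attaining the minimum in~\eqref{eq-theo-general-achievable-rate-via-broadcast}. Generate $2^{n(I(\mathbf X_{[N]};W)+\delta)}$ codewords $w^n$ i.i.d.\ from $P_W^*$.

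\begin{itemize}
\item \emph{Encoding.} The encoder sends the index of a $w^n$ jointly typical with $\mathbf x^n_{[N]}$ and satisfying $\mathbf x_t\in w_t$ for all $t$. The covering lemma ensures such a codeword exists with high probability.
\item \emph{Decoding.} User $i$ computes $f_i$ coordinatewise from $(w_t,\mathbf x_{\mathcal A_i,t})$. This is well defined because $w_t$ is an independent set of $G_{f_i}$: all vertices in $w_t$ sharing the side information $\mathbf x_{\mathcal A_i,t}$ have equal $f_i$ values.
\end{itemize}

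The error event is only the covering failure, whose probability vanishes. This yields the claimed rate directly.

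The main obstacle is ensuring the covering produces codewords with $\mathbf x_t\in w_t$ exactly, not merely typicality. I would handle this by restricting joint typicality to the support of $P^*_{\mathbf X_{[N]},W}$, i.e.\ using strong typicality, which forces every coordinate pair into the support. The support condition in~\eqref{eq-theo-general-achievable-rate-via-broadcast} then gives $\mathbf x_t\in w_t$.
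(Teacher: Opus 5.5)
Your proposal is correct. Steps 1--2 match what the paper's argument ultimately rests on, in its closing sentence and in the discussion around \eqref{eq:broadcast_graph_entropy}: a proper coloring of $G_{\mathrm B}^{\dot\vee n}$ is a zero-error broadcast code, and the limit in Definition~\ref{def:Korner_graph_entropy} converts the normalized chromatic entropy into $H_\kappa(G_{\mathrm B},\mathbf X_{[N]})$.

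Your explicit construction in Step 3 differs from the paper's. The paper randomly bins the sequences $W^n$ and decodes the unique typical sequence in the received bin. Its error analysis needs $R>H_q(W)+\delta(\epsilon)$, which equals $I_q(\mathbf X_{[N]};W)$ only when $W$ is a deterministic function of $\mathbf X_{[N]}$, i.e.\ a coloring. To get from chromatic to K\"orner entropy, the paper therefore has to fall back on coloring the OR power and invoking K\"orner's theorem.

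Your Step 3 instead draws a covering codebook of size about $q^{n(I_q(\mathbf X_{[N]};W)+\delta)}$ from the randomized optimal test channel. Robust typicality forces $\mathbf x_{[N],t}\in w_t$ in every coordinate, and you decode symbol-wise from $(w_t,\mathbf x_{\mathcal A_i,t})$. Sending the codeword index directly already meets the unconditional rate $I_q(\mathbf X_{[N]};W)$, so no binning stage is needed, and the tensorization identity is never used.

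The trade-off is that your covering scheme is only asymptotically lossless, since covering failure is the one error event. The coloring route gives zero-error codes and hence the stronger bound $R_0^*(K,N)\le H_\kappa(G_{\mathrm B},\mathbf X_{[N]})$, which you also recover in Step 2. Your Step 3 uses the same covering mechanism the paper uses for Corollary~\ref{cor:side-information-assisted-graph-bound}, with the binning stage removed.
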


Although the reconstruction criterion considered here is asymptotically lossless, the proposed graph-entropy construction has zero error. A proper coloring of the OR power graph $\GUn$ separates every pair of source sequences that are adjacent in at least one coordinate. Thus, every user can recover its demanded function sequence exactly from the transmitted color and its side-information sequence. 
By applying K\"orner's graph-entropy theorem to the $n$-fold OR product of the single-letter graph $G_{\mathrm B}$~\cite{korner1971coding} (see also~\cite{alon1996source}),
\begin{align}
\label{eq:broadcast_graph_entropy}
R^*(K,N)\leq H_\kappa\bigl(G_{\mathrm B},\mathbf X_{[N]}\bigr)=\lim_{n\to\infty} \frac1n \min_{c_n\in\mathscr C(\GUn)} H_q\bigl(c_n(\mathbf X_{[N]}^n)\bigr) ,
\end{align}
where the minimum is over all proper colorings of $\GUn$.

\begin{proof}[Proof of Theorem~\ref{theo-general-achievable-rate-via-broadcast}]
\label{proof_theo-general-achievable-rate-via-broadcast} 
We denote a text channel attaining~\eqref{eq-theo-general-achievable-rate-via-broadcast} by
\begin{align}
\label{eq:Wopt}
P^*_{W\mid{\bf X}_{[N]}}\in\arg\min_{\substack{P_{W\mid{\bf X}_{[N]}}:\\{\bf X}_{[N]}\in W,\;W\in\Gamma(G_{\mathrm B})}}I_q({\bf X}_{[N]};W)
\end{align}
where $W$ denotes the independent-set-valued random variable\footnote{We may restrict $W$ to maximal independent sets, since every independent set can be enlarged to a maximal one.} 
induced by $P^*_{\mathbf X_{[N]},W}=P_{\mathbf X_{[N]}}P^*_{W\mid{\mathbf X}_{[N]}}$.

Assigning a unique codeword to each MIS in $\Gamma(\GU)$, which is determined by ${\bf X}_{[N]}$, $\{f_i\}_{i\in[K]}$, and $\{\mathcal{S}_i\}_{i\in[K]}$, ensures that each user $i\in[K]$ can correctly reconstruct $f_{i}$ from the received $M$ and $\mathcal{S}_i$. 
For asymptotically lossless recovery of all demands, the sender node builds the $n$-fold OR power graph\footnote{For more details on the $n$-fold OR products of characteristic graphs, we refer the reader to~\cite{AlonOrlit1996}.} $\GUn\triangleq G(\mathcal{V}^n ,\mathcal{E}^n)$ corresponding to the length-$n$ i.i.d realizations of $N$ datasets, namely ${\bf X}_{[N]}^{n}=\{(X_{1,t},\dots,X_{N,t})\}_{t\in [n]}$. 
The vertex set satisfies $\mathcal{V}^n=\{{\bf x}_{[N]}^n \triangleq \{{\bf x}_{[N],t}\}_{t\in [n]} 
\,:\, {\bf x}_{[N],t}\in \mathcal{V},\,\forall\, t\in [n]\}$ where each ${\bf x}_{[N],t}\triangleq (x_{1,t},x_{2,t},\dots,x_{N,t})$ is drawn i.i.d. from the same joint distribution $P_{X_1,X_2,\dots,X_N}$. 
The edge set satisfies $\mathcal{E}^n=\{({\bf x}_{[N]}^n,{{\bf x}'}_{[N]}^n)\, \mbox{if}\, \ \exists\, t\in[n] \, :\, ({\bf x}_{[N],t},{\bf x}'_{[N],t})\in \mathcal{E}\}$. 

\paragraph{Codebook generation}  
After building $\GUn$ using ${\bf X}_{[N]}^{n}$, $\{f_{i}^{n}\}_{i\in[K]}$, and $\{\mathcal{S}_{i}^{n}\}_{i\in[K]}$, the sender node employs random binning of the MISs $\{W^n\in\Gamma(\GUn)\}$ via the mapping $en:\Gamma(\GUn)\to [1, q^{nR(K,N)}]$. We refer to each subset of $W^n$ with the same index $m$ as a bin $\mathcal{B}(m)$, $m\in[1, q^{nR(K,N)}]$. The codebook, denoted as $\mathcal C=\{m\in [1,q^{nR}]\}$, is revealed to all users.

\paragraph{Encoding}
Upon observing $W^{n}$, the sender finds an index $m\in [1:q^{nR}]$ such that $W^n\in\mathcal{B}(m)$. Thus, the broadcast message is $en(W^n)=m$. The sender broadcasts the bin index $m$.

\paragraph{Decoding}
Upon receiving $m$, the user declares $\hat{f}_{i}^{n}\triangleq F_i(W^n, \mathcal{S}_{i}^{n})$, a 
function that denotes an estimate of 
$f_{i}^{n}$ if $W^n$ is the unique typical sequence in $\mathcal{B}(m)$. Otherwise, it declares an error.

\paragraph{Analysis of the probability of error}
We show that the probability of error averaged over bin assignments tends to zero as $n\to\infty$ if $R>H_q(W)$.

Let $M$ denote the random bin index of $W^n$, i.e., $W^n\in\mathcal{B}(M)$. Note that $M\sim{\mathrm Unif}[1,2^{nR}]$ is independent of $W^n$. User $i\in[K]$ makes an error if and only if at least one of the following occurs:
\begin{align}
    E_{r_{i1}}=\{W^{n}\notin \mathcal{T}_{\epsilon}^{(n)}\} \ ,\quad
    E_{r_{i2}}=\{\tilde{w}^{n}\in \mathcal{B}(M)\,\mbox{for some}\ \tilde{w}^{n}\neq W^n,\ \tilde{w}^n\in \mathcal{T}_{\epsilon}^{(n)}\} \nonumber
\end{align}
where $\mathcal{T}_{\epsilon}^{(n)}$ represents the set of $\epsilon$-typical $n$-sequences $W^n$ for some $\epsilon>0$. 
Then, by the symmetry of codebook construction, the union of events bound, random binning ensures that the probability of error averaged over bin assignments tends to zero as $n\to\infty$ if $R>H_q(W)+\delta(\epsilon)$ where $|\mathcal{T}_{\epsilon}^{(n)}|\leq q^{n(H_q(W)+\delta(\epsilon))}$. The details of the bounding of each probability of error term is standard and detailed in~\cite[Chapter~10.3.]{el2011network}. 

Thus, using random binning and generalizing~\eqref{eq:Rate_ach} to the broadcast setting, $\lim_{n\to\infty} P_{\mathrm{e}}^{(n)}=0$ holds when
\begin{align}
\label{eq-com-rate-main-mutual-info}
 R^*(K,N)\overset{(a)}{\leq}\min_{\substack{P_{W\mid{\bf X}_{[N]}}:\\{\bf X}_{[N]}\in W,\;W\in\Gamma(G_{\mathrm B})}} I_q({\bf X}_{[N]} ;W) \overset{(b)}{=} H_\kappa\big(G_{\mathrm B},{\bf X}_{[N]}\big) 
\end{align}
where in $(a)$, the minimization is over all $P_{\mathbf X_{[N]},W}>0$ such that $W\in\Gamma(\GU)$, and the inequality follows from noting that $W\in \Gamma(G_{\mathrm B})$ is a function of ${\bf X}_{[N]}$ such that $H_q(f_i\mid W, \mathcal{S}_i)=0$ for all $i\in[K]$, ensuring $\lim_{n\to\infty} P_{\mathrm{e}}^{(n)}=0$, and $(b)$ follows from employing~\eqref{eq:graph_entropy}. The RHS in~\eqref{eq-com-rate-main-mutual-info} is achievable by assigning colors to 
$\GUn$, which achieves the entropy bound of the coloring variable~\cite{korner1971coding,feizi2014network}.
\end{proof}

\begin{cor}
\label{cor:side-information-assisted-graph-bound}
Let the single-letter side-information-assisted quantity be 
\begin{align}
\label{eq:single_letter_side_info_assisted_def_G_B}
H_{\kappa,\mathrm B}^{\mathrm{SI}}\left(G_{\mathrm B},\mathbf X_{[N]}\mid\mathcal S_{[K]}\right)\triangleq\min_{\substack{P_{W\mid{\bf X}_{[N]}}:\\\substack{W-{\bf X}_{[N]}-(\mathcal S_1,\ldots,\mathcal S_K),\\{\bf X}_{[N]}\in W\in\Gamma(G_{\mathrm B}) \ \text{a.s.}}}}\max_{i\in[K]} I_q({\bf X}_{[N]};W\mid\mathcal S_i) .
\end{align}
Under the same assumptions as in Theorem~\ref{theo-general-achievable-rate-via-broadcast}, a side-information-assisted independent-set coding scheme gives
\begin{align}
R^*(K,N)\leq H_{\kappa,\mathrm B}^{\mathrm{SI}}\left(G_{\mathrm B},\mathbf X_{[N]}\mid\mathcal S_{[K]}\right)\leq H_\kappa(G_{\mathrm B},{\bf X}_{[N]})  .
\end{align}
\end{cor}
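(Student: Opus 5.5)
The plan has two parts: an achievability argument for the first inequality, and a comparison of optimization problems for the second.

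\emph{Achievability.} Fix any test channel $P_{W\mid{\bf X}_{[N]}}$ that is feasible in~\eqref{eq:single_letter_side_info_assisted_def_G_B}. Feasibility means $W$ is an independent set of $G_{\mathrm B}$ containing ${\bf X}_{[N]}$ almost surely, and the Markov chain $W-{\bf X}_{[N]}-(\mathcal S_1,\ldots,\mathcal S_K)$ holds. Since $G_{\mathrm B}\supseteq G_{f_i}$ for every $i$, $W$ is also an independent set of each $G_{f_i}$. Any two source realizations in $W$ that share the same side-information value ${\bf x}_{\mathcal A_i}$ therefore have the same $f_i$ value. It follows that $f_i$ is a deterministic function of $(W,\mathcal S_i)$, i.e., $H_q(f_i\mid W,\mathcal S_i)=0$, and coordinatewise $f_i^n$ is a function of $(W^n,\mathcal S_i^n)$.

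The code is a standard Wyner--Ziv / Orlitsky--Roche scheme with one common bin index. The steps are as follows.
\begin{enumerate}
\item Generate $q^{n\tilde R}$ codewords $w^n(\ell)$ i.i.d.\ according to $P_W$, with $\tilde R>I_q({\bf X}_{[N]};W)$. By the covering lemma, the sender finds an $\ell$ with $(w^n(\ell),{\bf X}_{[N]}^n)$ jointly typical.
\item Bin the codeword indices uniformly into $q^{nR}$ bins and broadcast the bin index.
\item User $i$ searches the bin for the unique $w^n$ jointly typical with $\mathcal S_i^n$. By the Markov lemma, the true codeword is jointly typical with $\mathcal S_i^n$. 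By the packing lemma, a wrong codeword in the same bin is typical with $\mathcal S_i^n$ with vanishing probability provided $\tilde R-R<I_q(W;\mathcal S_i)$.
\end{enumerate}
The union bound over the $K$ users requires $R>\tilde R-\min_i I_q(W;\mathcal S_i)$. This gives the achievable rate
\[
I_q({\bf X}_{[N]};W)-\min_i I_q(W;\mathcal S_i)=\max_i I_q({\bf X}_{[N]};W\mid\mathcal S_i),
\]
where the equality uses the Markov chain. The single common bin index serves all users simultaneously, which is exactly why the rate is the maximum over users.

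\emph{Main obstacle.} One must ensure that joint typicality implies the support condition ${\bf x}_{[N],t}\in w_t$ in every coordinate, so that each user decodes $f_i^n$ exactly. I would handle this by using robust/strong typicality: pairs with $P_{{\bf X}_{[N]},W}({\bf x},w)=0$ never appear in a strongly typical pair. This is the same device used by Orlitsky--Roche for~\eqref{eq:Rate_ach}. User $i$ then computes $\hat f_i^n=F_i(\hat w^n,\mathcal S_i^n)$ coordinatewise. Finally, let $\epsilon\downarrow0$ and minimize over feasible test channels to obtain $R^*(K,N)\leq H_{\kappa,\mathrm B}^{\mathrm{SI}}$.

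\emph{Second inequality.} Take $P^*_{W\mid{\bf X}_{[N]}}$ attaining~\eqref{eq-theo-general-achievable-rate-via-broadcast}, with $W$ generated from ${\bf X}_{[N]}$ alone. It is feasible in~\eqref{eq:single_letter_side_info_assisted_def_G_B}, because each $\mathcal S_i={\bf X}_{\mathcal A_i}$ is a function of ${\bf X}_{[N]}$ and so the Markov chain holds automatically. For every $i$,
\[
I_q({\bf X}_{[N]};W\mid\mathcal S_i)=I_q({\bf X}_{[N]};W)-I_q(\mathcal S_i;W)\leq I_q({\bf X}_{[N]};W).
\]
Taking the maximum over $i$ and then the minimum over test channels yields $H_{\kappa,\mathrm B}^{\mathrm{SI}}\leq H_\kappa(G_{\mathrm B},{\bf X}_{[N]})$.
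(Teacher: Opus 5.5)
Your proposal is correct and follows essentially the same route as the paper: fix a feasible test channel, note that $H_q(f_i\mid W,\mathcal S_i)=0$ because $W$ is an independent set of $G_{\mathrm B}$, then cover, bin, and let each user recover $W^n$ from the common bin index and $\mathcal S_i^n$, which gives the rate $\max_i I_q(\mathbf X_{[N]};W\mid\mathcal S_i)$. You also make explicit two points the paper's proof leaves implicit: strong typicality forces $\mathbf x_{[N],t}\in w_t$ in every coordinate, and the optimizer of $H_\kappa(G_{\mathrm B},\mathbf X_{[N]})$ is feasible with $I_q(\mathbf X_{[N]};W\mid\mathcal S_i)\le I_q(\mathbf X_{[N]};W)$, which yields the second inequality that the paper states but does not argue.
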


Single-letter setting uses~\eqref{eq:single_letter_side_info_assisted_def_G_B}. Every realization of $W$ satisfies $W\in G_{\mathrm B}$. Applying this construction repeatedly treats two sequences as requiring separation whenever $\exists t\in[n],\exists i\in[K]:\ X_{i,t}=\widetilde X_{i,t},\ f_{i,t}\neq\widetilde f_{i,t}$, where the witnessing user $i\in[K]$ may change with $t$. However, the single-letter conditional independent-set optimization does not, in general, characterize the operational multiletter broadcast graph entropy, $R_0^*(K,N)
$. 
Operationally, two length-$n$ sequences must be separated only if one fixed user $i$ is confused over the entire block: $\exists i\in[K]:\ X_i^n=\widetilde X_i^n,\ f_i^n\neq\widetilde f_i^n$. We expect to see a multiletter gain because $G_{\mathrm B}^{(n)}\subsetneq G_{\mathrm B}^{\dot\vee n}$.

\begin{proof}[Proof of Corollary~\ref{cor:side-information-assisted-graph-bound}]
Fix a test channel $P_{W\mid{\bf X}_{[N]}}$ satisfying ${\bf X}_{[N]}\in W\in\Gamma(G_{\mathrm B})$ almost surely and $W-{\bf X}_{[N]}-(\mathcal S_1,\ldots,\mathcal S_K)$. 
Because every realization of $W$ is an independent set of $G_{\mathrm B}$, 
\begin{align}
\label{eq:function-from-W-Si}
H_q(f_i\mid W,\mathcal S_i)=0,
\qquad i\in[K] .
\end{align}
Indeed, if two source realizations in the same independent set had identical side information $\mathcal S_i$ but different values of $f_i$, they would be adjacent in $G_{f_i}$ and hence in $G_{\mathrm B}$.

Generate a random codebook containing $q^{n(I_q({\bf X}_{[N]};W)+\delta)}$ independently generated $W^n$ sequences. By the covering lemma, the encoder can find, with probability approaching one, a codeword jointly typical with ${\bf X}_{[N]}^n$. Randomly partition the codewords into $q^{nR}$ bins and transmit the bin index.

User~$i$ can recover the selected $W^n$ sequence from the received bin and its side-information sequence $\mathcal S_i^n$ whenever
\begin{align}
R>I_q({\bf X}_{[N]};W)-I_q(W;\mathcal S_i)\overset{(a)}{=}I_q({\bf X}_{[N]};W\mid\mathcal S_i) ,
\end{align}
where $(a)$ follows from $W-{\bf X}_{[N]}-\mathcal S_i$. Thus, the same bin index permits all users to recover $W^n$ provided that 
\begin{align}
R>\max_{i\in[K]}\ I_q({\bf X}_{[N]};W\mid\mathcal S_i) .
\end{align}
Each user then reconstructs its demanded sequence symbolwise using~\eqref{eq:function-from-W-Si}. Since $K$ is finite, the union of all decoding-error events has probability tending to zero. 
Minimizing over the admissible test channels gives
\begin{equation}
R^*(K,N) \leq H_{\kappa,\mathrm B}^{\mathrm{SI}}\left(G_{\mathrm B},\mathbf X_{[N]}\mid\mathcal S_{[K]}\right) .\qedhere
\end{equation}
\end{proof}

Since the $W$-based construction (Corollary~\ref{cor:side-information-assisted-graph-bound}) is one possible coding scheme,
\begin{align}
R^*(K,N)
\leq H_{\kappa,\mathrm B}^{\mathrm{SI}}\left(G_{\mathrm B},\mathbf X_{[N]}\mid\mathcal S_{[K]}\right) .
\end{align}

The side-information-assisted independent-set bound does not require the demands $\{f_i\}_{i\in[K]}$ to be compatible. Compatibility is required only for the stronger conclusion that this rate reduces to $\max\limits_{i\in[K]} H_q(f_i\mid\mathcal S_i)$, or that the resulting achievable bound matches the corresponding single-user converse.

When the compatibility condition (Definition~\ref{def:compatible_functions}) holds, and $\mathcal{S}_i=X_i$, for all $i\in[K]$, Theorem~\ref{theo-general-achievable-rate-via-broadcast} can be tightened as follows.

\begin{prop}
\label{prop_achievable-rate-via-broadcast_compatible}
Given a $(K, N,\{\mathcal{S}_i\}_{i\in[K]},\{f_i\}_{i\in[K]})$ NFCB model, where $K=N$ and $\mathcal{S}_i=X_i$, for all $i\in[K]$, define the side-information-assisted chromatic-entropy bound as
\begin{align}
\label{eq:side-information-chromatic-bound}
H_{\chi,\mathrm B}^{\mathrm{SI}}(G_{\mathrm B},\mathbf X_{[N]}\mid \mathcal S_{[K]}=\mathbf X_{[N]})\triangleq\min_{\substack{P_{W\mid{\bf X}_{[N]}}:\\{\bf X}_{[N]}\in W,\;W\overset{\circ}{\in}\Gamma(G_{\mathrm B})}}\max_{i\in[K]} I_q({\bf X}_{[N]};W\mid X_i) .
\end{align}
Then, the broadcast rate satisfies
\begin{align}
\label{eq:prop_achievable-rate-via-broadcast_compatible}
 H_\kappa(G_{\mathrm B},\mathbf X_{[N]})=R^*(K,N)=H_{\chi,\mathrm B}^{\mathrm{SI}}(G_{\mathrm B},\mathbf X_{[N]}\mid \mathcal S_{[K]}=\mathbf X_{[N]})=\max\limits_{i\in[K]}\, H_q(f_i\mid X_i) .
\end{align}
provided that the compatibility condition in Definition~\ref{def:compatible_functions} holds.
\end{prop}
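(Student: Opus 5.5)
The plan is a sandwich argument: show that every quantity in \eqref{eq:prop_achievable-rate-via-broadcast_compatible} lies between $\max_{i\in[K]}H_q(f_i\mid X_i)$ from below and from above. I read the compatibility hypothesis, extended to $K$ users, as the existence of a single function $f$ whose characteristic graph coincides with $G_{\mathrm B}$. Here $G_f$ is the graph whose edges join two vertices that agree on some coordinate $x_i$ but have different values of $f$, i.e. the union over $i$ of the characteristic graphs of $(f,X_i)$; for $K=2$ this is Definition~\ref{def:compatible_functions}.

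\emph{Converse.} Take any asymptotically lossless code and any user $i$. Since $f_i^n$ is i.i.d. given $X_i^n$, Fano's inequality gives
$H_q(M_n)\ge H_q(M_n\mid X_i^n)\ge I_q(M_n;f_i^n\mid X_i^n)= nH_q(f_i\mid X_i)-H_q(f_i^n\mid M_n,X_i^n)$,
and the last term is $o(n)$. Normalizing and maximizing over $i$ yields $R^*(K,N)\ge\max_iH_q(f_i\mid X_i)$. Combined with \eqref{eq:broadcast-entropy-hierarchy}, this also gives $H_\kappa(G_{\mathrm B},\mathbf X_{[N]})\ge R^*(K,N)$ for free.

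\emph{Achievability via the side-information chromatic bound.} Let $W$ be the level set of $f$ containing $\mathbf X_{[N]}$, i.e. $W=f^{-1}(f(\mathbf X_{[N]}))$. Two vertices with equal $f$-value are non-adjacent in $G_f=G_{\mathrm B}$, so $W$ is an independent set. Distinct values of $W$ are disjoint, so $W\overset{\circ}{\in}\Gamma(G_{\mathrm B})$ after enlarging each class to a maximal one if needed. Hence $W$ is feasible in \eqref{eq:side-information-chromatic-bound}, and $I_q(\mathbf X_{[N]};W\mid X_i)=H_q(f\mid X_i)$.

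The key step is the identity $H_q(f\mid X_i)=H_q(f_i\mid X_i)$. I will show that on each slice $\{x_i=a\}$ of the support, $f$ and $f_i$ induce the same partition. Two vertices in the slice are adjacent in $G_{\mathrm B}$ iff they differ under some $f_j$ with $x_j$ agreeing. To isolate user $i$'s contribution, I plan to use the slice-wise version of compatibility: restricted to $x_i=a$, the edges of $G_{f_i}$ coincide with those induced by $f$ with side information $X_i$. This is what the modified rook's graph encodes when $K=2$, since an edge there is labelled by the coordinate it shares. Given this, $f$ is a function of $(f_i,X_i)$ and conversely, so the conditional entropies match. Running the binning scheme of Corollary~\ref{cor:side-information-assisted-graph-bound} with this $W$ then gives
$R^*(K,N)\le H^{\mathrm{SI}}_{\chi,\mathrm B}\le\max_iH_q(f_i\mid X_i)$,
which closes the chain for $R^*$ and $H^{\mathrm{SI}}_{\chi,\mathrm B}$.

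\emph{Main obstacle.} The hardest step is the equality $H_\kappa(G_{\mathrm B},\mathbf X_{[N]})=\max_iH_q(f_i\mid X_i)$. $H_\kappa$ is an unconditional quantity, and the level-set coloring only gives $H_\kappa\le H_q(f)$, which can exceed $\max_iH_q(f\mid X_i)$. My first attempt would be to lower-bound the structure of $G_{\mathrm B}^{\dot\vee n}$: show its independent sets are essentially $f$-level sets intersected with sets that no user's side information can split. I would then evaluate K\"orner's formula \eqref{eq:graph_entropy} directly, testing on the two-user rook's-graph case (bipartite $C_4$ for XOR with uniform bits, giving $1=H(f\mid X_i)$). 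If this fails in general, I expect the equality with $H_\kappa$ to need an extra condition such as $f$ being independent of each $X_i$ (e.g. uniform independent sources with $f$ balanced in every coordinate, so that $H_q(f)=H_q(f\mid X_i)$). I would then state the proof under that condition and keep the remaining equalities unconditional.
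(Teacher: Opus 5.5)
Your sandwich for $R^*(K,N)$ and $H_{\chi,\mathrm B}^{\mathrm{SI}}$ is correct and is essentially the paper's argument. The paper bounds $H_{\chi,\mathrm B}^{\mathrm{SI}}$ from below directly: every feasible $W$ has $H_q(f_i\mid W,X_i)=0$, hence $H_q(W\mid X_i)\ge H_q(f_i\mid X_i)$. It bounds it from above with $W=f(\mathbf X_{[N]})$. You get the lower bound operationally, via $H_{\chi,\mathrm B}^{\mathrm{SI}}\ge R^*\ge\max_i H_q(f_i\mid X_i)$. Both routes work, and yours has the advantage of actually placing $R^*$ in the chain.

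Two small repairs are needed.
\begin{itemize}
\item For $K\ge 3$, the slice-wise condition does not follow from $G_f=G_{\mathrm B}$. Two vertices can agree in several coordinates, so an edge inside the slice $\{x_i=a\}$ may be produced by another user $j$. You must take the slice-wise condition, $x_i=\widetilde x_i\Rightarrow[f(\mathbf x)=f(\widetilde{\mathbf x})\Leftrightarrow f_i(\mathbf x)=f_i(\widetilde{\mathbf x})]$, as the hypothesis, which is what the paper does.
\item Drop ``enlarging each class to a maximal one''. Enlarging destroys disjointness, and under the paper's convention the level sets of $f$ are already admissible as color classes.
\end{itemize}

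The step you flagged as the main obstacle cannot be carried out, because the equality $H_\kappa(G_{\mathrm B},\mathbf X_{[N]})=\max_i H_q(f_i\mid X_i)$ is false under compatibility alone. Here is a counterexample.
\begin{itemize}
\item Take $K=N=2$, with $X_1,X_2$ i.i.d.\ $\mathrm{Bern}(p)$ for $p\in(0,1/2)$, and $f_1=f_2=X_1\oplus X_2$. These are compatible with $f=f_1$.
\item Then $G_{\mathrm B}$ is the $4$-cycle with edges $\{00,01\},\{01,11\},\{11,10\},\{10,00\}$.
\item Its only maximal independent sets, $\{00,11\}$ and $\{01,10\}$, partition the vertex set. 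So $W$ is forced and $H_\kappa(G_{\mathrm B},\mathbf X_{[2]})=h(2p(1-p))$.
\item By your own sandwich, $R^*(2,2)=\max_iH_2(f_i\mid X_i)=h(p)$.
\item Since $p<2p(1-p)\le 1/2$, we have $h(p)<h(2p(1-p))$, so the claimed equality fails.
\end{itemize}

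So no analysis of $G_{\mathrm B}^{\dot\vee n}$ can rescue this equality. Your uniform-XOR test is exactly the boundary case $p=1/2$. The paper's proof does not address $H_\kappa$ either: its concluding ``combining'' step only yields $H_{\chi,\mathrm B}^{\mathrm{SI}}=\max_iH_q(f_i\mid X_i)$.

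What is provable is $R^*(K,N)=H_{\chi,\mathrm B}^{\mathrm{SI}}=\max_iH_q(f_i\mid X_i)\le H_\kappa(G_{\mathrm B},\mathbf X_{[N]})$. Equality holds under an extra condition such as the one you proposed. If $f$ is independent of a maximizing $X_{i^\star}$, then $H_\kappa\le H_\chi(G_{\mathrm B},\mathbf X_{[N]})\le H_q(f)=H_q(f\mid X_{i^\star})=\max_iH_q(f_i\mid X_i)$.
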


\begin{proof}[Proof of Proposition~\ref{prop_achievable-rate-via-broadcast_compatible}]
The relation~\eqref{eq:side-information-chromatic-bound} implies  $H_q(W\mid{\bf X}_{[N]})=0$ because $W\overset{\circ}{\in}\Gamma(G_{\mathrm B})$, thus $W$ is a deterministic function of ${\bf X}_{[N]}$. It follows that
\begin{align}
\label{eq:side-information-chromatic-entropy}
H_{\chi,\mathrm B}^{\mathrm{SI}}(G_{\mathrm B},\mathbf X_{[N]}\mid \mathcal S_{[K]}=\mathbf X_{[N]})=\min_{\substack{P_{W\mid{\bf X}_{[N]}}:\\{\bf X}_{[N]}\in W,\;W\overset{\circ}{\in}\Gamma(G_{\mathrm B})}}\max_{i\in[K]}H_q(W\mid X_i) .
\end{align}

Suppose that $\{f_i\}_{i\in[K]}$ are compatible, i.e., $\exists$ a common function $f:\mathcal X_{[N]}\longrightarrow\mathcal F$ such that, for every $i\in[K]$ and every pair ${\bf x}_{[N]},\widetilde{\bf x}_{[N]}$ satisfying $x_i=\widetilde x_i$, we have $f({\bf x}_{[N]})=f(\widetilde{\bf x}_{[N]})\,\,\Longleftrightarrow\,\, f_i({\bf x}_{[N]})=f_i(\widetilde{\bf x}_{[N]})$.
Equivalently,
\begin{align}
\label{eq:compatible-mutual-recovery}
H_q(f_i\mid f,X_i)=H_q(f\mid f_i,X_i)=0 ,
\qquad i\in[K] .
\end{align}


Indeed, for every feasible coloring variable $W$, we have $H_q(f_i\mid W,X_i)=0$, $i\in[K]$. Therefore,
\begin{align}
H_q(W\mid X_i)\geq I_q(f_i;W\mid X_i)=H_q(f_i\mid X_i) ,
\end{align}
and hence
\begin{align}
\label{eq:compatible-lower-direction}
H_{\chi,\mathrm B}^{\mathrm{SI}}(G_{\mathrm B},\mathbf X_{[N]}\mid \mathcal S_{[K]}=\mathbf X_{[N]})
\geq
\max_{i\in[K]}H_q(f_i\mid X_i) .
\end{align}

Conversely, under~\eqref{eq:compatible-mutual-recovery}, the level sets
of $f$ are independent sets of $G_{\mathrm B}$. Thus, choosing the coloring variable as $W=f({\bf X}_{[N]})$ is feasible and gives
\begin{align}
\label{eq:compatible-upper-direction}
H_{\chi,\mathrm B}^{\mathrm{SI}}(G_{\mathrm B},\mathbf X_{[N]}\mid \mathcal S_{[K]}=\mathbf X_{[N]})\leq\max_{i\in[K]}H_q(f\mid X_i)=\max_{i\in[K]}H_q(f_i\mid X_i),
\end{align}
where the last equality follows from~\eqref{eq:compatible-mutual-recovery}. Combining~\eqref{eq:compatible-lower-direction} and~\eqref{eq:compatible-upper-direction} proves~\eqref{eq:prop_achievable-rate-via-broadcast_compatible}.
%
\end{proof}

We now remark on the security implications of our broadcast graph-based approach. 

\begin{remark}[Security implication]
In Theorem~\ref{theo-general-achievable-rate-via-broadcast}, the broadcast message is given by $M_n=en(W^n)$, where $W^n\in\Gamma(\GUn)$. If, for a given user $i$ with $\mathcal{S}_i$ and demanded function $f_i$, two distinct source realizations induce the same function value, then these realizations are statistically indistinguishable from user $i$' s perspective. If, however, the demanded function of user $i'$ (where $i'\neq i$) differs across these realizations, then $f_{i'}$ cannot be a deterministic function of $\{M,\mathcal{S}_i, f_i\}$. Hence, for any source distribution that assigns positive probability to both realizations, $H_q\left(f_{i'}\mid M,\mathcal{S}_i,f_i\right)>0$, with a strict inequality. This establishes that while the encoding guarantees 
recovery of each user’s own demand, it does not, in general, eliminate uncertainty about the demands of other users. In general, the broadcast condition $\lim\limits_{n\to\infty} P_{\mathrm{e}}^{(n)}=0$ has a security  implication because $H_q(f_{i'}^n\mid M_n,\mathcal{S}_i^n,f_i^n) =H_q(g_{i',n}(M_n,\mathcal S_{i'}^n)\mid M_n,\mathcal \mathcal{S}_i^n,f_i^n)>0$, for some function $g_{i',n}$, unless $\mathcal{S}_{i'}$ is derived from $\{M_n,\mathcal{S}_i^n,f_i^n\}$.

\end{remark}

\section{Converse Bounds}
\label{sec:converse}

In this section, we first present a multi-letter response-profile converse bound (Theorem~\ref{theo-lower-bound-mutual-info}) on the broadcast rate for the $(K, N,\{\mathcal{S}_i\}_{i\in[K]},\{f_i\}_{i\in[K]})$ NFCB scheme. By carefully capturing each user’s demand and side information, Theorem~\ref{theo-lower-bound-mutual-info} gives tighter converses than a simple side-information-based converse. 
Compared with the achievable rate in Theorem~\ref{theo-general-achievable-rate-via-broadcast}, which optimizes the coloring distribution over MISs of $\GU$, the converse bound in Theorem~\ref{theo-lower-bound-mutual-info} relies on a carefully designed variable to approximate the fundamental rate limits induced by $({\bf X}_{[N]}, \{\mathcal{S}_i\}_{i\in[K]},\{f_i\}_{i\in[K]})$. 
Using the notion of clique-entropy, which provides a necessary condition to describe the broadcast graph, we will subsequently detail two characteristic graph-based converses, namely the zero-error clique-entropy converse  (Theorem~\ref{thm:zero_error_clique_converse}), and the vanishing-error clique-entropy converse (Theorem~\ref{thm:vanishing_error_clique_converse}) using the length-$n$ operational block broadcast graph $G_{\mathrm B}^{(n)}$. 
We will then present an additional genie-aided lower bound (Proposition~\ref{prop-genie}
).



\begin{defi}[Admissible response profiles]
\label{def:admissible-response-profiles}
For $\epsilon\in[0,1)$, let $\mathcal U_n(\epsilon)$ denote the
collection of random tuples $\mathbf U_{[K]}^{(n)}=\bigl(U_1^{(n)},\ldots,U_K^{(n)}\bigr)$ satisfying the following conditions:
\begin{enumerate}
\item $\mathbf U_{[K]}^{(n)}$ is a deterministic function of
$\mathbf X_{[N]}^n$, i.e.,
\begin{align}
\label{eq:response-profile-source-function}
H_q\bigl(\mathbf U_{[K]}^{(n)}\mid\mathbf X_{[N]}^n\bigr)=0 .
\end{align}

\item For every $i\in[K]$, there exists a reconstruction function $\eta_{i,n}:\mathcal U_i^{(n)}\times\mathsf S_i^n\longrightarrow\mathcal F_i^n$ such that
\begin{align}
\label{eq:response-profile-reconstruction}
\mathbb{P}\left(\eta_{i,n}\bigl(U_i^{(n)},\mathcal S_i^n\bigr)\neq f_i^n\right)\leq\epsilon .
\end{align}

\end{enumerate}
\end{defi}

We begin with the multi-letter response-profile converse.

\begin{theo}[Multi-letter response-profile converse]
\label{theo-lower-bound-mutual-info}
For the $\bigl(K,N,\{\mathcal S_i\}_{i\in[K]},\{f_i\}_{i\in[K]}\bigr)$ NFCB problem, define
\begin{align}
\label{eq:multiletter-response-entropy-bound}
R_{\mathrm U}(K,N)&\triangleq\sup_{0<\epsilon<1}\limsup_{n\to\infty}\frac{1}{n}\inf_{\mathbf U_{[K]}^{(n)}\in\mathcal U_n(\epsilon)}H_q\bigl(\mathbf U_{[K]}^{(n)}\bigr) ,\\
\label{eq:multiletter-conditional-bound}
R_{\mathrm c}(K,N)&\triangleq\sup_{0<\epsilon<1}\limsup_{n\to\infty}\frac{1}{n}\inf_{\mathbf U_{[K]}^{(n)}\in\mathcal U_n(\epsilon)}\max_{i\in[K]}I_q\bigl(\mathbf X_{[N]}^n;\mathbf U_{[K]}^{(n)}\mid\mathcal S_i^n\bigr) .
\end{align}
Then,
\begin{align}
\label{eq:multiletter-converse-chain}
R^*(K,N)\geq R_{\mathrm U}(K,N)\geq R_{\mathrm c}(K,N)\geq\max_{i\in[K]}H_q(f_i\mid\mathcal S_i).
\end{align}
\end{theo}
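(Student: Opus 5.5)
We prove the three inequalities in \eqref{eq:multiletter-converse-chain} from left to right; each compares an operational quantity with a relaxation of it.

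\emph{First inequality.} Take any achievable rate $R$ and fix $\epsilon\in(0,1)$. For all large $n$ there is an $(n,\epsilon)$-code $(\psi_n,\{g_{i,n}\})$ with $\frac1n H_q(M_n)\le R_{n,\epsilon}+o(1)$. We set $U_i^{(n)}\triangleq M_n=\psi_n(\mathbf X_{[N]}^n)$ for every $i\in[K]$ and $\eta_{i,n}\triangleq g_{i,n}$. Then \eqref{eq:response-profile-source-function} holds because $\psi_n$ is deterministic. Condition \eqref{eq:response-profile-reconstruction} holds because each per-user error event is contained in the union in \eqref{eq:block-error-probability}. Since the profile is $K$ copies of $M_n$, $H_q(\mathbf U_{[K]}^{(n)})=H_q(M_n)$. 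Hence $\frac1n\inf_{\mathcal U_n(\epsilon)}H_q(\mathbf U^{(n)}_{[K]})\le \frac1n H_q(M_n)$.

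Passing to the limit needs care. The infimum over $\mathcal U_n(\epsilon)$ is nonincreasing in $\epsilon$, so the supremum over $\epsilon$ is the limit as $\epsilon\downarrow0$. Taking $\limsup_n$ and then $\epsilon\downarrow0$, and comparing with the definition of achievability and \eqref{eq:comm_cost_optimal}, gives $R_{\mathrm U}\le R$. Taking the infimum over achievable $R$ yields $R_{\mathrm U}\le R^*$. The main bookkeeping obstacle is the order of limits. Achievability gives $\lim_{\epsilon\downarrow 0}\limsup_n R_{n,\epsilon}\le R$, and the monotonicity in $\epsilon$ is exactly what lets us exchange $\sup_\epsilon$ for this limit.

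\emph{Second inequality.} This holds pointwise for every profile. Because $\mathbf U^{(n)}_{[K]}$ is a function of $\mathbf X^n_{[N]}$,
\begin{align}
I_q(\mathbf X_{[N]}^n;\mathbf U^{(n)}_{[K]}\mid \mathcal S_i^n)=H_q(\mathbf U^{(n)}_{[K]}\mid\mathcal S_i^n)\le H_q(\mathbf U^{(n)}_{[K]})
\end{align}
for each $i$. Taking the maximum over $i$, then the infimum, the $\limsup$, and the supremum preserves the inequality.

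\emph{Third inequality.} Fix $i$ and any profile in $\mathcal U_n(\epsilon)$. Since $f_i^n$ is a function of $\mathbf X_{[N]}^n$, the data-processing inequality gives
\begin{align}
I_q(\mathbf X_{[N]}^n;\mathbf U^{(n)}_{[K]}\mid\mathcal S_i^n)\ge I_q(f_i^n;U_i^{(n)}\mid\mathcal S_i^n)=H_q(f_i^n\mid\mathcal S_i^n)-H_q(f_i^n\mid U_i^{(n)},\mathcal S_i^n).
\end{align}
By Fano's inequality applied to $\eta_{i,n}$ and \eqref{eq:response-profile-reconstruction},
\begin{align}
H_q(f_i^n\mid U_i^{(n)},\mathcal S_i^n)\le h(\epsilon)/\log q+\epsilon n\log_q|\mathcal F_i|.
\end{align}
The i.i.d.\ structure gives $H_q(f_i^n\mid\mathcal S_i^n)=nH_q(f_i\mid\mathcal S_i)$. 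Dividing by $n$ and letting $n\to\infty$, we get a lower bound of $H_q(f_i\mid\mathcal S_i)-\epsilon\log_q|\mathcal F_i|$ for every admissible profile. This bound is uniform over profiles, so it survives the infimum. Maximizing over $i$ and taking $\sup_{0<\epsilon<1}$, approached as $\epsilon\downarrow0$, yields $R_{\mathrm c}\ge\max_i H_q(f_i\mid\mathcal S_i)$.
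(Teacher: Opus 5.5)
Your proposal is correct and follows the paper's three-step structure. The only substantive difference is the witness in the first inequality: you take the trivial profile $U_i^{(n)}=M_n$ for every $i$, whereas the paper uses the coarser decoder-response functions $U_i^{(n)}=u^{(n)}_{i,M_n}=g_{i,n}(M_n,\cdot)$, and both are admissible profiles with $H_q(\mathbf U_{[K]}^{(n)})\le H_q(M_n)$, so nothing is gained or lost. In the third inequality, your Fano bound at fixed $\epsilon$, uniform over all of $\mathcal U_n(\epsilon)$ and followed by $\epsilon\downarrow 0$, is more careful than the paper, which cites the vanishing-error estimate \eqref{eq:asymptotic-response-profile-recovery} derived only for the code-induced profile. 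One minor nit: $h(P_{\mathrm e})\le h(\epsilon)$ requires $\epsilon\le 1/2$, which is harmless because that constant vanishes after dividing by $n$.
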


\begin{proof}[Proof of Theorem~\ref{theo-lower-bound-mutual-info}]
Consider an arbitrary achievable sequence of deterministic
blocklength-$n$ codes. For each $n$, let $M_n=\psi_n(\mathbf X_{[N]}^n)$ be the common broadcast message, and let the reconstruction at user $i$ be
\begin{align}
\widehat f_i^n=g_{i,n}(M_n,\mathcal S_i^n),\qquad i\in[K] .
\end{align}
The block error probability $P_{\mathrm e}^{(n)}\longrightarrow0$ (cf.~\eqref{eq:block-error-probability}) for an achievable sequence of codes.

For every message realization $m\in\mathcal M_n$, define the decoder-response function of user $i$ as
\begin{align}
\label{eq:decoder-response-function}
u_{i,m}^{(n)}:\mathsf S_i^n \longrightarrow \mathcal F_i^n, \qquad u_{i,m}^{(n)}(s_i^n) \triangleq g_{i,n}(m,s_i^n) .
\end{align}
The random response profile induced by the broadcast message is
\begin{align}
\label{eq:random-response-profile}
U_i^{(n)}\triangleq u_{i,M_n}^{(n)} .
\end{align}
Thus, the joint response profile induced by the broadcast message is $\mathbf U_{[K]}^{(n)}\triangleq \bigl(U_1^{(n)},\ldots,U_K^{(n)}\bigr)$. 
The response profile $U_i^{(n)}$ does not contain the realized side-information sequence $S_i^n$. Rather, it specifies the reconstruction produced by user $i$ for every possible side-information sequence. Once $M_n=m$ is fixed, the function $u_{i,m}^{(n)}$ is fixed. Therefore, $H_q\bigl(U_i^{(n)}\mid M_n\bigr)=0$, $i\in[K]$, and, jointly,
\begin{align}
\label{eq:joint-response-profile-given-message}
H_q\bigl(\mathbf U_{[K]}^{(n)}
\mid M_n\bigr)=0.
\end{align}

Since $M_n=\psi_n(\mathbf X_{[N]}^n)$, the joint response profile is also a deterministic function of $\mathbf X_{[N]}^n$:
\begin{align}
\label{eq:joint-response-profile-given-source}
H_q\bigl(\mathbf U_{[K]}^{(n)}\mid\mathbf X_{[N]}^n\bigr)=0.
\end{align}

By construction, $U_i^{(n)}(S_i^n)=g_{i,n}(M_n,\mathcal S_i^n)=\widehat f_i^n$. Therefore,
\begin{align}
\label{eq:response-profile-error}
\mathbb{P}\left(U_i^{(n)}(\mathcal S_i^n)\neq f_i^n\right)\leq P_{\mathrm e}^{(n)}.
\end{align}
It follows from~\eqref{eq:joint-response-profile-given-source} and~\eqref{eq:response-profile-error} that 
$\mathbf U_{[K]}^{(n)}\in\mathcal U_n\bigl(P_{\mathrm e}^{(n)}\bigr)$. 
Moreover, Fano's inequality gives  
$H_q\bigl(f_i^n\mid U_i^{(n)},\mathcal S_i^n\bigr)\leq h_q\bigl(P_{\mathrm e}^{(n)}\bigr)+P_{\mathrm e}^{(n)}\log_q\bigl(|\mathcal F_i|^n-1\bigr)$ for every $i\in[K]$. 
Hence,
\begin{align}
\label{eq:asymptotic-response-profile-recovery}
\lim_{n\to\infty}\frac{1}{n}H_q\bigl(f_i^n\mid U_i^{(n)},S_i^n\bigr)=0,\qquad i\in[K].
\end{align}

We first lower-bound the message entropy using the joint response
profile. From~\eqref{eq:joint-response-profile-given-message},
\begin{align}
\label{eq:message-response-profile-entropy}
H_q(M_n)\geq I_q\bigl(\mathbf U_{[K]}^{(n)};M_n\bigr)
=H_q\bigl(\mathbf U_{[K]}^{(n)}\bigr)-H_q\bigl(\mathbf U_{[K]}^{(n)}\mid M_n\bigr)=H_q\bigl(\mathbf U_{[K]}^{(n)}
\bigr) .
\end{align}
Therefore,
\begin{align}
\label{eq:finite-block-response-entropy-bound}
R_n(K,N)\geq\frac{1}{n}H_q\bigl(\mathbf U_{[K]}^{(n)}\bigr).
\end{align}

Fix any $\epsilon\in(0,1)$. Since
$P_{\mathrm e}^{(n)}\to0$, for all sufficiently large $n$, $P_{\mathrm e}^{(n)}\leq\epsilon$. Therefore, the response profile induced by the code belongs to $\mathcal U_n(\epsilon)$, and
\begin{align}
\label{eq:response-entropy-infimum}
R_n(K,N)\geq \frac{1}{n}\inf_{\mathbf V_{[K]}^{(n)}\in\mathcal U_n(\epsilon)}H_q\bigl(\mathbf V_{[K]}^{(n)}\bigr).
\end{align}
Using $\limsup\limits_{n\to\infty}R_n(K,N)\leq R$ for every achievable rate $R$, we obtain
\begin{align}
R\geq\limsup_{n\to\infty}\frac{1}{n}\inf_{\mathbf V_{[K]}^{(n)}\in\mathcal U_n(\epsilon)}H_q\bigl(\mathbf V_{[K]}^{(n)}\bigr) .
\end{align}
Taking the supremum over $\epsilon\in(0,1)$ and then the infimum over all achievable rates proves
\begin{align}
\label{eq:optimal-rate-response-entropy}
R^*(K,N)\geq R_{\mathrm U}(K,N).
\end{align}

Since $\mathbf U_{[K]}^{(n)}$ is a deterministic function of $M_n$, for every $i\in[K]$,
\begin{align}
\label{eq:conditional-response-profile-bound}
H_q(M_n)\geq H_q(M_n\mid \mathcal S_i^n)\geq I_q\bigl(\mathbf U_{[K]}^{(n)};M_n\mid\mathcal S_i^n\bigr)=H_q\bigl(\mathbf U_{[K]}^{(n)}\mid\mathcal S_i^n\bigr) .
\end{align}
Furthermore, because $\mathbf U_{[K]}^{(n)}$ is a deterministic function of $\mathbf X_{[N]}^n$, it follows that 
$H_q\bigl(\mathbf U_{[K]}^{(n)}\mid\mathcal S_i^n\bigr)=I_q\bigl(\mathbf X_{[N]}^n;\mathbf U_{[K]}^{(n)}\mid\mathcal S_i^n\bigr)$. 
Thus, we derive the conditional mutual-information bound:
\begin{align}
\label{eq:finite-block-conditional-response-bound}
R_n(K,N)\geq\frac{1}{n}\max_{i\in[K]}I_q\bigl(\mathbf X_{[N]}^n;\mathbf U_{[K]}^{(n)}\mid\mathcal S_i^n\bigr).
\end{align}
Applying the same infimum and asymptotic argument used above
gives
\begin{align}
\label{eq:optimal-rate-conditional-response}
R^*(K,N)\geq R_{\mathrm c}(K,N) .
\end{align}

For every admissible $\mathbf U_{[K]}^{(n)}$ and every $i\in[K]$,
\begin{align}
H_q\bigl(\mathbf U_{[K]}^{(n)}
\bigr)\geq H_q\bigl(\mathbf U_{[K]}^{(n)}\mid\mathcal S_i^n\bigr)=I_q\bigl(\mathbf X_{[N]}^n;\mathbf U_{[K]}^{(n)}\mid\mathcal S_i^n\bigr) .
\end{align}
Therefore,
\begin{align}
\label{eq:response-bound-ordering}
R_{\mathrm U}(K,N)\geq R_{\mathrm c}(K,N) .
\end{align}

Finally, for any $i\in[K]$,
\begin{align}
\label{eq:single-user-functional-lower-bound}
I_q\bigl(\mathbf X_{[N]}^n;\mathbf U_{[K]}^{(n)}\mid\mathcal S_i^n\bigr)\geq I_q\bigl(\mathbf X_{[N]}^n; U_i^{(n)}\mid\mathcal S_i^n\bigr)&\geq I_q\bigl(f_i^n;U_i^{(n)}\mid \mathcal S_i^n\bigr)\nonumber\\
&=H_q(f_i^n\mid\mathcal S_i^n)-H_q\bigl(f_i^n\mid U_i^{(n)},\mathcal S_i^n\bigr) ,
\end{align}
where the second inequality follows from data processing because $f_i^n$ is a deterministic function of $\mathbf X_{[N]}^n$. Using the temporal i.i.d. assumption, $H_q(f_i^n\mid S_i^n)=nH_q(f_i\mid S_i)$, and using
\eqref{eq:asymptotic-response-profile-recovery}, we obtain
\begin{align}
\label{eq:hardest-user-lower-bound}
R_{\mathrm c}(K,N)\geq\max_{i\in[K]}H_q(f_i\mid S_i).
\end{align}
Combining~\eqref{eq:optimal-rate-response-entropy},~\eqref{eq:response-bound-ordering}, and~\eqref{eq:hardest-user-lower-bound} proves~\eqref{eq:multiletter-converse-chain}.
\end{proof}

Theorem~\ref{theo-lower-bound-mutual-info} showcases that the converse bounds in~\eqref{eq:multiletter-response-entropy-bound} and~\eqref{eq:multiletter-conditional-bound} are tighter than a simple side information-based converse bound, derived from $\{f_i\mid\mathcal{S}_i\}_{i\in[K]}$.

We next derive a graph-theoretic converse based on the clique entropy introduced by Alon and Orlitsky~\cite{alon1996source}. Unlike a product-graph characterization, the result below directly uses the blocklength-$n$ broadcast graph, which is neither an OR nor an AND power of its single-letter counterpart (Definition~\ref{def:blocklength_n_BC}).

We next define cliques, which we subsequently use to derive our converses (Theorem~\ref{thm:zero_error_clique_converse} and Theorem~\ref{thm:vanishing_error_clique_converse}) 

\begin{defi}[Clique~\cite{CarFloJor2008}]
\label{def:clique_graph_theory}
A clique in a graph $G=(\mathcal{V}(G),\mathcal{E}(G))$ is a collection of vertices, where every two are connected. We denote by $\Omega(G)$ the collection of cliques of $G$. 
\end{defi}

\begin{defi}[Clique entropy~\cite{alon1996source}]
\label{def:clique}
For a random variable $X$ distributed over $\mathcal V$ of $G$, the clique entropy is
\begin{align}
\label{eq:clique_entropy}
H_\omega(G,X)\triangleq\max_{\substack{P_{W'|X}:\\
X\in W'\in\Omega(G)}}H_q(X\mid W'),
\end{align}
which is given in base-$q$, where $W'$ is a random clique of $G$ containing the realized vertex $X$.
\end{defi}


We next derive clique entropy-based lower bounds to the broadcast rate, which rely on a necessary condition to describe the broadcast graph, and improves upon a simple converse leveraging the side information of each user. 
We first state the zero-error version of the converse. 
\begin{theo}[Zero-error clique-entropy converse]
\label{thm:zero_error_clique_converse}
For any zero-error computation-broadcast code,
\begin{align}
\label{eq:zero_error_clique_bound}
H_q(M_n)\geq H_\omega\left(G_{\mathrm B}^{(n)}, \mathbf X_{[N]}^n\right).
\end{align}
Thus, the optimal zero-error broadcast rate satisfies
\begin{align}
\label{eq:asymptotic_zero_error_clique_bound}
R_0^*(K,N)\geq\limsup_{n\to\infty}\frac{1}{n}H_\omega\left(G_{\mathrm B}^{(n)},\mathbf X_{[N]}^n\right) .
\end{align}
\end{theo}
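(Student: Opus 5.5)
The plan is to reduce the statement to the classical Alon--Orlitsky fact that any proper coloring of a graph has entropy at least the clique entropy. Fix a zero-error code $(\psi_n,\{g_{i,n}\}_{i\in[K]})$ and set $M_n=\psi_n(\mathbf X_{[N]}^n)$.

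\textbf{Step 1 (the encoder is a proper coloring).} I will show that $\psi_n$ is a proper coloring of $G_{\mathrm B}^{(n)}$ restricted to the support of $P_{\mathbf X_{[N]}^n}$. Suppose two supported sequences $\mathbf x_{[N]}^n,\widetilde{\mathbf x}_{[N]}^n$ are adjacent. By~\eqref{eq:block-broadcast-adjacency} there is a user $i$ with $\mathbf x_{\mathcal A_i}^n=\widetilde{\mathbf x}_{\mathcal A_i}^n$ and $f_i^n(\mathbf x_{[N]}^n)\neq f_i^n(\widetilde{\mathbf x}_{[N]}^n)$. If $\psi_n$ gave both sequences the same value, then $g_{i,n}$ would receive identical inputs in both cases. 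It would therefore output the same estimate, which must be wrong for one of them. Since both sequences have positive probability, this contradicts zero error. Hence adjacent vertices receive distinct messages.

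\textbf{Step 2 (clique--color entropy inequality).} Let $W'$ be any random clique of $G_{\mathrm B}^{(n)}$ containing $\mathbf X_{[N]}^n$, jointly distributed with the source through some $P_{W'\mid \mathbf X_{[N]}^n}$. Given $W'=w'$, the source lies in the clique $w'$. By Step 1, distinct vertices of $w'$ have distinct colors, so $\mathbf X_{[N]}^n$ is a function of $(M_n,W')$. Therefore
\begin{align*}
H_q(\mathbf X_{[N]}^n\mid W')=H_q(\mathbf X_{[N]}^n,M_n\mid W')=H_q(M_n\mid W')\le H_q(M_n),
\end{align*}
where the first equality uses that $M_n$ is a function of $\mathbf X_{[N]}^n$. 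Taking the maximum over admissible $P_{W'\mid\mathbf X_{[N]}^n}$ gives~\eqref{eq:zero_error_clique_bound}.

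\textbf{Step 3 (asymptotics).} Divide by $n$ and take the infimum over zero-error codes of length $n$, which gives $R_{n,0}^*(K,N)\ge \frac1n H_\omega(G_{\mathrm B}^{(n)},\mathbf X_{[N]}^n)$. Then take $\limsup_{n\to\infty}$. An achievable $R_0$ satisfies $\limsup_{n\to\infty} R_{n,0}\le R_0$, so every achievable $R_0$ is at least the right-hand side of~\eqref{eq:asymptotic_zero_error_clique_bound}, and the infimum over achievable rates keeps the bound.

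\textbf{Main obstacle.} The only subtle point is the support. $G_{\mathrm B}^{(n)}$ is defined on all of $\mathcal X_{[N]}^n$, but zero error constrains the encoder only on supported sequences. I will handle this by noting that $W'$ must contain the realized vertex, and by restricting attention to cliques through supported vertices. Unsupported vertices in $w'$ carry zero conditional probability, so they do not affect $H_q(\mathbf X_{[N]}^n\mid W')$, and the injectivity argument of Step 2 only needs to hold on the supported vertices of each clique. Alternatively, one can assume the vertex set is the support, as in Definition~\ref{def-char-graph}.
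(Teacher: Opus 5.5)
Your proposal is correct and follows essentially the same route as the paper: you show that the encoder $\psi_n$ properly colors $G_{\mathrm B}^{(n)}$, deduce that $\mathbf X_{[N]}^n$ is determined by $(M_n,W'_n)$ for any random clique $W'_n$ containing the source, and chain $H_q(\mathbf X_{[N]}^n\mid W'_n)\le H_q(M_n\mid W'_n)\le H_q(M_n)$ before maximizing over $P_{W'_n\mid\mathbf X_{[N]}^n}$ and normalizing. Your explicit handling of zero-probability vertices is a careful refinement that the paper's proof leaves implicit: injectivity of $\psi_n$ is needed only on the supported part of each clique.
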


\begin{proof}[Proof of Theorem~\ref{thm:zero_error_clique_converse}]
Consider an arbitrary blocklength-$n$ zero-error code with encoder
\begin{align}
M_n=\psi_n(\mathbf X_{[N]}^n).
\end{align}
We first show that $\psi_n$ is a proper coloring of $G_{\mathrm B}^{(n)}$.

Consider two adjacent vertices $\mathbf{x}_{[N]}^n$ and $\widetilde{\mathbf{x}}_{[N]}^n$ of $G_{\mathrm B}^{(n)}$. By Definition~\ref{eq:block_broadcast_graph}, there exists $i\in[K]$ such that
\begin{align}
s_i^n(\mathbf{x}_{[N]}^n)=s_i^n(\widetilde{\mathbf{x}}_{[N]}^n)\quad\text{and}\quad f_i^n(\mathbf{x}_{[N]}^n)\neq f_i^n(\widetilde{\mathbf{x}}_{[N]}^n) .
\end{align}
Suppose, by contradiction, that
\begin{align}
\psi_n(\mathbf{x}_{[N]}^n)=\psi_n(\widetilde{\mathbf{x}}_{[N]}^n) .
\end{align}
User $i$ would then receive the same message and possess the same side-information sequence under both source realizations. Its decoder would thus produce the same reconstruction in the two cases, contradicting the zero-error requirement. Hence, $\psi_n(\mathbf{x}_{[N]}^n)\neq\psi_n(\widetilde{\mathbf{x}}_{[N]}^n)$, and $\psi_n$ is a proper coloring of $G_{\mathrm B}^{(n)}$.

Now let $W'_n$ be an arbitrary random clique satisfying
\begin{align}
\mathbf X_{[N]}^n\in W'_n\in\Omega(G_{\mathrm B}^{(n)}) .
\end{align}
Because $\psi_n$ is a proper coloring and every pair of distinct vertices in $W'_n$ is adjacent, $\psi_n$ is one-to-one when restricted to $W'_n$. Hence, given $(M_n,W'_n)$, the source sequence $\mathbf X_{[N]}^n$ is uniquely determined:
\begin{align}
H_q(\mathbf X_{[N]}^n\mid M_n,W'_n)=0.
\end{align}
It follows that
\begin{align}
\label{eq:clique_entropy_intermediate}
H_q(\mathbf X_{[N]}^n\mid W'_n)=I_q(\mathbf X_{[N]}^n;M_n\mid W'_n)\leq H_q(M_n\mid W'_n)\leq H_q(M_n) .
\end{align}
Maximizing the left-hand side of~\eqref{eq:clique_entropy_intermediate} over all 
$P_{W'_n|\mathbf X_{[N]}^n}$ satisfying $\mathbf X_{[N]}^n\in W'_n\in\Omega(G_{\mathrm B}^{(n)})$ gives
\begin{align}
H_q(M_n)\geq H_\omega \left(G_{\mathrm B}^{(n)},\mathbf X_{[N]}^n \right) .
\end{align}
Dividing by $n$ and taking the asymptotic limit yields~\eqref{eq:asymptotic_zero_error_clique_bound}.
\end{proof}

\begin{remark}
For zero-error codes, the preimage of every message $M_n$ is an independent set of $G_{\mathrm B}^{(n)}$. Hence, $M_n$ is a proper coloring, and
\begin{align}
\label{eq:ordering_of_various_graph_quantities}
H_q(M_n)\geq H_\chi\left(G_{\mathrm B}^{(n)},{\bf X}_{[N]}^n\right)\geq H_\kappa\left(G_{\mathrm B}^{(n)},\mathbf X_{[N]}^n\right)\geq H_\omega\left(G_{\mathrm B}^{(n)},{\bf X}_{[N]}^n\right) 
\end{align}
which applies to a deterministic zero-error code (cf.~\eqref{eq:asymptotic-zero-error-broadcast-entropy}). 
K\"orner’s graph entropy in~\eqref{eq:ordering_of_various_graph_quantities} is an asymptotic quantity given by $H_\kappa\left(G_{\mathrm B}^{(n)},{\bf X}_{[N]}^n\right)= \lim\limits_{m\to\infty} \frac1m H_\chi\left((G_{\mathrm B}^{(n)})^{\dot\vee m},\mathbf (X_{[N]}^n)^m\right)$ (cf.~\eqref{eq:graph_entropy}), where $(G_{\mathrm B}^{(n)})^{\dot \vee m}\neq G_{\mathrm B}^{(nm)}$ in general. The OR power allows the user witnessing confusability to vary across the $m$ blocks, whereas $G_{\mathrm B}^{(nm)}$ requires one fixed user to be confused over the complete length-$nm$ block. 
However, in the asymptotically lossless setting, detailed below, the corresponding graph quantities must be evaluated on the high-probability correct-decoding subset.
\end{remark}

We next extend the preceding argument to asymptotically lossless
reconstruction. To this end, let $G_{\mathrm B}^{(n)}[\mathcal T_n]$ denote the subgraph of $G_{\mathrm B}^{(n)}$ induced by a set
$\mathcal T_n\subseteq\mathcal V^n\subseteq\mathcal X_{[N]}^n$. Define
\begin{align}
\label{eq:epsilon_clique_entropy}
L_n(\epsilon)
\triangleq
\inf_{\substack{\mathcal T_n\subseteq\mathcal V^n:\\
\mathbb{P}(\mathbf X_{[N]}^n\in\mathcal T_n)\geq1-\epsilon}}H_\omega\left(G_{\mathrm B}^{(n)}[\mathcal T_n],
P_{\mathbf X_{[N]}^n\mid\mathbf X_{[N]}^n\in\mathcal T_n}\right) .
\end{align}

\begin{theo}[Vanishing-error clique-entropy converse]
\label{thm:vanishing_error_clique_converse}
The optimal asymptotically lossless computation-broadcast rate satisfies
\begin{align}
\label{eq:vanishing_error_clique_bound}
R^*(K,N)\geq\sup_{0<\epsilon<1}(1-\epsilon)
\limsup_{n\to\infty} \frac{1}{n}L_n(\epsilon) .
\end{align}
In particular, whenever the corresponding limit as $\epsilon\downarrow0$ exists,
\begin{align}
\label{eq:vanishing_error_clique_bound_limit}
R^*(K,N)\geq\lim_{\epsilon\downarrow0}(1-\epsilon)\limsup_{n\to\infty}\frac{1}{n}L_n(\epsilon) .
\end{align}
%
\end{theo}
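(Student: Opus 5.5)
The plan is to reduce the vanishing-error case to the zero-error argument of Theorem~\ref{thm:zero_error_clique_converse}, restricted to the set of correctly decoded source sequences. Fix an achievable sequence of deterministic codes $(\psi_n,\{g_{i,n}\})$ and any $\epsilon\in(0,1)$. Since $P_{\mathrm e}^{(n)}\to0$, we have $P_{\mathrm e}^{(n)}\leq\epsilon$ for all sufficiently large $n$. Define the correct-decoding set
\begin{align}
\mathcal T_n\triangleq\bigl\{\mathbf x_{[N]}^n\in\mathcal V^n:\ g_{i,n}(\psi_n(\mathbf x_{[N]}^n),s_i^n(\mathbf x_{[N]}^n))=f_i^n(\mathbf x_{[N]}^n),\ \forall i\in[K]\bigr\}.
\end{align}
Then $\mathbb P(\mathbf X_{[N]}^n\in\mathcal T_n)=1-P_{\mathrm e}^{(n)}\geq1-\epsilon$, so $\mathcal T_n$ is feasible in~\eqref{eq:epsilon_clique_entropy}.

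\textbf{Step 1: $\psi_n$ properly colors the induced subgraph.} We show that $\psi_n$ restricted to $\mathcal T_n$ is a proper coloring of $G_{\mathrm B}^{(n)}[\mathcal T_n]$. This is the argument in the proof of Theorem~\ref{thm:zero_error_clique_converse} verbatim. If two adjacent vertices of $\mathcal T_n$ shared a message, the witnessing user $i$ would see identical inputs $(m,s_i^n)$. It would therefore output the same reconstruction, yet both vertices are correctly decoded with $f_i^n$ values that differ, a contradiction.

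\textbf{Step 2: clique bound under the conditional law.} Let $E\triangleq\mathbf 1\{\mathbf X_{[N]}^n\in\mathcal T_n\}$, and take any random clique $W'_n$ of $G_{\mathrm B}^{(n)}[\mathcal T_n]$ containing $\mathbf X_{[N]}^n$ under $P_{\mathbf X_{[N]}^n\mid E=1}$. Injectivity of $\psi_n$ on cliques gives $H_q(\mathbf X_{[N]}^n\mid W'_n,E=1)\leq H_q(M_n\mid E=1)$. Maximizing over $W'_n$ yields $H_q(M_n\mid E=1)\geq H_\omega(G_{\mathrm B}^{(n)}[\mathcal T_n],P_{\mathbf X_{[N]}^n\mid E=1})\geq L_n(\epsilon)$.

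\textbf{Step 3: return to the unconditional entropy.} We use
\begin{align}
H_q(M_n)\geq H_q(M_n\mid E)\geq \mathbb P(E=1)H_q(M_n\mid E=1)\geq(1-\epsilon)L_n(\epsilon),
\end{align}
which explains the factor $(1-\epsilon)$ in~\eqref{eq:vanishing_error_clique_bound}. It is worth double-checking that $\mathbb P(E=1)\geq1-\epsilon$ and not merely $1-P_{\mathrm e}^{(n)}$; this holds since $P_{\mathrm e}^{(n)}\le\epsilon$ eventually.

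\textbf{Step 4: asymptotics.} Dividing by $n$ and using $\limsup_n\frac1nH_q(M_n)\leq R$ for an achievable $R$, we get $R\geq(1-\epsilon)\limsup_n\frac1nL_n(\epsilon)$. Taking the supremum over $\epsilon$ and the infimum over achievable $R$ gives~\eqref{eq:vanishing_error_clique_bound}. The limit form~\eqref{eq:vanishing_error_clique_bound_limit} then follows, since a limit, when it exists, is bounded by the supremum.

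\textbf{Main obstacle.} The delicate point is Step 2. The clique entropy must be taken under the conditional distribution on $\mathcal T_n$, and the conditioning on $E=1$ must be handled consistently with the message entropy. The fact that $\psi_n$ is deterministic ensures that $M_n$ given $E=1$ is still a function of the source, so the zero-error chain~\eqref{eq:clique_entropy_intermediate} carries over unchanged.
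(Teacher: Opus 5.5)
Your proposal is correct and follows the paper's proof essentially step for step: the correct-decoding set $\mathcal T_n$, the proper-coloring argument on $G_{\mathrm B}^{(n)}[\mathcal T_n]$, the zero-error clique bound applied under the conditional law, the decomposition $H_q(M_n)\geq H_q(M_n\mid E_n)\geq \mathbb{P}(E_n=1)\,H_q(M_n\mid E_n=1)$, and the same asymptotic passage. Your choice to define $\mathcal T_n$ inside $\mathcal V^n$ is, if anything, slightly more careful than the paper's, since it makes feasibility in the infimum defining $L_n(\epsilon)$ immediate.
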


\begin{proof}[Proof of Theorem~\ref{thm:vanishing_error_clique_converse}]
Consider an arbitrary blocklength-$n$ code with encoder $\psi_n$ and decoders $\{g_{i,n}\}_{i\in[K]}$. Define its high-probability source set on which all users decode correctly as
\begin{align}
\label{eq:good_source_set}
\mathcal T_n \triangleq \bigl\{\mathbf{x}_{[N]}^n:\ g_{i,n}\bigl(\psi_n(\mathbf{x}_{[N]}^n),s_i^n(\mathbf{x}_{[N]}^n)\bigr)=f_i^n(\mathbf{x}_{[N]}^n),\quad\forall i\in[K]\bigr\}.
\end{align}
By the definition of the block error probability, $P_{\mathbf X_{[N]}^n}(\mathcal T_n)=1-P_{\mathrm e}^{(n)}$.

We claim that the restriction of $\psi_n$ to $\mathcal T_n$ is a proper coloring of the induced graph $G_{\mathrm B}^{(n)}[\mathcal T_n]$. To see this, consider adjacent vertices $\mathbf{x}_{[N]}^n,\widetilde{\mathbf{x}}_{[N]}^n\in\mathcal T_n$. There exists a user $i\in[K]$ such that
\begin{align}
s_i^n(\mathbf{x}_{[N]}^n)=s_i^n(\widetilde{\mathbf{x}}_{[N]}^n)\quad\text{and}\quad f_i^n(\mathbf{x}_{[N]}^n)\neq f_i^n(\widetilde{\mathbf{x}}_{[N]}^n).
\end{align}
If both source sequences produced the same message, user $i$ would receive identical decoder inputs in both cases and hence produce identical reconstructions. The decoder could not then be correct for both source sequences, contradicting their membership in $\mathcal T_n$. Therefore,
\begin{align}
\psi_n(\mathbf{x}_{[N]}^n)\neq\psi_n(\widetilde{\mathbf{x}}_{[N]}^n).
\end{align}

Let $E_n\triangleq\mathbbm{1}\{\mathbf X_{[N]}^n\in\mathcal T_n\}$ and let $p_n\triangleq\mathbb{P}(E_n=1)=1-P_{\mathrm e}^{(n)}$. Conditioned on $E_n=1$, the encoder is a proper coloring of $G_{\mathrm B}^{(n)}[\mathcal T_n]$. Applying Theorem~\ref{thm:zero_error_clique_converse} to the induced graph and the conditional source distribution gives
\begin{align}
\label{eq:conditional_good_set_clique_bound}
H_q(M_n\mid E_n=1)\geq H_\omega\left(G_{\mathrm B}^{(n)}[\mathcal T_n],\mathbf X_{[N]}^n
\,\middle|\,\mathbf X_{[N]}^n\in\mathcal T_n\right) .
\end{align}

Moreover,
\begin{align}
\label{eq:message_entropy_good_set}
H_q(M_n)\geq H_q(M_n\mid E_n)=p_n H_q(M_n\mid E_n=1)+(1-p_n)H_q(M_n\mid E_n=0)\geq p_n H_q(M_n\mid E_n=1).
\end{align}
Combining~\eqref{eq:conditional_good_set_clique_bound} and~\eqref{eq:message_entropy_good_set} yields
\begin{align}
\label{eq:finite_block_clique_converse}
H_q(M_n)\geq (1-P_{\mathrm e}^{(n)})H_\omega\left(G_{\mathrm B}^{(n)}[\mathcal T_n],\mathbf X_{[N]}^n\,\middle|\, \mathbf X_{[N]}^n\in\mathcal T_n\right)\geq (1-P_{\mathrm e}^{(n)})L_n(P_{\mathrm e}^{(n)}) .
\end{align}

Fix any $\epsilon\in(0,1)$. For any achievable sequence of codes, $P_{\mathrm e}^{(n)}\to0$, and hence $P_{\mathrm e}^{(n)}\leq\epsilon$ for all sufficiently large $n$. The corresponding set $\mathcal T_n$ then satisfies $P_{\mathbf X_{[N]}^n}(\mathcal T_n)\geq1-\epsilon$. 
Thus, for all sufficiently large $n$,
\begin{align}
\label{eq:fixed_epsilon_clique_bound}
R_n(K,N)=\frac1nH_q(M_n)\geq\frac{1-\epsilon}{n}L_n(\epsilon) .
\end{align}
Taking $\limsup\limits_{n\to\infty}$ and using $\limsup\limits_{n\to\infty}R_n(K,N)\leq R$ for every achievable rate $R$, we obtain for every $\epsilon\in(0,1)$
\begin{align}
R\geq(1-\epsilon)\limsup_{n\to\infty}\frac1nL_n(\epsilon) .
\end{align}
Taking the supremum over $\epsilon$ and then the infimum over all achievable rates proves~\eqref{eq:vanishing_error_clique_bound}.
\end{proof}

For $\epsilon\in(0,1/2)$, let $L_n(\epsilon)$ be as defined in~\eqref{eq:epsilon_clique_entropy}. Combining the clique-entropy formulation of~\cite{alon1996source} with a standard entropy decomposition over a high-probability event~\cite{CovTho1991} yields the following finite-blocklength smoothing bound. After normalization and letting $\epsilon\downarrow0$, the bound shows that restriction to high-probability induced subgraphs does not change the regularized clique-entropy converse.

\begin{lem}[Smoothing of clique entropy]
\label{lem:smoothing_lemma}
For each blocklength $n\in\mathbb N$, and $\epsilon\in(0,1/2)$,
\begin{align}
\label{eq:smoothing_bound}
H_\omega\left(G_{\mathrm B}^{(n)},\mathbf X_{[N]}^n\right)-h_q(\epsilon)-\epsilon n\log_q|\mathcal V(G_{\mathrm B})|\leq L_n(\epsilon)\leq H_\omega\left(G_{\mathrm B}^{(n)},\mathbf X_{[N]}^n\right) .
\end{align}
Thus,
\begin{align}
\label{eq:regularized-smoothing-equality}
\lim_{\epsilon\downarrow0}(1-\epsilon)\limsup_{n\to\infty}\frac{1}{n}L_n(\epsilon)=\limsup_{n\to\infty}\frac{1}{n}H_\omega\left(G_{\mathrm B}^{(n)},\mathbf X_{[N]}^n\right) .
\end{align}
\end{lem}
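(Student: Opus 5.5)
The upper bound in~\eqref{eq:smoothing_bound} follows directly from feasibility. The choice $\mathcal T_n=\mathcal V^n$ satisfies $\mathbb{P}(\mathbf X_{[N]}^n\in\mathcal T_n)=1\geq1-\epsilon$. For it, the induced subgraph is $G_{\mathrm B}^{(n)}$ itself and the conditional law is $P_{\mathbf X_{[N]}^n}$. Hence the infimum in~\eqref{eq:epsilon_clique_entropy} is at most $H_\omega(G_{\mathrm B}^{(n)},\mathbf X_{[N]}^n)$.

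For the lower bound, fix an arbitrary feasible $\mathcal T_n$ with $\mathbb{P}(\mathbf X_{[N]}^n\notin\mathcal T_n)\leq\epsilon$, and let $E_n=\mathbbm{1}\{\mathbf X_{[N]}^n\in\mathcal T_n\}$. Take a random clique $W'_n$ attaining (or approaching) the maximum in Definition~\ref{def:clique} for $G_{\mathrm B}^{(n)}$, with $\mathbf X_{[N]}^n\in W'_n\in\Omega(G_{\mathrm B}^{(n)})$. I will decompose via the chain rule:
\begin{align}
H_q(\mathbf X_{[N]}^n\mid W'_n)\leq H_q(E_n)+\mathbb{P}(E_n=1)\,H_q(\mathbf X_{[N]}^n\mid W'_n,E_n=1)+\mathbb{P}(E_n=0)\,H_q(\mathbf X_{[N]}^n\mid W'_n,E_n=0).
\end{align}
The three terms are bounded as follows.
\begin{itemize}
\item The first term satisfies $H_q(E_n)\leq h_q(\epsilon)$. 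This uses $\mathbb{P}(E_n=0)\leq\epsilon<1/2$ and the monotonicity of $h_q$ on $[0,1/2]$; this is where the restriction $\epsilon<1/2$ is needed.
\item In the third term, $\mathbb{P}(E_n=0)\leq\epsilon$ and the conditional entropy is at most $\log_q|\mathcal V^n|=n\log_q|\mathcal V(G_{\mathrm B})|$.
\item The middle term is the key step. Conditioned on $E_n=1$, define $W''_n\triangleq W'_n\cap\mathcal T_n$. This is a clique of the induced subgraph $G_{\mathrm B}^{(n)}[\mathcal T_n]$, since a subset of a clique is a clique and induced subgraphs keep all edges among retained vertices. It also contains the realized $\mathbf X_{[N]}^n$. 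Since $W''_n$ is a function of $W'_n$ (given $\mathcal T_n$), conditioning reduces entropy and gives
\begin{align}
H_q(\mathbf X_{[N]}^n\mid W'_n,E_n=1)\leq H_q(\mathbf X_{[N]}^n\mid W''_n,E_n=1)\leq H_\omega\bigl(G_{\mathrm B}^{(n)}[\mathcal T_n],P_{\mathbf X_{[N]}^n\mid\mathbf X_{[N]}^n\in\mathcal T_n}\bigr).
\end{align}
The last inequality holds because $W''_n$ is feasible in the clique-entropy maximization for the conditional law. Then $\mathbb{P}(E_n=1)\leq1$ drops out.
\end{itemize}
Combining the three bounds and taking the infimum over $\mathcal T_n$ yields the left inequality in~\eqref{eq:smoothing_bound}.

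For~\eqref{eq:regularized-smoothing-equality}, divide~\eqref{eq:smoothing_bound} by $n$ and take $\limsup_{n\to\infty}$; the term $h_q(\epsilon)/n$ vanishes. Writing $\bar H\triangleq\limsup_n\frac1nH_\omega(G_{\mathrm B}^{(n)},\mathbf X_{[N]}^n)$, this sandwiches
\begin{align}
(1-\epsilon)\bigl(\bar H-\epsilon\log_q|\mathcal V(G_{\mathrm B})|\bigr)\leq(1-\epsilon)\limsup_{n\to\infty}\tfrac1nL_n(\epsilon)\leq(1-\epsilon)\bar H .
\end{align}
Here $\bar H\leq\log_q|\mathcal V(G_{\mathrm B})|<\infty$. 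Letting $\epsilon\downarrow0$, both sides converge to $\bar H$, so the limit exists and equals $\bar H$.

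The main obstacle I anticipate is the middle term: I must check that intersecting the optimal clique with $\mathcal T_n$ is admissible for the conditional-law clique entropy. If the maximum in Definition~\ref{def:clique} is not attained, I will work with a near-optimal $W'_n$ and add an arbitrary slack. The rest is routine.
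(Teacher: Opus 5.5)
Your proposal is correct and follows the paper's proof: the upper bound comes from the feasible choice $\mathcal T_n=\mathcal V^n$, and the lower bound comes from decomposing $H_q(\mathbf X_{[N]}^n\mid W'_n)$ according to $E_n$, using $W'_n\cap\mathcal T_n$ as an admissible clique of $G_{\mathrm B}^{(n)}[\mathcal T_n]$ under the conditional law, and charging the event $E_n=0$ with $h_q(\epsilon)+\epsilon n\log_q|\mathcal V(G_{\mathrm B})|$. You also fill in steps the paper only sketches, namely where $\epsilon<1/2$ is used and the explicit sandwich argument that gives the limit in~\eqref{eq:regularized-smoothing-equality}.
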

The equality concerns the regularized clique-entropy lower bounds only. In general $R^*(K,N)\leq R_0^*(K,N)$.

\begin{proof}[Proof of Lemma~\ref{lem:smoothing_lemma}]
The result follows from a standard entropy-continuity argument. For each blocklength $n\in\mathbb N$, define $b_n\triangleq H_\omega\left(G_{\mathrm B}^{(n)},\mathbf X_{[N]}^n\right)$. Let $Z_n$ be a clique-valued auxiliary random variable achieving $b_n$, and let $E_n=\mathbbm 1\{\mathbf X_{[N]}^n\in\mathcal T_n\}$. Conditioned on $E_n=1$, the random set $Z_n\cap\mathcal T_n$ remains a clique of the induced graph $G_{\mathrm B}^{(n)}[\mathcal T_n]$ and contains $\mathbf X_{[N]}^n$~\cite{alon1996source,CovTho1991}. Decomposing $H_q(\mathbf X_{[N]}^n\mid Z_n)$ according to $E_n$ and bounding the contribution of the event $E_n=0$ by $h_q(\epsilon)+\epsilon n\log_q|\mathcal V(G_{\mathrm B})|$ gives
\begin{align}
b_n-h_q(\epsilon)-\epsilon n\log_q|\mathcal V(G_{\mathrm B})|\leq L_n(\epsilon)\leq b_n.
\end{align}
The upper bound follows by choosing $\mathcal T_n=\mathcal V^n(G_{\mathrm B})$.
\end{proof}

For multi-user NFCB, due to~\eqref{nonequality_BroadcastBlockGraph_OR_AND}, the exact operational block graph $G_{\mathrm B}^{(n)}$ does not, in general, coincide with the length-$n$ OR power graph $G_{\mathrm B}^{\dot \vee n}$. Thus, the graph-entropy tensorization results do not apply here. Lemma~\ref{lem:smoothing_lemma} nevertheless eliminates the additional smoothing optimization. Since $L_n(\epsilon)$ is nonincreasing in $\epsilon$, it follows that
\begin{align}
\sup_{0<\epsilon<1/2}(1-\epsilon)\limsup_{n\to\infty}\frac{1}{n}L_n(\epsilon)=\lim_{\epsilon\downarrow0}(1-\epsilon)\limsup_{n\to\infty}\frac{1}{n}L_n(\epsilon)=\limsup_{n\to\infty}\frac{1}{n}H_\omega\left(G_{\mathrm B}^{(n)},\mathbf X_{[N]}^n\right) .
\end{align}
Thus, the asymptotically lossless and zero-error settings have the same regularized clique-entropy converse. However, the right-hand side remains a genuinely multiletter quantity. Therefore, obtaining a single-letter characterization of this regularized clique entropy remains an open problem.

We proceed to exploit the genie-aided approach introduced in~\cite{jafar2006capacity} and later used in~\cite{sun2018capacity} to provide a lower bound on the sum rate for a $(2,2,\{X_1, X_2\},\{f_1,f_2\})$ NFCB model. For this model, consider user~$i=1$ with side information $\mathcal{S}_1=X_1$ that needs $H_q(f_1\mid X_1)$ bits to recover $f_1$. Once user~$i=1$ recovers $f_1$, a genie provides all information known to user~$i=1$, i.e., $(f_1, \mathcal{S}_1)$, to user~$i=2$. User~$i=2$, with side information $\mathcal{S}_2$ and the additional information provided by the genie, then requires $H_q(f_2\mid \mathcal{S}_1,\mathcal{S}_2, f_1)$ additional bits to recover their requested function $f_2$. Hence, the sender node needs to broadcast extra $H_q(f_2\mid X_1, X_2, f_1)$ bits for user~$i=2$ to recover function $f_2$. Swapping the users and applying the same method, it is clear that without the help of a genie, we require a higher broadcast rate.   
As a result, the required sum rate is lower bounded as~\cite{jafar2006capacity}:
\begin{align}
 \label{eq-genie-2-normal}
 R^*(K,N)\geq \max [H_q(f_1\mid X_1) + H_q(f_2 \mid X_2, X_1, f_1),\,H_q(f_2\mid X_2) + H_q(f_1 \mid X_1, X_2, f_2)] \ .
\end{align}

We next generalize the genie-aided lower bound on the sum rate for the $N=K=2$ case in~\cite{jafar2006capacity}, \cite{sun2019capacity} to our $K$-user model and compare it to the lower bound in Theorem~\ref{theo-lower-bound-mutual-info}. This lower bound is given by the maximum over all such rates across possible permutations, as formalized by the following result.



\begin{prop}
\label{prop-genie}   
Let $(K,N,\{\mathcal{S}_i\}_{i\in[K]},\{f_i\}_{i\in[K]})$ be a NFCB model. For a given 
$\pi \in \mathrm{Perm}([K])$, we assume that a genie reveals to each user $\pi(i)$, for $i\geq 2$, the following:
\begin{itemize}
    \item the side information sets $\{\mathcal{S}_{\pi(i')}\}_{i' \in [i-1]}$, and  
    \item the demanded functions $\{f_{\pi(i')}\}_{ i' \in [i-1]}$ decoded by users $\pi(1), \dots, \pi(i-1)$.  
\end{itemize} 

Then, the broadcast rate is lower-bounded as
\begin{align}
\label{eq-genie-aided-rate}
R^*(K,N) \geq \max_{\pi \in \mathrm{Perm}([K])} \left [H_q(f_{\pi(1)}\mid \mathcal{S}_{\pi(1)})+\sum_{i=2}^{K} H_q\big( f_{\pi(i)} \mid \{f_{\pi(i')},\mathcal{S}_{\pi(i')}\}_{i' \in[i-1]},\ \mathcal{S}_{\pi(i)} \big)\right] \ .
\end{align}
\end{prop}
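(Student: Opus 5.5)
Fix an arbitrary permutation $\pi\in\mathrm{Perm}([K])$ and an achievable sequence of codes $(\psi_n,\{g_{i,n}\})$ with $P_{\mathrm e}^{(n)}\to0$. Since the maximum in~\eqref{eq-genie-aided-rate} is over a finite set, it suffices to prove the bracketed bound for each fixed $\pi$. Write $\mathcal S_{\pi([i])}^n\triangleq\{\mathcal S_{\pi(i')}^n\}_{i'\in[i]}$ and $f_{\pi([i])}^n\triangleq\{f_{\pi(i')}^n\}_{i'\in[i]}$. The plan is to peel $H_q(M_n)$ by the chain rule along the order $\pi$, in the same spirit as the two-user argument behind~\eqref{eq-genie-2-normal}. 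Giving more side information to later users can only help them decode, so the genie does not weaken the converse.

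\textbf{Step 1 (chain rule along $\pi$).} Start from
\[
H_q(M_n)\geq I_q\bigl(M_n; f_{\pi([K])}^n \,\big|\, \mathcal S_{\pi([K])}^n\bigr)
\]
and expand the right-hand side with the chain rule over $i\in[K]$. The $i$-th term is $I_q\bigl(M_n;f_{\pi(i)}^n\mid f_{\pi([i-1])}^n,\mathcal S_{\pi([K])}^n\bigr)$. Conditioning on all of $\mathcal S_{\pi([K])}^n$, rather than only $\mathcal S_{\pi([i])}^n$, is too much: the $i$-th term would then only yield $H_q(f_{\pi(i)}\mid f_{\pi([i-1])},\mathcal S_{\pi([K])})$, which is smaller than the target. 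So I will instead build the chain by alternately adding side information and demands:
\[
H_q(M_n)\geq H_q(M_n\mid \mathcal S_{\pi(1)}^n)\geq I_q(M_n;f_{\pi(1)}^n\mid\mathcal S_{\pi(1)}^n)+H_q(M_n\mid f_{\pi(1)}^n,\mathcal S_{\pi(1)}^n).
\]
Then use $H_q(M_n\mid f_{\pi([i-1])}^n,\mathcal S_{\pi([i-1])}^n)\geq H_q(M_n\mid f_{\pi([i-1])}^n,\mathcal S_{\pi([i])}^n)$, since conditioning reduces entropy, and split off $I_q\bigl(M_n;f_{\pi(i)}^n\mid f_{\pi([i-1])}^n,\mathcal S_{\pi([i])}^n\bigr)$. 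Iterating gives
\[
H_q(M_n)\geq\sum_{i=1}^K I_q\bigl(M_n;f_{\pi(i)}^n\mid f_{\pi([i-1])}^n,\mathcal S_{\pi([i])}^n\bigr).
\]

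\textbf{Step 2 (Fano).} Each term equals $H_q(f_{\pi(i)}^n\mid f_{\pi([i-1])}^n,\mathcal S_{\pi([i])}^n)-H_q(f_{\pi(i)}^n\mid M_n,f_{\pi([i-1])}^n,\mathcal S_{\pi([i])}^n)$. The subtracted term is at most $H_q(f_{\pi(i)}^n\mid M_n,\mathcal S_{\pi(i)}^n)$, because conditioning reduces entropy. Fano's inequality, as in the proof of Theorem~\ref{theo-lower-bound-mutual-info}, bounds this by $h_q(P_{\mathrm e}^{(n)})+P_{\mathrm e}^{(n)} n\log_q|\mathcal F_{\pi(i)}|=o(n)$. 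This is the formal counterpart of the genie: the genie's extra information never hurts decoding.

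\textbf{Step 3 (single-letterization) and conclusion.} The tuple $\bigl(f_{\pi([i])},\mathcal S_{\pi([i])}\bigr)$ is a coordinatewise function of the i.i.d. source. Its $n$-letter version is therefore i.i.d. across $t$, and
\[
H_q(f_{\pi(i)}^n\mid f_{\pi([i-1])}^n,\mathcal S_{\pi([i])}^n)=nH_q(f_{\pi(i)}\mid \{f_{\pi(i')},\mathcal S_{\pi(i')}\}_{i'\in[i-1]},\mathcal S_{\pi(i)}).
\]
Dividing by $n$ and taking $\limsup_{n\to\infty}$ with $\frac1nH_q(M_n)\to R$ gives the bracketed bound for this $\pi$. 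Maximizing over $\pi$ then completes the proof.

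I expect the main obstacle to be Step 1, specifically getting the conditioning order right so that each term retains only $\mathcal S_{\pi([i])}$ and not all of the side information. The interleaved chain rule sketched above, adding $\mathcal S_{\pi(i)}^n$ only at stage $i$, is the concrete device I would rely on.
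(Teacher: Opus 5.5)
Your proof is correct and follows the same genie-aided peeling along the order $\pi$ as the paper. The paper's proof only asserts that user $\pi(i)$ must additionally receive at least $H_q\bigl(f_{\pi(i)}\mid\{f_{\pi(i')},\mathcal S_{\pi(i')}\}_{i'\in[i-1]},\mathcal S_{\pi(i)}\bigr)$ and that these requirements add up; your interleaved chain rule (adding $\mathcal S_{\pi(i)}^n$ only at stage $i$), combined with Fano's inequality and i.i.d.\ single-letterization, supplies the rigorous justification of that additivity. The only minor fix is that achievability gives $\limsup_{n\to\infty}\frac1n H_q(M_n)\leq R$ rather than convergence to $R$, which suffices because your lower bound holds for every sufficiently large $n$.
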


\begin{proof}[Proof of Proposition~\ref{prop-genie}]
Given a user permutation $\pi$, the broadcast rate lower bound in~\eqref{eq-genie-aided-rate} is achieved using a genie-aided scheme, where each user $\pi(i)\in[K]$ decodes $f_{\pi(i)}$ using their own side information $\mathcal{S}_{\pi(i)}$ and the data $\bigcup_{i' \leq i} \mathcal{S}_{\pi(i')}$ of all preceding users, according to the permutation $\pi$. 

With this genie-aided scheme, for a given $i\in[K]$, user $\pi(1)$ must receive at least $H_q( f_{\pi(1)} \mid \mathcal{S}_{\pi(1)})$ bits, and user $\pi(i)$, with $i\geq 2$, at least  $H_q\big( f_{\pi(i)} \mid \{f_{\pi(i')},\mathcal{S}_{\pi(i')}\}_{i' \in[i-1]},\ \mathcal{S}_{\pi(i)} \big)$ bits to recover their requested functions, respectively. The total broadcast rate of the sender given $\pi$ is lower bounded as
 \begin{align}
 \label{eq-genie-aided-rate-1}
     H_q(f_{\pi(1)}\mid \mathcal{S}_{\pi(1)})+\sum_{i=2}^{K} H_q\big( f_{\pi(i)} \mid \{f_{\pi(i')},\mathcal{S}_{\pi(i')}\}_{i' \in[i-1]},\ \mathcal{S}_{\pi(i)} \big) \ \mbox{ bits.}
 \end{align}
The final result follows from considering all possible permutations $\mathrm{Perm}([K])$.
\end{proof}

While Theorem~\ref{theo-lower-bound-mutual-info} gives a rate lower bound on the broadcast rate without assuming any interaction between the $K$ users, Proposition~\ref{prop-genie} is derived by incorporating such interactions.

\section{Additive Gap between the Achievable Rate and the Genie-Aided Bound}
\label{sec:optimality_gap}

Given a $(K,N,\{\mathcal{S}_i\}_{i\in[K]},\{f_i\}_{i\in[K]})$ NFCB model where $N\geq K\geq 2$, in this section, we establish an upper bound on the average additive gap between 
Corollary~\ref{cor:side-information-assisted-graph-bound} and Proposition~\ref{prop-genie}, denoted by $\Delta(K,N)$. 
For the most general setting, characterization of this gap is non-trivial. To this end, we characterize the gap, assuming $N=K$ and $\mathcal{S}_i=X_i$ for all $i\in[K]$. 
Before proceeding to bound $\Delta$, we focus on several special cases, for which there exist coding schemes (not restricted to Theorem~\ref{theo-general-achievable-rate-via-broadcast} or Corollary~\ref{cor:side-information-assisted-graph-bound}) that achieve the optimal broadcast rate $R^*(K,N)$ given in~\eqref{eq:comm_cost_optimal}. 
We determine the optimum broadcast computation rate for compatible functions or Boolean functions with $N=K=2$ (Proposition~\ref{prop-2-user-rate_optimality}). 
We subsequently compute the worst-case gap $\Delta_{\mathcal{N}}(K,N)$ and upper bound the average additive gap $\Delta(K,N)$ for $K=N=3$ and $K=N=4$ (Propositions~\ref{prop-3-user-gap-genie} and~\ref{prop-4-user-gap-genie}, respectively).

\paragraph{Linear computation broadcast with linear encoding}
\label{par:linear}
For a $(K,N,\{{\bf X}_{[N]}^{\intercal}{\bf V}'_i\}_{i\in[K]},\{{\bf X}_{[N]}^{\intercal}{\bf V}_i\}_{i\in[K]})$ linear computation broadcast scheme, the MinRank formulation in Section~\ref{sec:MinRank_ach} (cf.~\eqref{eq-min-rank-linear-demand}) is optimal under linear encoding/decoding, i.e., $\Delta(K,N)=0$ for all $K\geq 2$.

\paragraph{Recovering all datasets}
\label{par:data}
For a $(K,N,\{\mathcal{S}_i\}_{i\in[K]},\{f_i\}_{i\in[K]})$ computation broadcast scheme, where $\mathcal{S}_i\subset {\bf X}_{[N]}$, and each user $i\in[K]$ requests $f_i=(X_j \ : \ X_j\notin \mathcal{S}_i \ , \ \mathcal{S}_i\subset {\bf X}_{[N]})$, the minimum sum-rate~\eqref{eq:comm_cost_optimal} required for asymptotically lossless encoding of the dataset vector ${\bf X}_{[N]}$ for each user $i\in[K]$ to jointly decode ${\bf X}_{[N]}$ given its available side information $\mathcal{S}_i$ is achieved by the optimal coding scheme of Slepian and Wolf in~\cite{SlepWolf1973}, and the optimal broadcast rate accepts the following closed-form solution:
\begin{align}
\label{eq:RSW_definition}
R^*(K,N)=R_{\mathrm{SW}}(K,N)=\max\limits_{i\in[K]}\  H_q({\bf X}_{[N]}\mid \mathcal{S}_i) \ .
\end{align}

\paragraph{Complementary side information with $K=N=2$} 
\label{par:complementary}
For a $(2,2,\{\mathcal{S}_i\}_{i\in[2]},\{f_i\}_{i\in[2]})$ NFCB scheme with complementary side information, i.e., $\mathcal{S}_1=X_1$ and $\mathcal{S}_2=X_2$, we next demonstrate that the optimal broadcast rate (proven in~\cite[Theorem~3]{ravi2016broadcast}) can be achieved by our proposed scheme (see Theorem~\ref{theo-general-achievable-rate-via-broadcast}) when either (i) $f_1$, $f_2$ are compatible functions, or (ii) $X_1$, $X_2$ are binary random variables.

\begin{prop}
\label{prop-2-user-rate_optimality}
Given a $(2,2,\{X_1,X_2\},\{f_1,f_2\})$ NFCB model, the rate
\begin{align}
\label{eq-2-user-gap}
H_\kappa(G_{\mathrm B},\mathbf X_{[2]})=R^*(2,2)=\max\ (H_q(f_1\mid X_1),H_q(f_2\mid X_2))  \,\mbox{bits} 
\end{align} 
is optimum when $f_1$, $f_2$ are compatible functions or when $q=2$.
\end{prop}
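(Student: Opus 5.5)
The plan is to sandwich $R^*(2,2)$ between a converse and an achievable rate that coincide. The converse is immediate from Theorem~\ref{theo-lower-bound-mutual-info}: with $\mathcal S_1=X_1$ and $\mathcal S_2=X_2$, the chain in \eqref{eq:multiletter-converse-chain} gives
$R^*(2,2)\geq\max(H_q(f_1\mid X_1),H_q(f_2\mid X_2))$.
So the work is entirely on achievability: exhibit a feasible coloring variable $W$ whose side-information-assisted rate $\max_i H_q(W\mid X_i)$ equals this maximum, and then invoke Corollary~\ref{cor:side-information-assisted-graph-bound}.

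For the compatible case, $K=N=2$ and $\mathcal S_i=X_i$ is exactly the setting of Proposition~\ref{prop_achievable-rate-via-broadcast_compatible}. That proposition already yields the full chain \eqref{eq:prop_achievable-rate-via-broadcast_compatible}, including the identification with $H_\kappa(G_{\mathrm B},\mathbf X_{[2]})$, so this case is a direct specialization.

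For $q=2$, I would first describe $G_{\mathrm B}$ explicitly on the four vertices of $\{0,1\}^2$.
\begin{itemize}
\item $G_{f_1}$ can only contain the two ``vertical'' edges $\{(x_1,0),(x_1,1)\}$, $x_1\in\{0,1\}$, present exactly when $f_1(x_1,0)\neq f_1(x_1,1)$.
\item $G_{f_2}$ can only contain the two ``horizontal'' edges $\{(0,x_2),(1,x_2)\}$.
\item Hence $G_{\mathrm B}$ is a subgraph of the $4$-cycle $(0,0)-(0,1)-(1,1)-(1,0)-(0,0)$, and each edge belongs to exactly one user.
\end{itemize}

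Next I would check when a single $f$ reproduces this edge pattern. A Boolean $f$ on $\{0,1\}^2$ induces a rook's-graph edge set consisting of the cycle edges along which $f$ changes value. Going around the cycle, $f$ returns to its start, so the number of such edges must be even. Conversely, any even subset of the $4$-cycle edges (sizes $0$, $2$ or $4$) is realized by some $f$: a constant, a suitably chosen single variable or AND-type function, or $x_1\oplus x_2$. Therefore, whenever $|\mathcal E(G_{\mathrm B})|$ is even, $f_1$ and $f_2$ are compatible in the sense of Definition~\ref{def:compatible_functions}, and the first case applies.

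The remaining subcases are $|\mathcal E(G_{\mathrm B})|\in\{1,3\}$. Here $G_{\mathrm B}$ is a disjoint union of a single edge and isolated vertices, or a path $P_4$. In either subcase I would pick an explicit deterministic coloring $W$, namely a $2$-coloring along the bipartition possibly merged with isolated vertices. Then I would verify directly, as in \cite{ravi2016broadcast}, that $H_q(W\mid X_1)=H_q(f_1\mid X_1)$ and $H_q(W\mid X_2)\le H_q(f_2\mid X_2)$ (or the symmetric pair). The key fact is that, conditioned on $X_i=x_i$, $W$ splits the two remaining vertices exactly when $f_i$ does, except on the one fiber where the missing edge lets us merge colors. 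Feeding this $W$ into Corollary~\ref{cor:side-information-assisted-graph-bound} gives the matching achievable rate.

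The main obstacle I expect is the odd-edge (path) case, and in particular the claimed equality with the unconditional $H_\kappa(G_{\mathrm B},\mathbf X_{[2]})$ rather than merely the side-information-assisted quantity. For $P_4$, Körner's entropy involves maximal independent sets that overlap, and it need not equal a conditional entropy for a general correlated $P_{X_1X_2}$. I would first try to establish the equality $R^*=\max_i H_q(f_i\mid X_i)$ via the conditional scheme. If the $H_\kappa$ identity resists, I would restrict its statement to the regime implicit in Proposition~\ref{prop_achievable-rate-via-broadcast_compatible}.
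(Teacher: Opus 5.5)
Your converse, your compatible case, and your even-edge reduction match the paper's proof and are sound. The problems are in the two odd-edge cases.

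\textbf{The single edge.} The genuine gap is the single-edge case, say $\mathcal E(G_{\mathrm B})=\{\{(0,0),(0,1)\}\}$. Here no deterministic coloring fed into Corollary~\ref{cor:side-information-assisted-graph-bound} can do the job. The property you invoke, that $W$ splits a fiber $\{X_i=x_i\}$ exactly when $f_i$ does, says precisely that $W$ is a compatible function. The single edge is the one pattern no function reproduces: going around the $4$-cycle, any $W$ with $W(0,0)\neq W(0,1)$ must also separate one of the other three pairs, and none of them is an edge.

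Your weaker target $\max_i H_q(W\mid X_i)=\max_i H_q(f_i\mid X_i)$ also fails. Take $f_1=(1\oplus X_1)X_2$, $f_2=X_2$, $P(0,0)=P(0,1)=\epsilon$, $P(1,0)=P(1,1)=\tfrac12-\epsilon$ with $\epsilon\in(0,\tfrac14)$. Then $\max_i H_2(f_i\mid X_i)=2\epsilon$. Any deterministic coloring either splits the fiber $X_1=1$, giving $H_2(W\mid X_1)=1$, or it does not. In the latter case it splits some fiber of $X_2$, and $H_2(W\mid X_2)\ge\tfrac12 h(2\epsilon)>2\epsilon$.

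The paper disposes of this case directly. $G_{f_2}$ is edgeless, so $f_2$ depends only on $X_2$ and user~2 needs nothing. User~1 is then served at rate $H_q(f_1\mid X_1)$ by Slepian--Wolf coding $f_1^n$ against $X_1^n$. To stay within the Corollary you would instead need a randomized $W$:
\begin{itemize}
\item map $(0,0)$ and $(0,1)$ to the two MISs containing them;
\item map $(1,x_2)$ to the MIS of $(0,0)$ with probability $P(X_2=0\mid X_1=0)$, independently of $x_2$.
\end{itemize}
Then $W$ is independent of $X_1$, $I_q(\mathbf X_{[2]};W)=H_q(f_1\mid X_1)$, and $I_q(\mathbf X_{[2]};W\mid X_i)\le I_q(\mathbf X_{[2]};W)$ for both $i$.

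\textbf{The path.} This case is mishandled too, but fixably. Your parity argument only excludes Boolean $f$. Definition~\ref{def:compatible_functions}, and the proof of Proposition~\ref{prop_achievable-rate-via-broadcast_compatible}, allow any range, and a function on the $4$-cycle can change value along three edges.

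Take the path with edges $\{(1,0),(1,1)\}$, $\{(1,1),(0,1)\}$, $\{(0,1),(0,0)\}$. Your bipartition coloring is $X_1\oplus X_2$. It separates $(0,0)$ from $(1,0)$, the pair of the missing edge, so $H_q(W\mid X_2)=H_q(X_1\mid X_2)$, which can exceed $\max_i H_q(f_i\mid X_i)$. The fix is to give those two endpoints a common color and the remaining two vertices two further distinct colors. This $3$-coloring is itself a compatible $f$. So the path falls under your first case, matching the paper's classification in which only the single edge is incompatible; your anticipated obstacle sits in the wrong place.

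\textbf{The $H_\kappa$ identity.} Your caution here is justified, and it bites already in the compatible case. For $f_1=X_2$, $f_2=X_1$ (compatible via $X_1\oplus X_2$), $G_{\mathrm B}$ is the $4$-cycle, whose two MISs partition the vertices. Hence $H_\kappa(G_{\mathrm B},\mathbf X_{[2]})=H_2(X_1\oplus X_2)$. This is strictly larger than $\max(H_2(X_2\mid X_1),H_2(X_1\mid X_2))=R^*(2,2)$ unless $X_1\oplus X_2$ is independent of $X_1$ or of $X_2$.

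The proof of Proposition~\ref{prop_achievable-rate-via-broadcast_compatible} only evaluates the side-information-assisted quantity. Citing it, as you do and as the paper does, therefore does not deliver the $H_\kappa$ equality. What the repaired argument (and the paper's) actually supports is $R^*(2,2)=\max(H_q(f_1\mid X_1),H_q(f_2\mid X_2))$.
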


\begin{proof}[Proof of Proposition~\ref{prop-2-user-rate_optimality}]
\label{proof_prop-2-user-rate_optimality}

For $N=K=2$, and  $\mathcal{S}_i=X_i$, $i\in\{1,2\}$, from Theorem~\ref{theo-lower-bound-mutual-info} and Proposition~\ref{prop-genie}, 
it holds 
\begin{align}
\label{eq:converse_K2}
R^*(2,2)\geq \max\ (H_q(f_1\mid X_1),H_q(f_2\mid X_2)) \ . 
\end{align}

We next consider the following two scenarios:

(i) When $f_1$, $f_2$ are compatible functions, following Definition~\ref{def:compatible_functions} (cf.~\cite{ravi2016broadcast}), there exists a deterministic function $f(X_1, X_2)$ such that $G^{ff}_{X_1X_2}=G^{f_1f_2}_{X_1X_2}$, implying that $H_q(f_1\mid X_1,f)=H_q(f_2\mid X_2,f)=0$. Hence, $H_q(f\mid X_i)=H_q(f_i \mid X_i)$ for $i\in[2]$. Using similar arguments, $H_q(f\mid X_1,f_1)=H_q(f\mid X_2,f_2)=0$. 
From Proposition~\ref{prop_achievable-rate-via-broadcast_compatible}, the following rate is achievable and optimal due to~\eqref{eq:converse_K2}:
\begin{align}
\label{eq:prop_achievable-rate-via-broadcast_compatible_K2}
H_\kappa(G_{\mathrm B},\mathbf X_{[2]})=\max \ (H_q(f_1\mid X_1),H_q(f_2\mid X_2))
\end{align}

(ii) When $q=2$, any $G_{X_1X_2}^{f_1f_2}$ is a subgraph of a graph with an edge set determined by $f_1(0,0)\neq f_1(0,1)$, $f_1(1,0)\neq f_1(1,1)$, $f_2(0,0)\neq f_2(1,0)$, and $f_2(0,1)\neq f_2(1,1)$, employing~\eqref{eq:vertex_set_rook} and~\eqref{eq:edge_set_rook} for $\mathcal{X}_1=\mathcal{X}_2=\mathbb{F}_2$, meaning that $G_{X_1X_2}^{f_1f_2}$ has at most $4$ edges. 
Similarly, any $G_{X_1X_2}^{ff}$ is a subgraph of a graph with an edge set determined by $f(0,0)\neq f(0,1)$, $f(1,0)\neq f(1,1)$, $f(0,0)\neq f(1,0)$, and $f(0,1)\neq f(1,1)$, meaning that $f(0,1)=f(1,0)$ and $f(0,0)=f(1,1)$. 
If there is no $f$ such that  $G_{X_1X_2}^{ff}= G_{X_1X_2}^{f_1f_2}$, then following the steps of proof of~\cite[Theorem~3]{ravi2016broadcast}, the only scenario where $f_1$, $f_2$ are not compatible is if graph $G^{f_1f_2}_{X_1X_2}$ has one edge. Any other possible 
$G^{f_1f_2}_{X_1X_2}$ is compatible. For those compatible graphs, the proof follows from part (i). For a graph with only one edge resulting from $f_1(0,0)\neq f_1(0,1)$, it is clear that $H_2(f_2 \mid X_2) = 0$, i.e., user $2$ can recover $f_2$ only from $X_2$. 
User $1$ requires an encoding rate $H_2(f_1\mid X_1)$. Thus, the rate $\max\ (H_2(f_1\mid X_1), H_2(f_2\mid X_2 )) = H_2(f_1\mid X_1)$ is achievable.
\end{proof}

After reviewing the special cases (cf.~Sections~\ref{par:linear}-
\ref{par:complementary}) where optimal coding schemes exist, we proceed to consider a general $(3,3,\{X_i\}_{i\in[3]},\{f_i\}_{i\in[3]})$ NFCB model with $q=2$, and upper bound $\Delta(3,3)$. 
\begin{prop}
\label{prop-3-user-gap-genie}
Given a $(3,3,\{X_i\}_{i\in[3]},\{f_i\}_{i\in[3]})$ NFCB model for general $\{f_1,f_2,f_3\}$ with $q=2$, the worst-case gap over all possible topologies is
\begin{align}
\label{eq-theo-3-user-gap}
\max\limits_{\mathcal{N}\in \mathfrak{N}}\,\Delta_{\mathcal{N}}(3,3)= 0.5  \,\mbox{bits} 
\end{align} 
where $\mathcal{N}$ and $\mathfrak{N}$ denote a specific topology and the class of all possible topologies, respectively.
The average gap over all possible topologies in Tables~\ref{tab:my_label-2}-
\ref{table-gap-3-diff}, with each topology being equally likely, is
\begin{align}
\Delta(3,3)=\mathbb{E}_{\mathfrak{N}}[\Delta_{\mathfrak{N}}(3,3)]=0.125 \,\mbox{bits}\ .
\end{align}
\end{prop}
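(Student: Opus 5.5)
The proof will be a finite, exhaustive case analysis, organized so the enumeration stays manageable. Fix $q=2$, $K=N=3$, $\mathcal S_i=X_i$. I take the sources to be i.i.d. $\mathrm{Unif}(\mathbb F_2)$; this is the setting implicit in the tables. For a given topology $\mathcal N=(f_1,f_2,f_3)$, define
\begin{align}
\Delta_{\mathcal N}(3,3)\triangleq H_{\kappa,\mathrm B}^{\mathrm{SI}}\left(G_{\mathrm B},\mathbf X_{[3]}\mid\mathcal S_{[3]}\right)-R_{\mathrm{genie}}(\mathcal N),
\end{align}
where $R_{\mathrm{genie}}(\mathcal N)$ is the right-hand side of Proposition~\ref{prop-genie}.

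\textbf{Step 1: reduce the number of topologies.} The graph $G_{f_i}$ depends on $f_i$ only through the partition of each fiber $\{x_i=a\}$ into level sets of $f_i$. So $f_i$ can be replaced by any function with the same restricted partitions on the two fibers of $x_i$, such as $f_i\oplus x_i$ or a relabeling of its values within each fiber. On each fiber, which is a copy of $\mathbb F_2^2$, the restriction of $f_i$ is one of a few types: constant, a single variable ($x_j$ or $x_k$), $x_j\oplus x_k$, or an AND/OR type with $3$--$1$ splits. I also quotient by the symmetric group on the three users and datasets, which acts simultaneously on $(f_i)$ and $(X_i)$. The resulting classes are what I expect Tables~\ref{tab:my_label-2}--\ref{table-gap-3-diff} to list.

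\textbf{Step 2: evaluate both bounds for each class.} For the converse, $H_2(f_{\pi(i)}\mid\cdot)$ in Proposition~\ref{prop-genie} is computed directly from the truth tables, maximizing over the $6$ permutations. For the achievability, I will not solve the optimization in~\eqref{eq:single_letter_side_info_assisted_def_G_B} exactly. Instead I exhibit an explicit feasible $W$, either a deterministic coloring or a short list of candidate colorings such as $W=(f_1\oplus g,\dots)$ built from XORs of demands. Then I evaluate $\max_i H_2(W\mid X_i)$. This yields an upper bound on the gap, which is all the proposition needs on the achievability side.

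Several classes close immediately. When the demands are compatible, Proposition~\ref{prop_achievable-rate-via-broadcast_compatible} gives zero gap. When every $f_i$ is linear, the MinRank argument in paragraph~\ref{par:linear} applies. When one or more users' restricted demands are constant, the problem reduces to Proposition~\ref{prop-2-user-rate_optimality}.

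\textbf{Step 3: the worst case and the average.} I will exhibit one topology attaining gap $0.5$. My guess is one mixing an AND-type demand with an XOR-type demand, so that the genie bound is $1.5$ bits while the best coloring of $G_{\mathrm B}$ needs $2$ bits for some user. For this topology I must certify that no feasible $W$ does better than $2$. Since $G_{\mathrm B}$ has only $8$ vertices, I plan to do this by enumerating maximal independent sets and using~\eqref{eq:chromatic_entropy_vs_graph_entropy}. For every other class I need gap at most $0.5$. Finally, averaging the per-topology gaps over all equally likely topologies, weighting each class by its orbit size, should give $0.125$.

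\textbf{Main obstacle.} The main obstacle is Step 2 for the non-compatible, non-linear classes. There, $\min_W\max_i I(\mathbf X;W\mid X_i)$ over random, non-deterministic $W$ is a genuine optimization. I will first try the restriction to deterministic colorings, which gives an upper bound and suffices for the gap claims. I will examine randomized $W$ only for the worst-case topology, and only if the deterministic value exceeds the claimed $0.5$.
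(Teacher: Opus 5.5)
\textbf{Verdict: genuine gap.} The statement's two numbers come from a specific reduction that your plan does not contain.

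\textbf{The paper's reduction.} The paper proceeds as follows.
\begin{itemize}
\item It fixes the user attaining the maximum in $H_{\kappa,\mathrm B}^{\mathrm{SI}}$ (user~1) and writes $I_2(\mathbf X_{[3]};W\mid X_1)=H_2(f_1\mid X_1)+H_2(W\mid f_1,X_1)$.
\item It keeps only genie orders starting with user~1, where the last term vanishes. This gives $\Delta_{\mathcal N}(3,3)\le H_2(W\mid f_1,X_1)-\max\{H_2(f_2\mid X_1,X_2,f_1),H_2(f_3\mid X_1,X_3,f_1)\}$.
\item Conditioning on $(X_1,f_1)$ splits the graph into two $4$-vertex halves. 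Within a half, pairs differing in both $x_2$ and $x_3$ can only be $f_1$-edges, so each half is a subgraph of the $4$-cycle formed by $\mathcal E_2$ and $\mathcal E_3$ edges.
\item The ``topologies'' are these edge patterns. The values $0.5$ and $0.125$ are the maximum and the uniform average of the above bound over the tabulated patterns.
\end{itemize}

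\textbf{Why your average will not give $0.125$.} You instead average the exact gap, with the genie maximized over all six orders, over demand-class orbits. That is a different per-instance quantity. For example, with $f_1\equiv0$, $f_2=\mathbf{1}\{\mathbf x_{[3]}=001\}$, $f_3=\mathbf{1}\{\mathbf x_{[3]}=110\}$, the user-1-first genie value is $0.25$ while the full one is about $0.66$. It is also averaged over a different probability space, whether raw triples or classes are taken as equally likely. So nothing in your plan yields $0.125$. Your exact-gap definition is arguably the more meaningful one, but it is not what the statement averages. Without the $4$-cycle structure, your claim that every other class has gap at most $0.5$ is also only a promise to check several thousand classes. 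Your Step~1 invariance and the compatible case are fine.

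\textbf{Step~2 shortcuts that fail.}
\begin{itemize}
\item MinRank optimality concerns the best \emph{linear} code. It does not compare $H_{\kappa,\mathrm B}^{\mathrm{SI}}$ with the genie bound: in Example~\ref{ex-linear-comp-jafar}, with linear demands, these are $1.667$ and $1$ trits.
\item A user with a trivial demand does not reduce the problem to Proposition~\ref{prop-2-user-rate_optimality}. That result is for $N=2$ with complementary side information, whereas here $N=3$. Moreover, that user's term $I_2(\mathbf X_{[3]};W\mid X_i)$ still sits inside the $\max$.
\end{itemize}

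\textbf{Certifying the worst case.} The chain \eqref{eq:chromatic_entropy_vs_graph_entropy} only upper-bounds $H_\kappa$ by chromatic quantities. It cannot certify that no randomized $W$ beats your coloring. You need a converse, such as $I_2(\mathbf X_{[3]};W\mid X_i)=H_2(\mathbf X_{[3]}\mid X_i)-H_2(\mathbf X_{[3]}\mid W,X_i)$ combined with the size of $W\cap\{x_i=a\}$.

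For instance, take $f_1=\bar x_1x_2\oplus x_1x_3$, $f_2=x_3$, $f_3=x_2$. Then $\overline{G}_{\mathrm B}$ is two disjoint $4$-cycles, so every independent set of $G_{\mathrm B}$ has at most two vertices, and these always differ in $x_1$. Hence $I_2(\mathbf X_{[3]};W\mid X_1)=2$ for every admissible $W$. The genie bound is $1+\frac12=1.5$, so the gap is exactly $0.5$.

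Also, randomized $W$ must be examined for \emph{every} class whose deterministic gap exceeds $0.5$, not only for the worst case.
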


\begin{proof}[Proof of Proposition~\ref{prop-3-user-gap-genie}]
\label{proof-theo-3-user-gap-genie}
Setting $\mathcal{S}_i=X_i$ for all $i\in[K]$ and using $q=2$, we note that
%
%
\begin{align}
\label{eq-proof-lem-3-gap-1}
I_2({\bf X}_{[N]};W\mid X_i)=H_2(W\mid X_i)=H_2(f_i\mid X_i)+H_2(W \mid f_i, X_i) 
\end{align}
where $W$ is the MIS induced by $P^*_{\mathbf X_{[N]},W}$, given in~\eqref{eq:Wopt}. 
Without loss of generality, let us assume that the maximum $H_{\kappa,\mathrm B}^{\mathrm{SI}}\left(G_{\mathrm B},\mathbf X_{[N]}\mid\mathcal S_{[K]}\right)$ (cf.~\eqref{eq:single_letter_side_info_assisted_def_G_B}) in~\eqref{eq-proof-lem-3-gap-1} is attained by user~$i=1$ with demand $f_1$ given $\mathcal{S}_1=X_1$. Thus,
\begin{align}
\label{eq-proof-lem-3-gap-2}
    H_{\kappa,\mathrm B}^{\mathrm{SI}}\left(G_{\mathrm B},\mathbf X_{[3]}\mid\mathcal S_{[3]}\right)=H_2(f_1\mid X_1)+H_2(W\mid f_1, X_1) \ . 
\end{align} 

From~\eqref{eq-genie-aided-rate}, we calculate the genie-aided lower bound as follows:
\begin{align}
\label{eq-proof-lem-3-user-gap-genie-2}  
R^*(3,3)
&\geq \max_{\substack{(i_1,i_2,i_3)\in \mathrm{Perm}([3])}}\ \sum\limits_{l=1}^3 \ H_2(f_{i_l}\mid X_{i_1},\cdots,X_{i_l},f_{i_1},\cdots f_{i_{l-1}})
\end{align}
where $H_2(f_{i_3}\mid X_{i_1},X_{i_2},X_{i_3},f_{i_1},f_{i_2})=0$ for any $(i_1,i_2,i_3)\in \mathrm{Perm}([3])$.

\input{plots/Gap_3_user}
Given the union graph $G_{\mathrm B} = G(\mathcal{V}, \mathcal{E})$ and assuming that the maximum in~\eqref{eq-proof-lem-3-gap-1} is attained by user~$i=1$, we next define three classes of edges, namely $\mathcal{E}_1$, $\mathcal{E}_2$, and $\mathcal{E}_3$, where 
\begin{align}
\mathcal{E}_1 &= \{({\bf x}_{[3]},{\bf x}_{[3]}') \in \mathcal{E} : ({\bf x}_{[3]},{\bf x}_{[3]}') \,\mbox{induced solely by } f_1\mid {X_1}\} , \\
\mathcal{E}_2 &= \{({\bf x}_{[3]},{\bf x}_{[3]}') \in \mathcal{E} : ({\bf x}_{[3]},{\bf x}_{[3]}') \: \mbox{induced by } f_2\mid {X_2} \mbox{ and } f_1\mid {X_1}\} ,\\
\mathcal{E}_3 &= \{({\bf x}_{[3]},{\bf x}_{[3]}') \in \mathcal{E} : ({\bf x}_{[3]},{\bf x}_{[3]}') \: \mbox{induced by } f_3\mid{X_3} \mbox{ and } f_1\mid{X_1} \}  
\end{align}
such that $\mathcal{E}=\bigcup_{i=1}^{3}\mathcal E_i$\footnote{Note that we do not need to consider the set of edges induced by all $f_1\mid X_1$, $f_2\mid X_2$, and $f_3\mid X_3$, as this set is empty.}, where ${\bf x}_{[3]}=(x_1,x_2,x_3)$ and ${\bf x}_{[3]}'=(x_1',x_2',x_3')$. 
More specifically,  $\mathcal{E}_1$ includes edges induced solely by $f_1\mid X_1$, and $\mathcal{E}_2$ by both $f_1\mid X_1$ and $f_2\mid X_2$, and $\mathcal{E}_3$ by both $f_1\mid X_1$ and $f_3\mid X_3$.  
The additive gap, denoted by $\Delta_{\mathcal{N}}(3,3)$, between~\eqref{eq-proof-lem-3-gap-2} and~\eqref{eq-proof-lem-3-user-gap-genie-2} for $3$-user NFCB is  
\begin{align}
\label{eq-proof-lem-3-user-gap-genie-3}
\Delta_{\mathcal{N}}(3,3)
&\overset{(a)}{\leq} H_2(f_1\mid X_1)+H_2(W\mid f_1, X_1)-\{H_2(f_1\mid X_1)+\max\,\{H_2(f_2\mid X_1,X_2,f_1), \: H_2(f_3\mid X_1,X_3,f_1)\}\}\nonumber\\
&\overset{}{=}H_2(W\mid f_1, X_1)-\max_{\substack{(i_2,i_3)\in \mathrm{Perm}([3]\setminus\{1\})}}\{H_2(f_{i_2}\mid X_1, X_{i_2},f_1) \} 
\end{align}
where $(a)$ follows from employing $i_1=1$ in~\eqref{eq-proof-lem-3-user-gap-genie-2}, and from~\eqref{eq-proof-lem-3-gap-2}. We illustrate this gap for different $\{f_i\}_{i\in [K]}$ in Tables~\ref{tab:my_label-2},~\ref{tab:my_label-3}, and~\ref{table-gap-3-diff}. For each distinct scenario, $\Delta_{\mathcal{N}}(3,3)$ is calculated by subtracting the first column, representing $G_{W \mid f_1, X_1}$, from the maximum of the second and third columns, which correspond to $G_{f_2 \mid X_1, X_2, f_1}$ and $G_{f_3 \mid X_1, X_3, f_1}$, respectively. 
We indicate distinct edge classes using distinct line styles for different source tuples ${\bf x}_{[3]}=(x_1,x_2,x_3)\in\mathbb{F}_2^3$. More specifically, $\mathcal{E}_1$ is absent (due to conditioning on $X_1$ and $f_1$), $\mathcal{E}_2$, and $\mathcal{E}_3$ are denoted by long-dashed lines and short-dashed lines, respectively.

In $G_{W \mid f_1, X_1}$, since we condition on $X_1$, there exist no cross edges between sets $\{(000),(001), (010),(011)\}$ and $\{(100),(101),(110),(111)\}$. 
For $X_1=0$, corresponding to the right half of $G_{W \mid f_1, X_1}$, the edges $\{(000),(011)\}$ and $\{(001),(010)\}$ do not appear in the graph, as they are induced solely by $f_1$, and the graph is constructed conditioned on knowing $f_1$. Similarly, for $X_1=1$, corresponding to the left half of $G_{W \mid f_1, X_1}$, edges $\{(100),(111)\}$ and $\{(101),(110)\}$ are also absent for the same reason. Thus, no complete subgraph is induced by the functions on either side of $G_{W\mid f_1, X_1}$.

To characterize $\Delta_{\mathcal{N}}(3,3)$ in~\eqref{eq-proof-lem-3-user-gap-genie-3}, consider a fixed value of $X_1$ and an arbitrary $f_1$. Without loss of generality, assume that the maximum of the two terms on the RHS of~\eqref{eq-proof-lem-3-user-gap-genie-3} is attained by $H_2(f_2\mid f_1, X_1, X_2)$. We then partition the dataset tuples $(X_1, X_2, X_3)$ into two subgraphs based on whether $X_1=0$ or $X_1=1$. 
In $G_{W \mid f_1, X_1}$, where each subgraph has $4$ vertices and at most $6$ edges ($K_4$ is the complete graph on four vertices). 
A subset of these edges, labeled $\mathcal{E}_1$, are removed due to knowledge of $(f_1, X_1)$, while the remaining edges, labeled $\mathcal{E}_2$ (long-dashed lines) and $\mathcal{E}_3$ (short-dashed lines), remain.

If $G_{f_2 \mid f_1, X_1, X_2}$ retains all edges in $\mathcal{E}_2$, shown in Tables~\ref{tab:my_label-2} (Top Row), then $H_2(W\mid f_1, X_1)=H_2(f_2 \mid f_1, X_1, X_2)\leq 1$, and thus $\Delta_{\mathcal{N}}(3,3) = 0$. Conversely, the maximum value of $\Delta_{\mathcal{N}}(3,3)$ is attained when $G_{f_2\mid X_1,X_2,f_1}$ contains only a subset of $\mathcal{E}_2$, e.g., either the subsets $\{(110),(111)\}$, $\{(101),(100)\}$, shown in Tables~\ref{tab:my_label-2} (Bottom Row), or from symmetry $\{(000),(001)\}$, $\{(011),(010)\}$, resulting in $\max\limits_{\mathcal{N}\in \mathfrak{N}}\,\Delta_{\mathcal{N}}(3,3)=0.5$ bits, hence $\Delta(3,3)\leq 0.5$ bits. 

The average gap over all possible topologies summarized in Tables~\ref{tab:my_label-2}, \ref{tab:my_label-3}, \ref{table-gap-3-diff} --- assuming that each topology is equally likely --- can be observed to be $\Delta(3,3)=0.125$ bits.
\end{proof}

We next consider a general $(4,4,\{X_i\}_{i\in[4]},\{f_i\}_{i\in[4]})$ NFCB model with $q=2$, and upper bound $\Delta(4,4)$. 
\begin{prop}
\label{prop-4-user-gap-genie}
Given a general $(4,4,\{X_i\}_{i\in[4]},\{f_i\}_{i\in[4]})$ NFCB model with $q=2$, the worst-case gap over all possible topologies (see Tables~\ref{tab:Subgraphs derived from Gf-4user}-\ref{tab:entropyvalues-4user} and Figures~\ref{fig:Gf1f2_Gf1f3_Gf1f4}-
\ref{fig:Gf1f3f4_Gf1f4f2_Gf1f4f3}) is
\begin{align}
\label{eq-theo-4-user-gap}
\max\limits_{\mathcal{N}\in \mathfrak{N}}\,\Delta_{\mathcal{N}}(4,4)\leq 1  \, \mbox{bit} \
\end{align} 
while the average gap over all possible topologies, with each topology being equally likely, is
\begin{align}
\mathbb{E}_{\mathfrak{N}}[\Delta_{\mathfrak{N}}(4,4)]\leq 0.25 \ \mbox{bits}\ .
\end{align}
\end{prop}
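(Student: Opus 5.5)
We mirror the proof of Proposition~\ref{prop-3-user-gap-genie}, now with $K=N=4$, $q=2$ and $\mathcal S_i=X_i$. First, since $W$ is a function of $\mathbf X_{[4]}$ and $H_2(f_i\mid W,X_i)=0$ (cf.~\eqref{eq:function-from-W-Si}), we decompose as in~\eqref{eq-proof-lem-3-gap-1}:
\begin{align}
I_2(\mathbf X_{[4]};W\mid X_i)=H_2(f_i\mid X_i)+H_2(W\mid f_i,X_i).
\end{align}
Without loss of generality, user~$1$ attains the maximum in~\eqref{eq:single_letter_side_info_assisted_def_G_B}. In the genie-aided bound~\eqref{eq-genie-aided-rate}, we restrict to permutations with $\pi(1)=1$ and drop the nonnegative later terms except the second one. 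This gives
\begin{align}
\Delta_{\mathcal N}(4,4)\leq H_2(W\mid f_1,X_1)-\max_{j\in\{2,3,4\}}H_2(f_j\mid X_1,X_j,f_1).
\end{align}
If needed, we also keep the third genie term $H_2(f_{\pi(3)}\mid X_1,X_{\pi(2)},X_{\pi(3)},f_1,f_{\pi(2)})$. The fourth genie term vanishes, since four bits of side information determine everything.

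Next, we classify the edges of $G_{\mathrm B}$ that survive conditioning on $(X_1,f_1)$. The graph $G_{W\mid f_1,X_1}$ splits into two disconnected halves $\{X_1=0\}$ and $\{X_1=1\}$. Each half has $8$ vertices, and its edges are induced by $f_j\mid X_j$ for $j\in\{2,3,4\}$. Edges induced solely by $f_1$ disappear, as in the three-user case. An edge from user~$j$ requires agreement in $X_j$ (and in $X_1$ within a half). Hence the edges from a single user $j$ form a union of two $4$-vertex components, indexed by $x_j$, in the $3$-cube $\{(x_2,x_3,x_4)\}$. Within a half, these are the subgraphs tabulated in Tables~\ref{tab:Subgraphs derived from Gf-4user}--\ref{tab:entropyvalues-4user} and Figures~\ref{fig:Gf1f2_Gf1f3_Gf1f4}--\ref{fig:Gf1f3f4_Gf1f4f2_Gf1f4f3}.

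The worst-case bound comes from upper-bounding $H_2(W\mid f_1,X_1)$ by the entropy of a coloring of each half. We then lower-bound the best single genie term by what one user's subgraph forces. The key observation is a pairing argument. The union of the three users' edge sets on $8$ vertices admits a coloring whose conditional entropy exceeds the entropy forced by the dominant user $j^\star$ by at most one bit. Concretely, we refine the coloring induced by $f_{j^\star}$ (given $X_1,X_{j^\star},f_1$) by one extra binary label that resolves the edges of the two remaining users. That label can be taken as a single bit, namely a suitable function of the remaining coordinates. Since conditioning on $X_{j^\star}$ already separates the graph into $4$-vertex blocks, this yields $H_2(W\mid f_1,X_1)\leq H_2(f_{j^\star}\mid X_1,X_{j^\star},f_1)+1$, and therefore $\max_{\mathcal N}\Delta_{\mathcal N}(4,4)\leq 1$.

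The average bound is a finite enumeration over topologies, taken uniformly. We group them by which users contribute surviving edges in each half. When only one user contributes, or when one user's edge set contains the others, the gap is $0$, as in the top row of Table~\ref{tab:my_label-2}. The configurations in which the extra bit is actually needed are those where different users' edges cross the same vertex pair structure. Counting these classes from the tables and averaging the per-class gaps (each at most $1$, typically $0.5$) should give at most $0.25$. The main obstacle is this enumeration: making sure the classes in the tables exhaust all topologies up to symmetry, and that the one-bit refinement is valid in every class. I would first handle the symmetric reductions (relabeling of $\{2,3,4\}$ and bit flips), and then verify the remaining representative cases by direct computation of the entropies.
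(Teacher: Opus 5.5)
Your opening steps match the paper: the decomposition $I_2(\mathbf X_{[4]};W\mid X_i)=H_2(f_i\mid X_i)+H_2(W\mid f_i,X_i)$, taking user~$1$ as the maximizer, and the split into the halves $X_1=0,1$. The worst-case argument then fails at the ``one extra bit'' step. Your refinement only controls $H_2(W\mid f_1,X_1,X_{j^\star})$. Inside a block with $X_1,X_{j^\star}$ fixed, the other two users' edges are edges of a square, so a parity bit resolves them. But $W$ is not conditioned on $X_{j^\star}$, and vertices in the blocks $x_{j^\star}=0$ and $x_{j^\star}=1$ can be adjacent through the other two users. A proper coloring of the half must therefore also pay $H_2(X_{j^\star}\mid X_1,f_1)$, which can be a full bit.

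Here is a counterexample. Let $X_1,\dots,X_4$ be i.i.d.\ $\mathrm{Bern}(1/2)$, $f_1\equiv0$, $f_2=\mathbf{1}\{X_2=X_3=X_4\}$, $f_3=\mathbf{1}\{X_2=X_4\neq X_3\}$ and $f_4=\mathbf{1}\{X_3=X_4\neq X_2\}$. In each half, the surviving graph on $(x_2,x_3,x_4)$ has independence number $2$, since its complement is triangle-free. Hence every admissible $W$ satisfies $H_2(W\mid f_1,X_1)\geq 3-1=2$. The antipodal pairing attains $2$ for every user, so user~$1$ is indeed a maximizer. On the other hand, $H_2(f_j\mid X_1,X_j,f_1)=h(1/4)\approx0.811$ for every $j\in\{2,3,4\}$. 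So your inequality $H_2(W\mid f_1,X_1)\leq H_2(f_{j^\star}\mid X_1,X_{j^\star},f_1)+1$ is false, and your single-term gap bound evaluates to about $1.19>1$.

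The statement survives this instance only because the third genie term $H_2(f_3\mid X_1,X_2,X_3,f_1,f_2)=0.5$ contributes, giving a gap of about $2-1.311\approx0.69$. Keeping that term is therefore necessary, not optional. This is why the paper works with the two-term bound \eqref{eq-proof-lem-3-user-gap-genie-4}. It pairs $H_2(W\mid X_1,f_1)\leq2$ with the sum of the second and third genie terms, which equals $1$ in its worst-case configuration. It does not use a dominant-user term plus one bit.

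The average-gap part is also not a proof. You sketch an enumeration and assert that it ``should give'' at most $0.25$, with per-class gaps ``typically $0.5$''. Nothing in your argument produces that number. Moreover, your per-class bound is exactly the single-term reduction that the example above breaks. The paper's route is different and short. Conditioning on $X_1$ splits $G_{W\mid X_1,f_1}$ into two halves. Assuming each half's contribution is bounded by the three-user average gap of Proposition~\ref{prop-3-user-gap-genie}, it concludes $\Delta(4,4)\leq2\Delta(3,3)=0.25$. If you keep the enumeration plan, you must actually carry it out, using the two-term genie bound in every class.
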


\begin{proof}[Proof of Proposition~\ref{prop-4-user-gap-genie}]
\label{proof-prop_nonlinear-k-user-communication-broadcast}  

Building on Proposition~\ref{prop-3-user-gap-genie}, we now derive the gap for the given $(4,4,\{\mathcal{S}_i\}_{i\in[4]},\{f_i\}_{i\in[4]})$ non-linear broadcast model. Setting $\mathcal{S}_i = X_i$ for all $i \in [K]$ and using $q=2$, we note that
\begin{align}
\label{eq-proof-lem-4-gap-1}
I_2({\bf X}_{[N]};W\mid \mathcal{S}_{i}) 
= H_2(W\mid X_i) 
= H_2(f_i\mid X_i) + H_2(W\mid f_i, X_i) 
\end{align}
where $W$ denotes the MIS induced by $P^*_{\mathbf X_{[N]},W}$ that achieves~\eqref{eq-theo-general-achievable-rate-via-broadcast}, and is defined in~\eqref{eq:Wopt}. 
Without loss of generality, assume that the maximum in~\eqref{eq-proof-lem-4-gap-1} is attained by user~$i=1$, with side information $\mathcal{S}_1 = X_1$ and demand $f_1$. Thus, adapting~\eqref{eq-proof-lem-3-gap-2} to the $N=K=4$ setting, we have
\begin{align}
    \label{eq-proof-lem-4-gap-2}
    H_{\kappa,\mathrm B}^{\mathrm{SI}}\left(G_{\mathrm B},\mathbf X_{[4]}\mid\mathcal S_{[4]}\right)= H_2(f_1\mid X_1) + H_2(W\mid f_1, X_1) \ .
\end{align}
From~\eqref{eq-genie-aided-rate}, we calculate the genie-aided lower bound as
\begin{align}
\label{eq-proof-lem-4-user-gap-genie-2}  
R^*(4,4) 
&\geq\max_{\substack{(i_1,i_2,i_3,i_4)\in \mathrm{Perm}([4])}}  \ \sum\limits_{l=1}^4 \ H_2(f_{i_l}\mid X_{i_1}, \cdots, X_{i_l}, f_{i_1}, \cdots, f_{i_{l-1}}) 
\end{align}
where $H_2(f_{i_4}\mid X_{i_1}, X_{i_2}, X_{i_3}, X_{i_4}, f_{i_1}, f_{i_2}, f_{i_3})=0$ for all distinct $(i_1,i_2,i_3,i_4)\in \mathrm{Perm}([4])$.  

The additive gap, denoted by $\Delta_{\mathcal{N}}(4,4)$, between~\eqref{eq-proof-lem-4-gap-2} and~\eqref{eq-proof-lem-4-user-gap-genie-2} is bounded as
\begin{align}
\label{eq-proof-lem-3-user-gap-genie-4}
\Delta_{\mathcal{N}}(4,4)\leq  H_2(W\mid f_1, X_1) - \!\!\!\!\max\limits_{\substack{(i_2,i_3,i_4)\in \mathrm{Perm}([4]\setminus\{1\})}}\{ H_2(f_{i_2}\mid X_1, X_{i_2}, f_1) + H_2(f_{i_3}\mid X_1, X_{i_2}, X_{i_3}, f_1, f_{i_2}) \}  \ .
\end{align}

As in Proposition~\ref{prop-3-user-gap-genie}, we introduce a set of graphs to compute $\Delta(4,4)$. Consider $G_{W\mid X_1, f_1}$, obtained from $G_{\mathrm B}$ by conditioning on $(X_1,f_1)$. Conditioning on $X_1$ removes all cross edges between $\{(0000),\dots,(0111)\}$ and $\{(1000),\dots,(1111)\}$, partitioning the graph into two halves corresponding to $X_1=0$ and $X_1=1$. 
For $X_1=0$, the edges $\{(0000),(0111)\},\{(0001),(0110)\},\{(0010),(0101)\},\{(0011,0100)\}$—induced solely by $f_1$—are absent; similarly, for $X_1=1$, 
$\{(1000),(1111)\},\{(1001),(1110)\},\{(1010),(1101)\},\{(1011),(1100)\}$ are removed. Thus, no complete subgraph is induced on either half of $G_{W\mid X_1, f_1}$. The conditional entropy of $G_{W\mid X_1, f_1}$ is $H_2(W\mid X_1, f_1)$. 


Let $\mathcal{E}_W$ denote the set of all edges in $G_{W\mid X_1, f_1}$. We classify these edges into seven groups:
\begin{align}
\mathcal{E}_1 &= \{({\bf x}_{[4]},{\bf x}_{[4]}') \in \mathcal{E}_W : \mbox{induced by } f_1\mid X_1\} ,\\
\mathcal{E}_l &= \{({\bf x}_{[4]},{\bf x}_{[4]}') \in \mathcal{E}_W : \mbox{induced by } f_1\mid X_1 \mbox{ and } f_l\mid X_l\} ,\quad l\in\{2,3,4\} ,\\
\mathcal{E}_{l+l'} &= \{({\bf x}_{[4]},{\bf x}_{[4]}') \in \mathcal{E}_W : \mbox{induced by } f_1\mid X_1, f_l\mid X_l, \mbox{ and } f_{l'}\mid X_{l'} \} ,\quad l,l'\in\{2,3,4\}\,\mbox{and}\, l\neq l'
\end{align}
where ${\bf x}_{[4]}=(x_1,x_2,x_3,x_4)$ and ${\bf x}_{[4]}'=(x_1',x_2',x_3',x_4')$, and $\mathcal{E}_{W} = \cup_{i=1}^7 \mathcal{E}_i$.

Now, we consider the remaining functions, namely $f_2$, $f_3$ and $f_4$, and derive a collection of graphs and their entropies by conditioning on side information, as summarized in Table~\ref{tab:Subgraphs derived from Gf-4user}.

\begin{table}[h]\small
    \centering
    \begin{tabular}{|c|c|c|c|c|}
    \hline
       & Function  & Side information & Conditional entropy & Edge class\\
         \hline
         $G_{f_2\mid X_1,X_2,f_1}$&$f_2$& $X_1,\ X_2,\ f_1$& $H_2(f_2\mid X_1,X_2,f_1)$& $\mathcal{E}_2$\\
         \hline
         $G_{f_2\mid X_1, X_2,X_3, f_1,f_3}$&$f_2$& $X_1,\ X_2,\ X_3,\ f_1,\ f_3$& $H_2(f_2\mid X_1,X_2,X_3,f_1,f_3)$& $\mathcal{E}_5$\\
         \hline
         $G_{f_2\mid X_1, X_2,X_4, f_1,f_4}$&$f_2$& $X_1, \ X_2,\ X_4,\ f_1,\ f_4$& $H_2(f_2\mid X_1,X_2,X_4,f_1,f_4)$& $\mathcal{E}_6$\\
         \hline
    \hline
         $G_{f_3\mid X_1,X_3,f_1}$&$f_3$& $X_1,\ X_3,\ f_1$& $H_2(f_3\mid X_1,X_3,f_1)$& $\mathcal{E}_3$\\
         \hline
         $G_{f_3\mid X_1, X_2,X_3, f_1,f_2}$&$f_3$& $X_1,\ X_2,\ X_3,\ f_1,\ f_2$& $H_2(f_3\mid X_1,X_2,X_3,f_1,f_2)$& $\mathcal{E}_5$\\
         \hline
         $G_{f_3\mid X_1, X_3,X_4, f_1,f_4}$&$f_3$& $X_1,\ X_3,\ X_4,\ f_1,\ f_4$& $H_2(f_3\mid X_1,X_3,X_4,f_1,f_4)$& $\mathcal{E}_7$\\
         \hline
         \hline
         $G_{f_4\mid X_1,X_4,f_1}$&$f_4$& $X_1,\ X_4,\ f_1$& $H_2(f_4\mid X_1,X_4,f_1)$& $\mathcal{E}_4$\\
         \hline
         $G_{f_4\mid X_1, X_3,X_4, f_1,f_3}$&$f_4$& $X_1,\ X_3,\ X_4,\ f_1,\ f_3$& $H_2(f_4\mid X_1,X_3,X_4,f_1,f_3)$& $\mathcal{E}_7$\\
         \hline
         $G_{f_4\mid X_1, X_2,X_4, f_1,f_2}$&$f_4$& $X_1,\ X_2,\ X_4,\ f_1,\ f_4$& $H_2(f_4\mid X_1,X_2,X_4,f_1,f_2)$& $\mathcal{E}_6$\\
         \hline
    \end{tabular}
    \caption{\small{The $4$-user NFCB problem, corresponding to the worst-case configuration  depicted in Figure~\ref{fig:GWX1f1}, which includes the graphs for $f_2$, $f_3$, and $f_4$, each conditioned on their side information, along with their conditional entropies.}}
    \label{tab:Subgraphs derived from Gf-4user}
\end{table}

\input{plots/Gap_4_users}


Since constructing a table analogous to Tables~\ref{tab:my_label-2}- 
\ref{table-gap-3-diff} is tedious, 
we restrict our attention to the worst-case scenario. A worst-case instance of the $4$-user model is illustrated in Figure~\ref{fig:GWX1f1}. In this graph, the edges in $\mathcal{E}_1$ are removed due to knowledge of $(f_1, X_1)$, $\mathcal{E}_2$ is denoted by black solid lines, $\mathcal{E}_3$ by gray dotted lines, $\mathcal{E}_4$ by gray solid lines, $\mathcal{E}_5$ by black dashed lines, $\mathcal{E}_6$ by gray long-dashed lines, and $\mathcal{E}_7$ is denoted by black long-dashed lines. To characterize $\Delta(4,4)$, we construct a collection of graphs, obtained by conditioning on their respective side information, as shown in Figures~\ref{fig:Gf1f2_Gf1f3_Gf1f4},~\ref{fig:Gf1f2f3_Gf1f2f4_Gf1f3f2}, and~\ref{fig:Gf1f3f4_Gf1f4f2_Gf1f4f3}. The conditional entropies of these graphs are summarized in Table~\ref{tab:entropyvalues-4user}. 
We observe in Figure~\ref{fig:Gf1f2_Gf1f3_Gf1f4} (Left) that are no edges between $(x_1, x_2, x_3, x_4)\in \{0,1\}^4$ and $(x_1, x_2, x_3', x_4)\in \{0,1\}^4$ for $x_3\neq x_3'$. Hence, $H_2(f_2 \mid X_1, X_2, X_4,f_1)= 0$, and $G_{f_2\mid X_1,X_2,X_4,f_1,f_4}$ does not contain any edges. Similarly, in Figure~\ref{fig:Gf1f2_Gf1f3_Gf1f4} (Middle), there are no edges between $(x_1, x_2, x_3, x_4)\in \{0,1\}^4$ and $(x_1, x_2', x_3, x_4)\in \{0,1\}^4$. Hence, $H_2(f_3 \mid X_1, X_3, X_4,f_1) = 0$, and $G_{f_3\mid X_1,X_3,X_4,f_1,f_4}$ does not contain any edges. These observations yield
\begin{align}
    \label{eq:same_graphs}
    G_{f_4\mid X_1,X_4,f_1}\cup G_{f_2\mid X_1,X_2,X_4,f_1,f_4}=G_{f_4\mid X_1,X_4,f_1}\cup G_{f_3\mid X_1,X_3,X_4,f_1,f_4}=G_{f_4\mid X_1,X_4,f_1}
\end{align}
which can also be verified from Figures~\ref{fig:Gf1f2_Gf1f3_Gf1f4} (Right), \ref{fig:Gf1f3f4_Gf1f4f2_Gf1f4f3} (Middle), and \ref{fig:Gf1f3f4_Gf1f4f2_Gf1f4f3} (Right). In this case, we obtain
\begin{align}
\label{eq:W_cond_X1_f1}
H_2(W\mid X_1,f_1) \leq 2 \ ,\\
\label{eq:fj_cond}
H_2(f_{i_2}\mid X_1, X_{i_2}, f_1) + H_2(f_{i_3}\mid X_1, X_{i_2}, X_{i_3}, f_1, f_{i_2}) = 1 
\end{align}
for $i_2,i_3 \in [4]\setminus\{1\}$ and $i_2\ne i_3$. Plugging~\eqref{eq:W_cond_X1_f1} and~\eqref{eq:fj_cond} into~\eqref{eq-proof-lem-3-user-gap-genie-4}, we obtain $\max\limits_{\mathcal{N}\in \mathfrak{N}}\,\Delta_{\mathcal{N}}(4,4)\leq 1$ bit. Furthermore, the average gap over all possible topologies summarized in Tables~\ref{tab:Subgraphs derived from Gf-4user}-\ref{tab:entropyvalues-4user} --- assuming that each topology is equally likely and the gap for each half graph is upper bounded by $\Delta(3,3)$ --- satisfies $\Delta(4,4)\leq 2\Delta(3,3)=0.25$ bits.
\end{proof}

\begin{table}[]\small
    \centering
    \begin{tabular}{|c|c|}
    \hline
    Graph& Entropy expressions\\
    \hline
       $G_{W\mid X_1,f_1}$  &$H_2(W\mid X_1,f_1)$= 2  \\
       \hline
       $G_{f_2\mid X_1,X_2,f_1}$& $H_2(f_2\mid X_1,X_2, f_1)=\frac{1}{2}$\\
       \hline
       $G_{f_3\mid X_1,X_3,f_1}$& $H_2(f_3\mid X_1,X_3, f_1)=\frac{1}{2}$\\
       \hline
       $G_{f_4\mid X_1,X_4,f_1}$& $H_2(f_4\mid X_1,X_4, f_1)=1$\\
       \hline
         $G_{f_2\mid X_1,X_2,f_1}\cup G_{f_3\mid X_1,X_2,X_3,f_1,f_2}$  &$H_2(f_2\mid X_1,X_2,f_1)+H_2(f_3\mid X_1,X_2,X_3,f_1,f_2)$= 1  \\
       \hline
       $G_{f_2\mid X_1,X_2,f_1}\cup G_{f_4\mid X_1,X_2,X_4,f_1,f_2}$  &$H_2(f_2\mid X_1,X_2,f_1)+H_2(f_4\mid X_1,X_2,X_4,f_1,f_2)$= 1  \\
       \hline
       $G_{f_3\mid X_1,X_3,f_1}\cup G_{f_2\mid X_1,X_2,X_3,f_1,f_3}$  &$H_2(f_3\mid X_1,X_3,f_1)+H_2(f_2\mid X_1,X_2,X_3,f_1,f_3)$= 1  \\
       \hline
       $G_{f_3\mid X_1,X_3,f_1}\cup G_{f_4\mid X_1,X_3,X_4,f_1,f_3}$  &$H_2(f_3\mid X_1,X_3,f_1)+H_2(f_4\mid X_1,X_3,X_4,f_1,f_3)$= 1  \\
       \hline
       $G_{f_4\mid X_1,X_4,f_1}\cup G_{f_2\mid X_1,X_2,X_4,f_1,f_4}$  &$H_2(f_4\mid X_1,X_4,f_1)+H_2(f_2\mid X_1,X_2,X_4,f_1,f_4)$= 1  \\
       \hline
       $G_{f_4\mid X_1,X_4,f_1}\cup G_{f_3\mid X_1,X_3,X_4,f_1,f_4}$  &$H_2(f_4\mid X_1,X_4,f_1)+H_2(f_3\mid X_1,X_3,X_4,f_1,f_4)$= 1  \\
       \hline
    \end{tabular}
    \caption{\small Subgraphs and their conditional entropies  for the $4$-user NFCB problem (Figures~\ref{fig:Gf1f2_Gf1f3_Gf1f4},~\ref{fig:Gf1f2f3_Gf1f2f4_Gf1f3f2}, and~\ref{fig:Gf1f3f4_Gf1f4f2_Gf1f4f3}), corresponding to the worst-case configuration  illustrated in Figure~\ref{fig:GWX1f1}.}    \label{tab:entropyvalues-4user}
\end{table}

While our converses are proven tight for $K=N=3$, with $\Delta(3,3)=0.125$ bits (Proposition~\ref{prop-3-user-gap-genie}), and $K=N=4$, with $\Delta(4,4)\leq 0.25$ bit (Proposition~\ref{prop-4-user-gap-genie}), a challenging open problem is determining the optimality gap $\Delta(K,N)$ for general $(K,N)$ with general $\mathcal{S}_i$, $i\in[K]$ and for $q\geq 2$.

\section{Examples on \texorpdfstring{$(K=3,N=3)$}{(K=3,N=3)}-Computation Broadcast}
\label{sec:3-user-comp-broad}

In this section, we focus on $(K=3, N=3)$-NFCB, where $\mathbf X_{[3]}\sim\mathrm{Unif}(\mathbb F_q^3)$, and each user $i\in[3]$ has side information $\mathcal{S}_i={\bf X}_{\mathcal A_i}$ and requests a distinct function $f_i=f_i(\mathbf X_{[N]})$. 
We contrast the coding schemes in Section~\ref{sec:achievability} and the converse bounds in Section~\ref{sec:converse} through various examples. 
Examples~\ref{ex-nonlinear-comp},~\ref{ex:nonlinear-outperforms-linear}, and~\ref{ex-nonlinear-comp_self-complementary_graph} examine Boolean (non-linear) function demands, while Example~\ref{ex-linear-comp-jafar} focuses on linear function demands. 
Example~\ref{ex:nonlinear-outperforms-linear} demonstrates the advantage of non-linear encoding over scalar and vector linear encoding (Proposition~\ref{prop:advantage_nonlinear}), and upper bounds the zero-error and asymptotically lossless operational broadcast entropies by the performance of linear encoding (Proposition~\ref{prop:nonlinear-graph-separation}).

The next example considers Boolean (non-linear) function demands over $\mathbb{F}_2$.

\begin{ex}[{\bf \texorpdfstring{A $(3,3,\{X_i\}_{i\in[3]},\{f_i\}_{i\in[3]})$}{} Boolean computation broadcast model}]
\label{ex-nonlinear-comp}
Consider a NFCB model $(3,3,\{X_i\}_{i\in[3]},\{f_i\}_{i\in[3]})$, where elements of ${\bf X}_{[3]}\sim\mathrm{Unif}(\mathbb{F}_2^3)$, and the desired functions are distinct and given as
\begin{align}
\label{eq:desired_functions_example1}
f_1=X_1\wedge X_2\wedge X_3 \ ,\quad
f_2=X_1\vee X_2\vee X_3 \ ,\quad  f_3=X_1\wedge(X_2\vee X_3) \ .
\end{align}

Below, we present three approaches to design $M$: 

{\bf i) Slepian-Wolf coding approach.} 
Employing the Slepian-Wolf theorem~\cite{SlepWolf1973}, the minimum broadcast rate that allows for asymptotically lossless reconstruction of $\{X_j\}_{j\in[3]}$ at all $K=3$ users is given as 
\begin{align}
\label{eq:rate_SW_ex1}
R_{\mathrm{SW}}(3,3)=\max\limits_{i\in[3]}\,  H_2({\bf X}_{[3]\backslash i})=2\ \mbox{bits} 
\end{align}
which can be achieved through transmissions $X_1\oplus_2 X_2$ and $X_2\oplus_2 X_3$. Each user can recover $\{X_j\}_{j\in[3]}$ using $X_1\oplus_2 X_2$ and $X_2\oplus_2 X_3$ as well as their side information.

{\bf ii) Broadcast graph approach.} 
We next employ the broadcast graph approach (cf. Definition~\ref{def:broadcast_graph}), where $G_{f_1}=(\mathbb{F}_2^3,\mathcal{E}_1)$ given $\mathcal{S}_1=X_1$ is shown in Figure~\ref{fig:Graphs_ex1}-$(\mathrm{a})$. In $G_{f_1}$, we do not need to distinguish the vertices with $X_1=0$ because the outcome of $f_1 =0$ for any $(x_2,x_3)\in\mathbb{F}_2^2$. When $X_1=1$, $f_1 =1$ if and only if $X_2=X_3=1$, and otherwise $f_1=0$. Thus, $\mathcal{E}_1=\{(111,100), (111,101),(111,110)\}$. Hence, 
\begin{align}
\label{eq:example1_LB1}
H_\chi(G_{f_1},{\bf X}_{[N]}\mid X_1)\triangleq\min\limits_{\substack{P_{U_1\mid{\bf X}_{[3]}}:\\{\bf X}_{[3]}\in U_1\overset{\circ}{\in}\Gamma(G_{f_1})}}\, I_2({\bf X}_{[3]};U_1\mid X_1)=H_2(U_1\mid X_1)
=H_2(f_1\mid X_1)=0.406 \ \mbox{bits}
\end{align}
noting that $U_1\in\{\{111\},\mathcal{X}_{[3]}\backslash\{111\}\}$, and $U_1\overset{\circ}{\in}\Gamma(G_{f_1})$~\cite{alon1996source}, i.e., $H_2(U_1\mid {\bf X}_{[3]})=0$. 

Similarly, we construct $G_{f_2}$ and $G_{f_3}$, as shown in Figure~\ref{fig:Graphs_ex1}-$(\mathrm{b})$ and $(\mathrm{c})$, respectively. Furthermore, 
\begin{align}
\label{eq:example1_LB2}
H_\chi(G_{f_2},{\bf X}_{[N]}\mid X_2)=H_2(U_2\mid X_2)
=H_2(f_2\mid X_2)=0.406 \ \mbox{bits}
\end{align}
noting that $U_2\in\{\{000\},\mathcal{X}_{[3]}\backslash\{000\}\}$. 
\begin{align}
\label{eq:example1_LB3}
H_\chi(G_{f_3},{\bf X}_{[N]}\mid X_3)=H_2(U_3\mid X_3)
=H_2(f_3\mid X_3)=0.906 \ \mbox{bits}
\end{align}
noting that $U_3\in\{\{000,001,010,011,100\},\{101,110,111\}\}$. 

A universal lower bound, using~\eqref{eq:example1_LB1},~\eqref{eq:example1_LB2}, and~\eqref{eq:example1_LB3}, is given by 
\begin{align}
\label{eq:example1_lowerbound}
R^*(3,3)\geq \max\limits_{i\in[3]}\, H_\chi(G_{f_i},{\bf X}_{[N]}\mid X_i) =H_2(f_3\mid X_3)=0.906 \ \mbox{bits}
\end{align}
where the inequality can be strict as functions are not compatible (see Theorem~\ref{theo-lower-bound-mutual-info}). 

The broadcast graph, $G_{\mathrm B}$, which captures $(\{\mathcal{S}_i\}_{i\in[3]},\ \{f_i\}_{i\in[3]})$, is shown in Figure~\ref{fig:Graphs_ex1}~$(\mathrm{d})$. Its independence number is $\alpha(G_{\mathrm B})=4$, and it has the following set of MISs:
 \begin{align}
 \label{Gamma-3-user}
\Gamma(G_{\mathrm B})=\left\{\begin{array}{c}
       W_1=\{101,110\} \ ,
       W_2=\{000,010,011\} \ ,\ W_3=\{000,010,111\} \ , \, \\
       W_4=\{001,011,110\}\ ,\ W_5=\{ 010,100,101\} \ ,
       W_6=\{001,010,011,100\} 
  \end{array}\right\} \ .
  \end{align}

The achievable broadcast rate, employing~\eqref{eq-theo-general-achievable-rate-via-broadcast} from Theorem~\ref{theo-general-achievable-rate-via-broadcast}, is given by 
\begin{align}
\label{eq:mutual_info_ex1}
H_\kappa(G_{\mathrm B},\mathbf X_{[3]})=I_2({\bf X}_{[3]};W)=1.417 \ \mbox{bits}
\end{align}
where the conditional distribution $P^*_{W\mid \mathbf X_{[3]}}$ that achieves the optimum~\eqref{eq:mutual_info_ex1} is illustrated in Table~\ref{tab:optimal-joint-distribution}.

\begin{table}[t!]\footnotesize
 \setlength{\extrarowheight}{1pt}
\setlength{\tabcolsep}{2pt}
    \begin{center}
          \begin{tabular}
          {|c|c|c|c|c|c|c|c|c|}
            \hline&&&&&&&&\\[-1em]
            $W\,\backslash\, {\bf X}_{[3]}$&$000$&$001$&$010$&$011$&$100$&$101$&$110$&$111$\\
             \hline &&&&&&&&\\[-1em]
             $W_1=\{101,110\}$&$0$&$0$&$0$&$0$&$0$&$0.375$&$0.323$&$0$\\[0.1em]
            \hline&&&&&&&&\\[-1em]
             $W_2=\left\{000,010,011\right\}$&$0$&$0$&$0$&$0$&$0$&$0$&$0$&$0$  \\[0.1em]
             \hline&&&&&&&&\\[-1em]
             $W_3=\left\{
                 000,010,111 \right\}$&$1$&$0$&$0.413$&$0$&$0$&$0$&$0$&$1$\\[0.1em]
             \hline&&&&&&&&\\[-1em]
             $W_4=\left\{001,011,110\right\}$&$0$&$0.393$&$0$&$0.393$&$0$&$0$&$0.676$&$0$  \\[0.1em]
             \hline&&&&&&&&\\[-1em]
             $W_5=\left\{
                 010,100,101 \right\}$&$0$&$0$&$0.199$&$0$&$0.34$&$0.625$&$0$&$0$\\[0.1em]
             \hline&&&&&&&&\\[-1em]
             $W_6=\left\{001,010,011,100\right\}$&$0$&$0.607$&$0.387$&$0.607$&$0.66$&$0$&$0$&$0$  \\[0.1em]
             \hline
        \end{tabular}
        \end{center}
\caption{\small The optimal  $P^*_{W\mid \mathbf X_{[3]}}$ (cf. Example~\ref{ex-nonlinear-comp}), where $W_2
$ is not used by this numerical optimizer.}
\label{tab:optimal-joint-distribution}
\end{table}

\begin{figure*}[t!]
    \centering
\includegraphics[width=\textwidth]{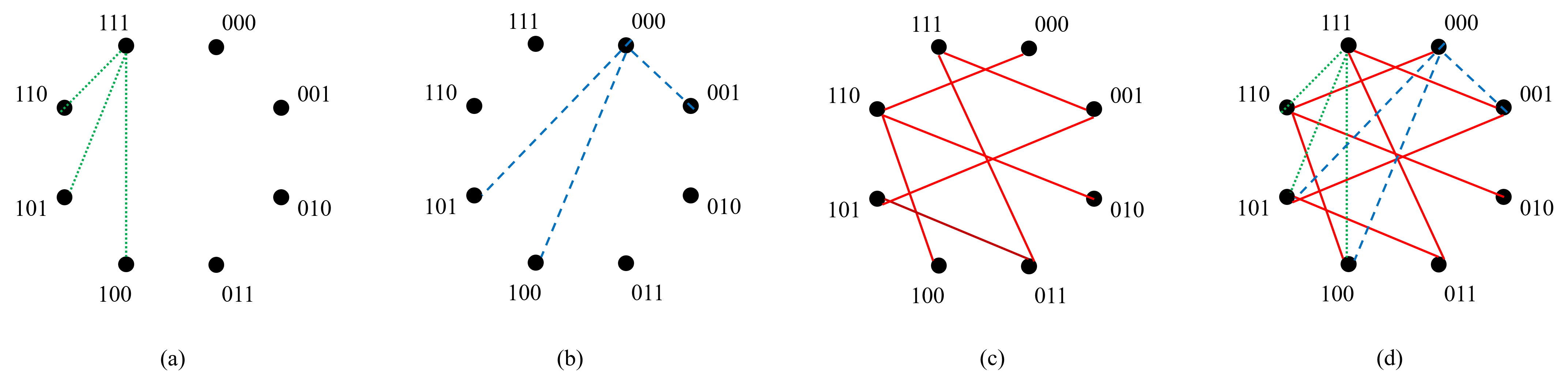}
\vspace{-0.4cm}
    \caption{\small Individual characteristic graphs and the broadcast graph (cf. Example~\ref{ex-nonlinear-comp}): (a) $G_{f_1}$, (b) $G_{f_2}$, and (c) $G_{f_3}$, based on users' demands $f_1$, $f_2$, and $f_3$, 
    respectively, and (d) the broadcast graph $G_{\mathrm B}$. To distinguish lines for different $i\in[K]$, we use different dash types in the individual characteristic graphs.}
    \vspace{-0.4cm}
    \label{fig:Graphs_ex1}
\end{figure*}

{\bf iii) Minimum entropy coloring-based approach.} 
From the sandwich property of $\minrank$~\cite{bar2011index}, $\chi(G_{\mathrm B})=\alpha(\overline{G}_{\mathrm B})=3$. 
Thus, the optimal one-shot zero-error rate is 
\begin{align}
R_{1,0}^*(3,3)=\log_2 \chi(G_{\mathrm B})=\log_2 3=1.585 \, \mbox{bits}.
\end{align}

Let $C=c_{G_{\mathrm B}}(\mathbf X_{[3]})\in\{R,B,G\}$ denote the color assigned to the source realization, where
\begin{align}
c_{G_{\mathrm B}}^{-1}(R)=\{000,111\},\quad
c_{G_{\mathrm B}}^{-1}(B)=\{001,010,011,100\},\quad
c_{G_{\mathrm B}}^{-1}(G)=\{101,110\} .
\end{align}
Under $\mathbf X_{[3]}\sim\mathrm{Unif}(\mathbb F_2^3)$, its unconditional distribution is $\bigl(\mathbb{P}(C=R),\mathbb{P}(C=B),\mathbb{P}(C=G)\bigr)=\left(0.25,0.5,0.25\right)$. 
Conditioned on the $i$-th source coordinate, the color distributions are
\begin{align}
\bigl(\mathbb{P}(C=R\mid X_1=0),\mathbb{P}(C=B\mid X_1=0),
\mathbb{P}(C=G\mid X_1=0)\bigr)
&=\left(0.25,0.75,0\right),\\
\bigl(\mathbb{P}(C=R\mid X_1=1),\mathbb{P}(C=B\mid X_1=1),
\mathbb{P}(C=G\mid X_1=1)\bigr)
&=\left(0.25,0.25,0.5\right),\\
\bigl(\mathbb{P}(C=R\mid X_i=x_i),\mathbb{P}(C=B\mid X_i=x_i),\mathbb{P}(C=G\mid X_i=x_i)\bigr)&=\left(0.25,0.5,0.25\right)
\end{align}
for every $i\in\{2,3\}$ and $x_i\in\{0,1\}$. Thus,
\begin{align}
H_2(C\mid X_1)&=\frac12 \mathsf H_2((0.25,0.75,0))+\frac12 \mathsf H_2((0.25,0.25,0.5))\approx 1.156 \ \text{bits},\\
H_2(C\mid X_2)&=H_2(C\mid X_3)=1.5 \ \text{bits}.
\end{align}
For this deterministic coloring, one can equivalently define $W=c_{G_{\mathrm B}}^{-1}(C)$, in which case $H_2(W\mid X_i)=H_2(C\mid X_i)$. Therefore, the conditional rate associated with this coloring is
\begin{align}
\label{eq:conditional_color_entropy}
\max_{i\in[3]}H_2(C\mid X_i)=H_2(W)=1.5 \ \text{bits}.
\end{align}

A proper minimum entropy coloring of $G_{\mathrm B}$ is achieved using a coloring   $c_{G_{\mathrm B}}\in\{R,B,G\}$ that satisfies $\mathbb{P}(R)=\mathbb{P}(W=\{000,111 \})=0.25$, $\mathbb{P}(B)=\mathbb{P}(W=\{001,010,011,100\})=0.5$, and $\mathbb{P}(G)=\mathbb{P}(W=\{101,110\})=0.25$ (see Figure~\ref{fig:ex1_coloring_and_its_complement} (Left) for $G_{\mathrm B}$). Hence, from~\eqref{eq:chromatic_entropy_vs_graph_entropy}
\begin{align}
\label{eq:rate_minentropy_ex1}
H_\chi(G_{\mathrm B},{\bf X}_{[3]})&\overset{(a)}{=} \min\limits_{\substack{P_{W\mid{\bf X}_{[3]}}:\\{\bf X}_{[3]}\in W,\;W\overset{\circ}{\in}\Gamma(G_{\mathrm B})}}\, H_3(W) \nonumber\\
&\overset{(b)}{=} H_{\chi,\mathrm B}^{\mathrm{SI}}(G_{\mathrm B},\mathbf X_{[3]}\mid \mathcal S_{[3]}=\mathbf X_{[3]})
=H_2(W)=\mathsf H_2((0.25,0.5,0.25))=1.5 \, \mbox{bits}\nonumber\\
&\overset{(c)}{>} \max_{i\in [3]}\, H_\kappa(G_{\mathrm B},{\bf X}_{[3]}\mid X_i)=1.406 \, \mbox{bits}
\end{align}
where $(a)$ holds since $W\overset{\circ}{\in}\Gamma$~\cite{alon1996source}, and equality in $(b)$ is specific to this example (~\eqref{eq:conditional_color_entropy}) and may not hold in general because $H_2(W)\geq H_2(W\mid X_i)$, for all $i\in[K]$, and step $(c)$ follows from~\eqref{eq:Rate_ach}, which is evaluated numerically. The solution in~\eqref{eq:rate_minentropy_ex1}, $H_\chi(G_{\mathrm B},{\bf X}_{[3]})$ is not optimal, and  can be tightened using $\GUn$
. Sending colorings of  $G^{\dot{\vee}n}_{\mathrm B}$ for sufficiently large $n$, followed by Slepian-Wolf coding, yields an achievable coding scheme (cf.~\cite{korner1971coding} and~\cite{feizi2014network}). However, the lower bound is not necessarily achievable as $\{f_i\}$ are not compatible (Propositions~\ref{prop_achievable-rate-via-broadcast_compatible} and~\ref{prop-2-user-rate_optimality}).

\begin{figure}[ht]
\centering
\includegraphics[width=0.25\textwidth]{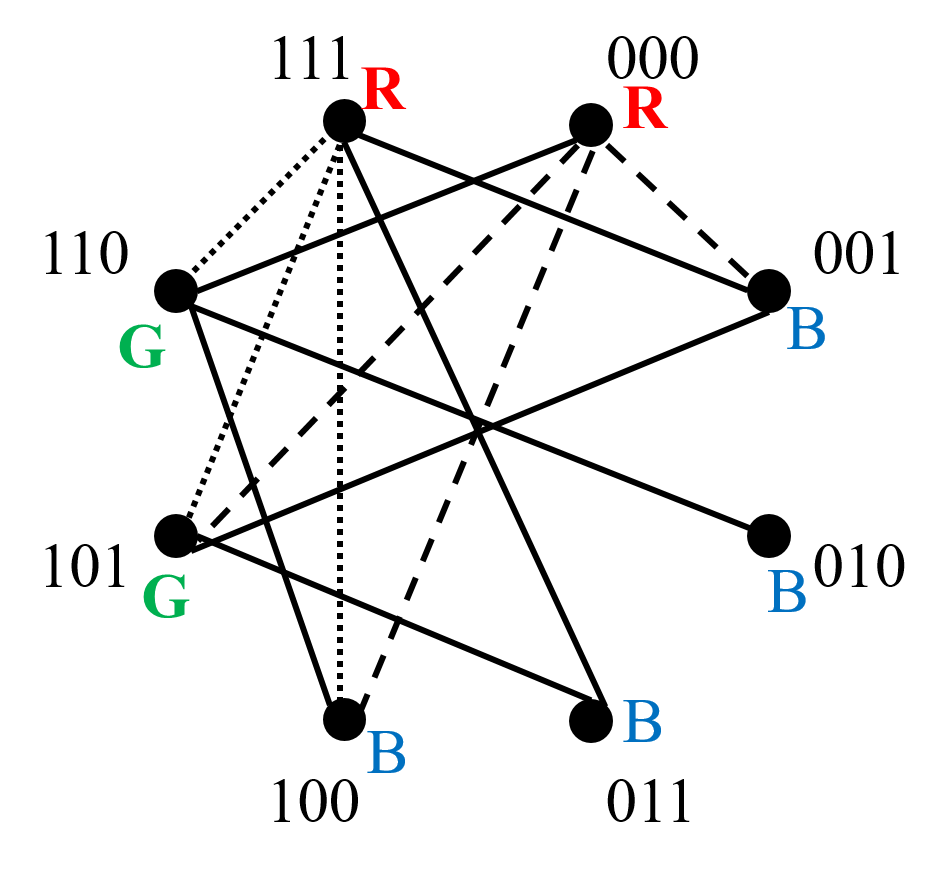}
\includegraphics[width=0.25\textwidth]{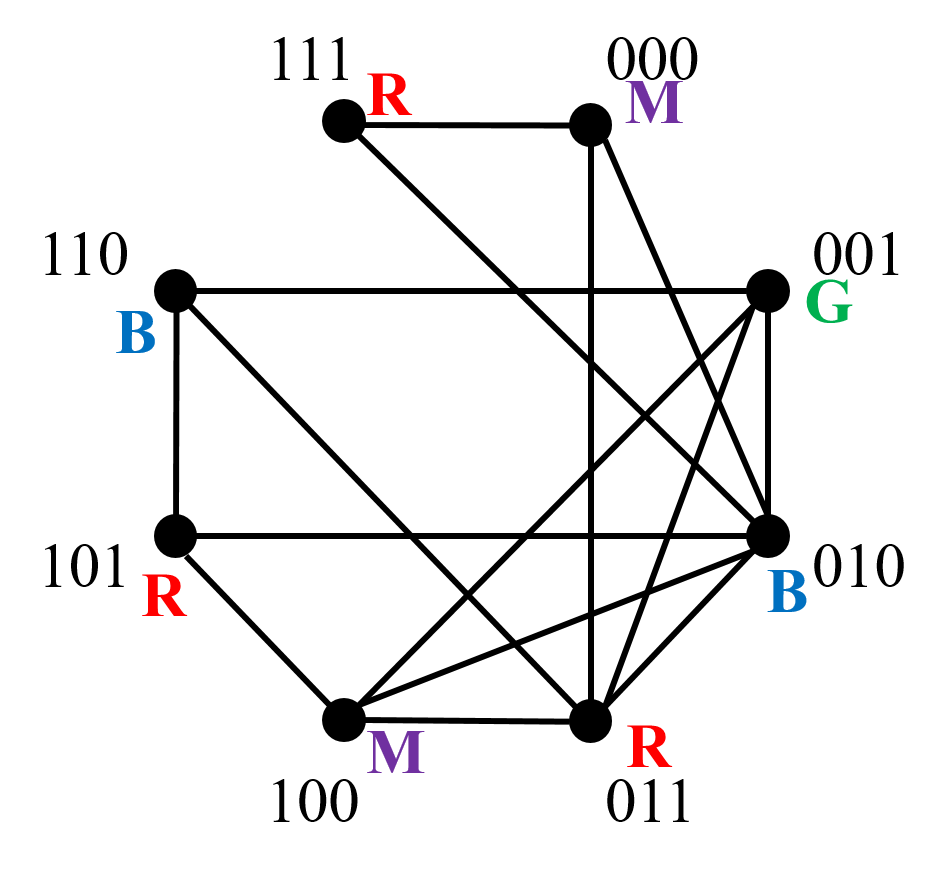}
\caption{\small (Left) The color encoded broadcast graph $G_{\mathrm B}$ (cf. Example~\ref{ex-nonlinear-comp}), where green $\textcolor{pptgreen}{\bf G}$, red $\textcolor{red}{\bf R}$, and blue $\textcolor{blue}{\bf B}$ represent different encoding classes, respectively. (Right) The color encoded $\overline{G}_{\mathrm B}$ (cf. Example~\ref{ex-nonlinear-comp}).}
\label{fig:ex1_coloring_and_its_complement}
\end{figure}

For the given $G_{\mathrm B}$, the number of maximal cliques is $6$, of which five are triangles, and one is a maximal edge, and the set of cliques is given as
\begin{align}
\label{eq:Omega-3-user}
\Omega(G_{\mathrm B})=\left\{\begin{array}{c}
W'_1=\{000,001,101\} \ , \ 
W'_2=\{000,100,110\} \ , \ 
W'_3=\{001,101,111\} \ , \\ 
W'_4=\{011,101,111\} \ , \ 
W'_5=\{100,110,111\} \ , \  
W'_6=\{010,110\} 
\end{array}\right\} .
\end{align}
The clique number $\omega(G_{\mathrm B})$ is the size of the largest clique in $G_{\mathrm B}$. Noting that there is no clique with more than three vertices, is $\omega(G_{\mathrm B})=\alpha(\overline{G}_{\mathrm B})=3$ (see  Figure~\ref{fig:ex1_coloring_and_its_complement} (Right) for $\overline{G}_{\mathrm B}$). 
We further note from~\eqref{eq:clique_entropy} that
\begin{align}
\label{eq:clique_entropy_ex1}
H_\omega(G_{\mathrm B},\mathbf X_{[3]})=H_2({\bf X}_{[3]}\mid W^')=1.417 \, \mbox{bits} 
\end{align}
where $W^'$ denotes the clique-set-valued random variable induced by $P^*_{\mathbf X_{[N]},W^'}=P_{\mathbf X_{[N]}}P^*_{W^'\mid{\mathbf X}_{[N]}}$. An optimal conditional distribution $P^*_{W^'\mid \mathbf X_{[3]}}$ that achieves~\eqref{eq:clique_entropy_ex1} is given in Table~\ref{tab:optimal-joint-distribution_clique}.

\begin{table}[t!]\footnotesize
 \setlength{\extrarowheight}{1pt}
\setlength{\tabcolsep}{2pt}
    \begin{center}
          \begin{tabular}
          {|c|c|c|c|c|c|c|c|c|}
            \hline&&&&&&&&\\[-1em]
            $W'\,\backslash\, {\bf X}_{[3]}$&$000$&$001$&$010$&$011$&$100$&$101$&$110$&$111$\\
             \hline &&&&&&&&\\[-1em]
             $W'_1=\{000, 001, 101\}$&$0.43$&$0.664$&$0$&$0$&$0$&$0.332$&$0$&$0$\\[0.1em]
             \hline&&&&&&&&\\[-1em]
             $W'_2=\{000, 100, 110\}$&$0.57$&$0$&$0$&$0$&$0.809$&$0$&$0.51$&$0$  \\
             [0.2em]
             \hline&&&&&&&&\\[-1em]
             $W'_3=\{001, 101, 111\}$&$0$&$0.336$&$0$&$0$&$0$&$0.168$&$0$&$0.217$\\
             \hline&&&&&&&&\\[-1em]
             $W'_4=\{011, 101, 111\}$&$0$&$0$&$0$&$1$&$0$&$0.5$&$0$&$0.648$  \\
             [0.2em]
             \hline&&&&&&&&\\[-1em]
             $W'_5=\{100, 110, 111\}$ &$0$&$0$&$0$&$0$&$0.191$&$0$&$0.12$&$0.135$\\[0.1em]
             \hline&&&&&&&&\\[-1em]
             $W'_6=\{010, 110\}$&$0$&$0$&$1$&$0$&$0$&$0$&$0.37$&$0$\\[0.1em]
             \hline
        \end{tabular}
        \end{center}
\caption{\small The conditional distribution $P^*_{W'\mid \mathbf X_{[3]}}$ 
that maximizes the RHS of~\eqref{eq:clique_entropy_ex1} (cf. Example~\ref{ex-nonlinear-comp}).}
\label{tab:optimal-joint-distribution_clique}
\end{table}

Moreover, neither $G_{\mathrm B}$ nor its complement $\overline{G}_{\mathrm B}$ contains an induced odd cycle of length at least five\footnote{
From the Strong Perfect Graph Theorem, $G_{\mathrm B}$ is perfect if no induced subgraph of $G_{\mathrm B}$ is an odd cycle of length at least five (that is $C_{2k+1}$, $k\geq 2$) or the complement of one~\cite{berge1961farbung}.}. Since $G_{\mathrm B}$ has eight vertices, it suffices to verify the absence of induced cycles of length five and seven in both $G_{\mathrm B}$ and $\overline{G}_{\mathrm B}$. Hence, by the Strong Perfect Graph Theorem, $G_{\mathrm B}$ is perfect\footnote{A perfect graph $G$ is a graph in which for every induced subgraph $G'\subseteq G$, 
$\omega(G')=\chi(G')$~\cite{hoang1987conjecture}.}~\cite{hoang1987conjecture}. Therefore, from~\cite{csiszar1990entropy} and noting that $\mathbf X_{[3]}\sim\mathrm{Unif}(\mathbb F_2^3)$
\begin{align}
\label{eq:perfect-graph-entropy-identity}
H_\kappa(G_{\mathrm B},\mathbf X_{[3]})+H_\kappa(\overline{G}_{\mathrm B},\mathbf X_{[3]})=H_2(\mathbf X_{[3]})=3 \, \mbox{bits} .
\end{align} 
From~\cite[Lemma~12]{alon1996source}, $H_\omega(G,X)=H_q(X)-H_\kappa(\overline{G},X)$ for every probabilistic graph $G$. Thus, $H_\omega(G_{\mathrm B},\mathbf X_{[3]})=3-H_\kappa(\overline{G}_{\mathrm B},\mathbf X_{[3]})=H_\kappa(G_{\mathrm B},{\mathbf X}_{[3]})=1.417$ bits, verifying~\eqref{eq:clique_entropy_ex1}. 
The optimal zero-error broadcast rate is (see~\eqref{eq:asymptotic_zero_error_clique_bound})
\begin{align}
\label{eq:asymptotic_zero_error_clique_bound_Ex1}
R_0^*(3,3)\geq\limsup_{n\to\infty}\frac{1}{n}H_\omega\left(G_{\mathrm B}^{(n)},\mathbf X_{[3]}^n\right)\neq H_\omega(G_{\mathrm B},\mathbf X_{[3]}) , 
\end{align} 
in general, because perfectness of $G_{\mathrm B}$ does not ensure that the exact operational block graph $G_{\mathrm B}^{(n)}$ is perfect, and $G_{\mathrm B}^{(n)}\neq \GUn$ in general, perfectness of the single-letter graph $G_{\mathrm B}$ alone does not establish a single-letter operational converse. 
Furthermore, the clique quantity for $G_{\mathrm B}^{(2)}$ satisfies $\frac12 H_\omega(G_{\mathrm B}^{(2)},\mathbf X_{[3]}^2)=1.364$ bits, for $G_{\mathrm B}^{(3)}$, it holds that $\frac13 H_\omega(G_{\mathrm B}^{(3)},\mathbf X_{[3]}^3)\gtrsim 1.301$, and for $G_{\mathrm B}^{(4)}$, we have $\frac14 H_\omega(G_{\mathrm B}^{(4)},\mathbf X_{[3]}^4)\gtrsim 1.234$,  showing that the single-letter clique entropy $H_\omega(G_{\mathrm B},\mathbf X_{[3]})=1.417$ bits does not tensorize, and does not provide a matching operational characterization. 
Similarly, employing Lemma~\ref{lem:smoothing_lemma} yields the same lower bound for $R^*(3,3)$ as in~\eqref{eq:asymptotic_zero_error_clique_bound_Ex1}.

Proposition~\ref{prop-genie}, from $H_2(f_3\mid X_3)=0.905$, $H_2(f_1\mid X_1, X_3, f_3)=0.25$, and $H_2(f_2 \mid X_2,X_1,X_3, f_1,f_3)=0$, yields 
 \begin{align}
 \label{eq-low-bound-genie-ex-3-user-nonlin}
 R^*(3,3) \geq H_2(f_3\mid X_3)+H_2(f_1\mid X_1, X_3, f_3) + H_2(f_2 \mid X_2,X_1,X_3, f_1,f_3) =1.155\, \mbox{ bits}\ .
 \end{align}
 Contrasting~\eqref{eq:rate_minentropy_ex1} 
with~\eqref{eq-low-bound-genie-ex-3-user-nonlin} showcases the tightness of the former bound  
versus Proposition~\ref{prop-genie}.

Contrasting the three schemes outlined above, we obtain the following. 
\begin{align}
\label{eq:rate_comparison_ex1}
1.155\leq R^*(3,3)\leq R_0^*(3,3)\leq H_\kappa(G_{\mathrm B},\mathbf X_{[3]})=1.417\leq  H_\chi(G_{\mathrm B},{\bf X}_{[3]})=1.5 \leq R_{\mathrm{SW}}(3,3)=2 .
\end{align}

\end{ex}

We next present an example (cf.~\cite[Example~2]{yao2024capacity}) for the problem of linear computation broadcast.

\begin{ex}[A {\bf \texorpdfstring{$(3,3,\{X_i\}_{i\in[3]},$ $\{\mathbf{X}_{[3]}^{\top}\mathbf{V}_i\}_{i\in[3]})$}{} linear computation broadcast model}]
\label{ex-linear-comp-jafar} 
Consider a $(3,3,\{X_i\}_{i\in[3]},$ $\{\mathbf{X}_{[3]}^{\top}\mathbf{V}_i\}_{i\in[3]})$ linear computation broadcast model, where ${\bf X}_{[3]}\sim\mathrm{Unif}(\mathbb{F}_3^3)$, and the desired functions are 
\begin{align}
\label{eq:desired_functions_example2}
f_1=\mathbf{X}_{[3]}^{\top}\mathbf{V}_1 ,\quad
f_2=\mathbf{X}_{[3]}^{\top}\mathbf{V}_2 ,\quad
f_3=\mathbf{X}_{[3]}^{\top}\mathbf{V}_3 
\end{align}
where $\mathbf{V}_1^{\intercal}=\small{\begin{bmatrix}0 & 1& 1\end{bmatrix}}$, $\mathbf{V}_2^{\intercal}=\small{\begin{bmatrix}1 & 0& 1\end{bmatrix}}$, and $\mathbf{V}_3^{\intercal}=\small{\begin{bmatrix}1 & 2& 0\end{bmatrix}}$. 

Below, we present three approaches to design $M$: 

{\bf i) MinRank approach.} 
Given the set of functions in~\eqref{eq:desired_functions_example2}, the vector $\mathbf{V}_i$ represents the coefficients of $f_i$, $i\in[3]$,  and each element $\mathbf{V}_i(j)$ for $j\in[3]$ corresponds to the coefficients assigned to $X_j\in\mathbb{F}_q$. 

Employing~\eqref{eq-min-rank-linear-demand} 
yields the following optimal linear vector index code length: 
\begin{align}
\label{eq-example-matrix-3-user-linear}
\minrank_q(\{{\bf V}_i, {\bf V}_i'\}_{i\in[3]})=\min_{\lambda_1, \lambda_2, \lambda_3}\, \left\{\rk\left(\begin{bmatrix}
    \lambda_{1} & 1 & 1 \\
    1 & \lambda_{2} & 1\\
    1 & 2 & \lambda_{3} 
\end{bmatrix} \right) \ : \ \lambda_i\in\mathbb{F}_3,\ \forall\, i\in[3]\right\}=2   
\end{align}
which follows from noting that ${\bf V}_1^{\prime\intercal}=\small{\begin{bmatrix}1 & 0 & 0 \end{bmatrix}}$, ${\bf V}_2^{\prime\intercal}=\small{\begin{bmatrix}0 & 1 & 0 \end{bmatrix}}$, and ${\bf V}_3^{\prime\intercal}=\small{\begin{bmatrix}0 & 0 & 1 \end{bmatrix}}$, respectively, because the side information of each user is $X_{i}={\bf X}_{[3]}^\intercal {\bf V}_{i}^{\prime}$ for $i\in [3]$. 
The optimal solution of~\eqref{eq-example-matrix-3-user-linear} is achieved by letting 
$\lambda_1^*=0$, $\lambda_2^*=0$, and $\lambda_3^*=2$, 
noting that the last row of the matrix in~\eqref{eq-example-matrix-3-user-linear} is a linear combination of the first two rows. The optimal index code with length $2$ is achievable when the sender node broadcasts
\begin{align}
\label{Aopt_Example3_standard}
{\bf W}^*=\begin{bmatrix}
    0 & 1 & 1 \\
    1 & 0 & 1
\end{bmatrix}{\bf X}_{[3]}^{\intercal} \ .
\end{align}


{\bf ii) Compatibility function-based approach.}  
Building on the compatibility function–based approach of~\cite{yao2024capacity, ravi2016broadcast}, 
we next employ a data-splitting approach to construct pairwise compatible transmissions and derive an upper bound on the total rate required to satisfy all users' demands.

Given user pair $(1,2)$, with demands given above, a compatible function is $Z_{12}=X_1\oplus_{3} X_2\oplus_{3} X_3$. Similarly, for user pairs $(1,3)$, and $(2,3)$,  functions $Z_{13}=X_1\oplus_{3} 2 X_2\oplus_{3} 2X_3$ and $Z_{23}=X_1\oplus_{3} 2X_2\oplus_{3} X_3$, respectively, are compatible. For any pair of distinct users $\{i_1,i_2\}$, the rate achieved by $Z_{{i_1}{i_2}}$ satisfies
\begin{align}
    \label{eq:Rij-constraint}
    R_{i_1 i_2}\leq H_3(Z_{{i_1}{i_2}})=H_3(f_{i_1}\mid X_{i_1})=1 \, \mbox{bit}, \quad 
    \{i_1,i_2\}\in[3]\ : \ i_1\neq i_2 \ .
\end{align}

Employing the data-splitting idea of~\cite[Example~2]{yao2024capacity}, we split each $X_j$, $j\in[3]$ into two disjoint sub-variables $X_j^l$, $l\in[2]$. 
Similarly, let $Z_{{i_1}{i_2}}^l$,  $l\in[2]$, denote the sub-variables for the compatible functions defined above, with $H_3(Z_{{i_1}{i_2}}^l)=0.5$ bits each. Consider the broadcast messages: (i) $Z_{12}^1$, (ii) $Z_{23}^2$, and (iii) $Z_{13}^1 \oplus_3 Z_{13}^2$. From (i) and (iii), $f_1$ is decoded; using (i) and (ii), $f_2$ is decoded; and finally from (ii) and (iii), $f_3$ is decoded. Thus, splitting each dataset into two subsymbols followed by jointly encoding the two instances, 
achieves the broadcast rate  
\begin{align}
 \label{ex-linear-rate-split}
 R_{\mathrm Compat.}(3,3)&= H_3(Z_{12}^1,Z_{23}^2,Z_{13}^1\oplus_{3}Z_{13}^{2})\nonumber\\
 &\overset{(a)}{=}H_3(Z_{12}^1)+H_{3}(Z_{23}^{2}\mid Z_{12}^1)+H_{3}(Z_{13}^1\oplus_{3}Z_{13}^{2}\mid Z_{12}^1, Z_{23}^{2})\nonumber\\
 &\overset{(b)}{=} H_3(Z_{12}^1)+H_{3}(Z_{23}^{2})+H_{3}(Z_{13}^1\oplus_{3}Z_{13}^{2})=1.5\,\mbox{bits} 
 \end{align}
where $(a)$ follows from the chain rule, and $(b)$ from 
the independence of $Z_{12}^1$, $Z_{23}^2$, and noting that $H_{3}(Z_{13}^1\oplus_{3}Z_{13}^{2}\mid Z_{12}^1, Z_{23}^{2})=H_{3}(X_1^1\oplus_{3}2X_2^1\oplus_{3}2X_3^1\oplus_{3}X_1^2\oplus_{3}2X_2^2\oplus_{3}2X_3^2\mid X_1^1\oplus_3 X_2^1\oplus_3 X_3^1, X_1^2\oplus_3 2X_2^2\oplus_3 X_3^2)=H_3(X_2^1\oplus_{3}X_3^1\oplus_{3}X_3^2\mid X_1^1\oplus_3 X_2^1\oplus_3 X_3^1, X_1^2\oplus_3 2X_2^2\oplus_3 X_3^2)=0.5$.

{\bf iii) Broadcast graph approach.} 
We next exploit the achievability result in 
Proposition~\ref{prop_achievable-rate-via-broadcast_compatible} for the linear function computation broadcast setting in this example. The sets of independent sets of each $G_{f_i}$, $i\in[3]$, are given as
\begin{align}
\label{Gamma-3-user-linear-separate}
\Gamma(G_{f_1})&=
\{\{\star 00,\star12,\star21\},\{\star01,\star10,\star22\},\{\star02,\star20,\star11\}\} \ ,\\
\Gamma(G_{f_2})&=\{ \{0\star0,1\star2,2\star1\},\{0\star1,1\star0,2\star2\},\{0\star2,2\star0,1\star1\}\}\ , \\ 
\Gamma(G_{f_3})&=\{\{00\star,11\star,22\star\}, \{02\star,10\star,21\star\},\{01\star,12\star,20\star\}\}
\end{align}
where $\star$'s denote don't care terms, i.e., values in  $\star$ positions may be chosen arbitrarily without altering the outcome. 

The resulting broadcast graph, $G_{\mathrm B}=(\mathbb F_3^3,\mathcal E_{\mathrm B})$, with $|\mathcal{V}(G_{\mathrm B})|=27$, can be derived from $(\{\mathcal{S}_i\}_{i\in[3]},\ \{f_i\}_{i\in[3]})$. Employing the method of~\cite{tsukiyama1977new}, we enumerated all MISs of a graph, and numerically\footnote{The implementation is available at \href{https://github.com/kumarkpvijith/Non-linear-function-computation-broadcast.git}{https://github.com/kumarkpvijith/Non-linear-function-computation-broadcast.git}.}, and verified that $G_{\mathrm B}$ has $225$ MISs. Table~\ref{tab:w-lin} presents only those MISs with nonzero probabilities that optimize $H_\chi(G_{\mathrm B},{\bf X}_{[3]})$ (see~\eqref{eq:chromatic_entropy_vs_graph_entropy}). 
The sender can then exploit Slepian-Wolf coding to encode $W^n$ against the side information $X_i^n$ simultaneously for all three users. Thus, random binning at any rate $R>\max\limits_{i\in[3]}H_3(W\mid X_i)=H_{\chi,\mathrm B}^{\mathrm{SI}}(G_{\mathrm B},\mathbf X_{[3]}\mid \mathcal S_{[3]}=\mathbf X_{[3]})
$ allows all users to recover $W^n$ with probability of error tending to zero. Hence, the side-information-assisted rate 
is
\begin{align}
\label{eq:ach_MinRank_Example2}
R^*(3,3)\leq H_{\chi,\mathrm B}^{\mathrm{SI}}(G_{\mathrm B},\mathbf X_{[3]}\mid \mathcal S_{[3]}=\mathbf X_{[3]})
=1.72 \   \mbox{trits}  
\end{align}
where $P^*_W$ given in Table~\ref{tab:w-lin} that achieves~\eqref{eq:chromatic_entropy_vs_graph_entropy} assigns positive probabilities to disjoint MISs $\Gamma(\GU)$.

\begin{table}[t!]\footnotesize
\centering
\setlength{\tabcolsep}{4pt}
\renewcommand{\arraystretch}{1.2}
\begin{tabular}{|c|c||c|c|}
\hline
$W$ & $\mathbb{P}(W)$ & $W$ & $\mathbb{P}(W)$ \\
\hline
$W_1=\{000,012,121,201\}$ & $\frac{4}{27}$ & $W_5=\{021,101,200,212\}$ & $\frac{4}{27}$ \\
$W_2=\{001,010,120,222\}$ & $\frac{4}{27}$ & $W_6=\{022,102,111,210\}$ & $\frac{4}{27}$ \\
$W_3=\{002,011,110,220\}$ & $\frac{4}{27}$ & $W_7=\{122,202,211\}$ & $\frac{3}{27}$ \\
$W_4=\{020,100,112,221\}$ & $\frac{4}{27}$ & & \\
\hline
\end{tabular}
\caption{\small Distribution $P^*_W$ that optimizes $H_\chi(G_{\mathrm B},{\bf X}_{[3]})$ (see~\eqref{eq:chromatic_entropy_vs_graph_entropy}) (cf. Example~\ref{ex-linear-comp-jafar}).}
\label{tab:w-lin}
\end{table}

$G_{\mathrm B}$ has $|\mathcal{V}|=27$, $\mathrm{deg}(v)=12$ for every $v\in\mathcal{V}$ and is vertex-transitive\footnote{A vertex-transitive graph is a graph where every vertex looks identical from a structural perspective, meaning that for any two vertices, there is an automorphism (a symmetry-preserving map) that moves one vertex to the other~\cite{godsil2001algebraic}.}. For a vertex-transitive graph $G$ under uniform distribution, $H_\kappa(G,X)\triangleq\log\frac{|\mathcal{V}(G)|}{\alpha(G)}$. Thus, from Theorem~\ref{theo-general-achievable-rate-via-broadcast}, the unconditional K\"orner graph-entropy is
\begin{align}
\label{eq:VT_graph_entropy}
H_\kappa(G_{\mathrm B},\mathbf X_{[3]})=\log_3 \frac{27}{4}=1.738 \   \mbox{trits}  
\end{align}
which does not exploit each user's side information, unlike the $1.72$ ternary symbols scheme. 

Because $G_{\mathrm B}$ is vertex-transitive and $\alpha(\overline{G}_{\mathrm B})=\omega(\overline{G}_{\mathrm B})=4$, and $\overline{G}_{\mathrm B}$ is also vertex-transitive, i.e., $H_\kappa(\overline{G}_{\mathrm B},\mathbf X_{[3]})=H_\kappa(G_{\mathrm B},\mathbf X_{[3]})=1.738 \ \mbox{trits}$. 
Furthermore, $\alpha(G_{\mathrm B})=\omega(G_{\mathrm B})=4$ and $\chi(G_{\mathrm B})=7$. Thus, $G_{\mathrm B}$ is not perfect. From~\cite[Lemma~12]{alon1996source}, $H_\omega(G_{\mathrm B},\mathbf X_{[3]})=3-H_\kappa(\overline{G}_{\mathrm B},\mathbf X_{[3]})=1.262<H_\kappa(G_{\mathrm B},\mathbf X_{[3]})=1.738$ trits.

From~\eqref{eq:ach_MinRank_Example2} (see Table~\ref{tab:w-lin}) and~\eqref{eq:VT_graph_entropy}, the asymptotically lossless functional-computation rate satisfies
\begin{align}
R^*(3,3)\leq \min\Bigl\{H_{\chi,\mathrm B}^{\mathrm{SI}}(G_{\mathrm B},\mathbf X_{[3]}\mid \mathcal S_{[3]}=\mathbf X_{[3]})
,\ H_\kappa(G_{\mathrm B},\mathbf X_{[3]})\Bigr\}= 1.72 \ \mbox{trits} .
\end{align}

The single-letter broadcast graph $G_{\mathrm B}$ has $225$ MISs, of which $162$ have cardinality four and $63$ have cardinality three. Solving the resulting convex minimax problem yields $H_\kappa(G_{\mathrm B},\mathbf X_{[3]}\mid X_i)=1.667$ trits. 
Adopting the definition~\eqref{eq:single_letter_side_info_assisted_def_G_B}, where, numerically, the three conditional mutual informations are equal at the optimum:
\begin{align}
H_{\kappa,\mathrm B}^{\mathrm{SI}}\left(G_{\mathrm B},\mathbf X_{[3]}\mid\mathcal S_{[3]}\right)=I_3({\bf X}_{[3]};W\mid X_i)=1.667 \ \mbox{trits},\quad i\in[3]
\end{align}
where $W\in\Gamma(G_{\mathrm B})$. Applying this construction repeatedly treats two sequences as requiring separation whenever $\exists t\in[n],\exists i\in[3]:\ X_{i,t}=\widetilde X_{i,t},\ f_i(\mathbf x_{[N],t})\neq\widetilde f_i(\widetilde{\bf x}_{[N],t})$ where the witnessing user $i\in[3]$ may change with $t$.

Applying the genie-aided lower bound of Proposition~\ref{prop-genie} yields~\eqref{eq-proof-lem-3-user-gap-genie-2}, which gives $R^*(3,3)\geq 1$, and is not tight. 

The operational block graph $G_{\mathrm B}^{(n)}=\left(\mathbb F_3^{3n},\mathcal E_{\mathrm B}^{(n)}\right)$ for this example is a Cayley graph because adjacency depends only on the difference between two source blocks~\cite{magnus2004combinatorial}. Furthermore, $G_{\mathrm B}^{(n)}$ is vertex-transitive. This gives
\begin{align}
\label{eq:Cayley_group_cosets}
\frac1n\log_3\frac{3^{3n}}{\alpha(G_{\mathrm B}^{(n)})}\overset{(a)}{\leq} \frac1n H_{\chi,\epsilon}\Bigl(G_{\mathrm B}^{(n)},\mathbf X_{[3]}^n\Bigr)\overset{(b)}{\leq} \frac1nH_3(c_n(\mathbf X_{[3]}^n)) 
\end{align}
for a proper coloring $c_n\in\mathscr C(G_{\mathrm B}^{(n)})$, where $(a)$ holds because $\mathbf X_{[3]}^n\sim\mathrm{Unif}(\mathbb F_3^{3n})$, the graph has $|\mathcal{V}(G_{\mathrm B}^{(n)})|=3^{3n}$ equally likely vertices, and because every color class of a proper coloring is an independent set, its cardinality cannot exceed $\alpha(G_{\mathrm B}^{(n)})$. Thus, the largest color probability is at most $\frac{\alpha(G_{\mathrm B}^{(n)})}{3^{3n}}$. Because this holds for every proper coloring, for the minimum-entropy coloring, $H_{\chi,\epsilon}\Bigl(G_{\mathrm B}^{(n)},\mathbf X_{[3]}^n\Bigr)\geq \log_3\frac{3^{3n}}{\alpha(G_{\mathrm B}^{(n)})}$. 
Inequality $(b)$ holds by~\eqref{eq:operational-chromatic-entropy}.

\paragraph{K\"orner graph entropy of block Cayley graphs}

Since ${\bf X}_{[3]}^n\sim\mathrm{Unif}(\mathbb F_3^{3n})$, the normalized graph entropy of a vertex-transitive graph satisfies
\begin{align}
\label{eq:normalized-korner-cayley-example-2}
\frac1n H_\kappa\left(G_{\mathrm B}^{(n)},{\bf X}_{[3]}^n\right)=\frac1n\log_3\frac{|\mathcal V(G_{\mathrm B}^{(n)})|}{\alpha(G_{\mathrm B}^{(n)})}=3-\frac1n\log_3\alpha(G_{\mathrm B}^{(n)})
\end{align}
which is determined by the exponential growth rate of the independence number.

For $n=2$, $H_\kappa(G_{\mathrm B}^{(2)},\mathbf X_{[3]}^2)=\log_3 \frac{|\mathcal{V}(G_{\mathrm B}^{(2)})|}{\alpha(G_{\mathrm B}^{(2)})}\leq 3$ by employing $|\mathcal{V}(G_{\mathrm B}^{(2)})|=3^6=729$ and $\alpha(G_{\mathrm B}^{(2)})\geq 27$ because each color class contains $3^{6-3}=27$ vertices and every coset is an independent set of $G_{\mathrm B}^{(2)}$ due to the zero error property. 
The $27$-color splitting construction yields the exact two-letter result $\frac12 H_{\chi,\epsilon}\Bigl(G_{\mathrm B}^{(2)},\mathbf X_{[N]}^2\Bigr)=1.5$ trits. Together with the $27$-color splitting construction, we obtain the exact-two letter result $\frac12 H_\kappa(G_{\mathrm B}^{(2)},\mathbf X_{[3]}^2)\leq 1.5$ trits. This scheme, similar to the splitting scheme detailed in part ii), only separates pairs that are genuinely confusable in $G_{\mathrm B}^{(2)}$ where $G_{\mathrm B}^{(2)}\subsetneq G_{\mathrm B}^{\dot\vee 2}$. 
For $n=3$, the known splitting construction uses five ternary transmissions, giving rate $1.667$ trits. Thus, $\frac13 H_{\chi,\epsilon}\Bigl(G_{\mathrm B}^{(3)},\mathbf X_{[N]}^3\Bigr)\leq 1.667$ trits. Furthermore, $\alpha(G_{\mathrm B}^{(3)})\geq 108$, where the construction on the RHS is obtained by concatenating an independent set of size $27$ for $n=2$ and an independent set of size $4$ for $n=1$. Thus, $H_\kappa(G_{\mathrm B}^{(3)},\mathbf X_{[3]}^3)\leq \log_3 \frac{3^9}{108}$, yielding $\frac13H_\kappa(G_{\mathrm B}^{(3)},\mathbf X_{[3]}^3)\leq 1.58$ trits. For $n>3$, we can upper bound the graph entropy, as detailed next.

\paragraph{Upper bound induced by the linear construction and the unrestricted K\"orner graph entropy construction}
Denoting by $\alpha(G_{\mathrm B}^{(n)})$, and $\alpha_{\mathrm{lin}}(G_{\mathrm B}^{(n)})$ unrestricted and linear independence numbers, respectively, it follows that 
\begin{multline}
\label{eq:unrestricted_vs_linear}
\alpha(G_{\mathrm B}^{(n)})=\max\left\{|\mathcal I|:\mathcal I\subseteq \mathcal V(G_{\mathrm B}^{(n)})\subseteq\mathbb F_3^{3n},\ \mathcal I\text{ is independent}\right\}\\
\geq \alpha_{\mathrm{lin}}(G_{\mathrm B}^{(n)})\triangleq\max\left\{|\mathcal K|:\mathcal K\leq\mathbb F_3^{3n},\ \mathcal K\text{ is an independent linear subspace}\right\}=3^{\lfloor1.5n\rfloor} 
\end{multline}
where $K\leq \mathbb F_3^{3n}$ means that $K$ is a linear subspace of $\mathbb F_3^{3n}$. 
The linear splitting construction (part ii)) produces an independent subspace of dimension $\left\lfloor1.5n\right\rfloor$. Thus, 
the achievable rate for linear encoding is
\begin{align}
\label{eq:normalized-korner-linear-upper}
R_n^{\mathrm{lin}}= 3-\frac1n\log_3(\alpha_{\mathrm{lin}}(G_{\mathrm B}^{(n)}))=\lceil1.5n\rceil n^{-1}\leq 1.5+0.5n^{-1} ,
\end{align}
where the even-blocklength subsequence is bounded above by $1.5$, while the odd-blocklength upper bound approaches $1.5$ from above. 
Equality in~\eqref{eq:normalized-korner-linear-upper} would require proving $\alpha(G_{\mathrm B}^{(n)})=3^{\lfloor1.5n\rfloor}$. The linear kernel argument establishes that no larger linearly independent subspace exists, but it does not exclude larger non-linear independent sets.

From~\eqref{eq:normalized-korner-cayley-example-2} and~\eqref{eq:unrestricted_vs_linear}, 
\begin{align}
\label{eq:normalized-korner-upper}
\frac1n H_\kappa\left(G_{\mathrm B}^{(n)},{\bf X}_{[3]}^n\right)=3-\frac1n\log_3(\alpha(G_{\mathrm B}^{(n)}))\leq 1.5+0.238n^{-1}
\end{align}
where for $n=2\ell$, $\ell\in\mathbb Z$, $3-\lfloor1.5n\rfloor n^{-1}=1.5$, and because all coordinates can be paired and each pair contributes an independent set of size $27$, we have $\alpha(G_{\mathrm B}^{(n)})\geq 27^\ell$. Hence, $3-\frac1n\log_3(\alpha(G_{\mathrm B}^{(n)}))\leq 3-1.5=1.5$. 
For $n=2\ell+1$, $\ell\in\mathbb Z$, only $2\ell=n-1$ coordinates can be paired, and one remains unpaired, where the paired coordinates contribute $27^\ell$, while the remaining one-letter coordinate contributes an independent set of size $\alpha(G_{\mathrm B}^{(1)})=4$. Thus, $\alpha(G_{\mathrm B}^{(n)})\geq 4\cdot 27^\ell$. Hence, $3-\frac1n\log_3(\alpha(G_{\mathrm B}^{(n)}))\leq 3-\frac1n(3\ell+\log_3 4)=1.5+(1.5-\log_3 4)n^{-1}$. 
Upper-bounding these two cases and noting that $1.5-\log_3 4=0.238$, the inequality in~\eqref{eq:normalized-korner-upper} follows.

\paragraph{A general lower bound}
Employing the clique--coclique bound, which gives $\alpha(G_{\mathrm B}^{(n)})\omega(G_{\mathrm B}^{(n)})\leq|\mathcal V(G_{\mathrm B}^{(n)})|$,  and exploiting~\eqref{eq:normalized-korner-cayley-example-2} yield
\begin{align}
\label{eq:clique_coclique_vertex_transitive_ex2}
H_\kappa\left(G_{\mathrm B}^{(n)},{\bf X}_{[3]}^n\right)\geq \log_3 \omega(G_{\mathrm B}^{(n)}) .
\end{align}

Furthermore, since $G_{\mathrm B}^{(n)}$ is Cayley, using~\cite{conlon2024clique}, the clique number exponent  equals $1$,
\begin{align} 
\label{eq:clique_no_limit_ex2}
\lim\limits_{n\to\infty}\frac1n\log_3\omega(G_{\mathrm B}^{(n)})=1 .
\end{align}

Combining the clique lower bound~\eqref{eq:clique_coclique_vertex_transitive_ex2} and the linear upper bound~\eqref{eq:normalized-korner-linear-upper} gives
\begin{align}
\label{eq:normalized-korner-sandwich}
\frac1n\log_3\omega(G_{\mathrm B}^{(n)})\leq\frac1n H_\kappa\left(G_{\mathrm B}^{(n)},{\bf X}_{[3]}^n\right)\leq R_n^{\mathrm{lin}} 
\end{align}
where the second inequality follows from~\eqref{eq:unrestricted_vs_linear}. 
Therefore, taking the limit of both sides in~\eqref{eq:normalized-korner-sandwich},
\begin{align}
\label{eq:regularized-korner-range}
1\leq\lim_{n\to\infty}\frac1n H_\kappa\left(G_{\mathrm B}^{(n)},{\bf X}_{[3]}^n\right)\leq 1.5.
\end{align}
However, the lower bound in~\eqref{eq:regularized-korner-range} does not imply that the regularized K\"orner entropy equals $1$. 
The Cayley symmetry provides the exact representation of $\lim\limits_{n\to\infty}\frac1n H_\kappa\left(G_{\mathrm B}^{(n)},{\bf X}_{[3]}^n\right)$ in~\eqref{eq:normalized-korner-cayley-example-2}. While the superadditivity of $\log_3 \alpha(G_{\mathrm B}^{(n)})$ guarantees the existence of the limit, computing the exact value  
of $\alpha(G_{\mathrm B}^{(n)})$ remains open.

The asymptotic zero-error operational multiletter broadcast graph entropy~\eqref{eq:asymptotic-zero-error-broadcast-entropy} satisfies  
\begin{align}
R_0^*(3,3)
\leq 1.5 \ \mbox{trits}
\end{align}
due to the splitting condition~\eqref{eq:normalized-korner-linear-upper}. The asymptotically lossless operational broadcast graph entropy~\eqref{eq:asymptotically-lossless-operational-entropy} satisfies
\begin{align}
1\leq R^*(3,3)
\end{align}
noting that $R^*(3,3)
\geq H_3(f_i\mid X_i)=1$ for each $i\in[3]$. 
Then, employing the hierarchy in~\eqref{eq:broadcast-entropy-hierarchy}
\begin{align}
1\leq R^*(3,3)
\leq R_0^*(3,3)
\leq 1.5
\end{align}

Because $G_{\mathrm B}^{(n)}$ is vertex-transitive under uniform distribution, its automorphisms can move the set of all maximum cliques of $G_{\mathrm B}^{(n)}$ symmetrically across all vertices. Randomly selecting an image $W$ from this set (i.e., with size exactly $\omega(G_{\mathrm B}^{(n)})$), and then selecting $\mathbf X_{[3]}$ uniformly from $W$, symmetry ensures  $\mathbf X_{[3]}\sim\mathrm{Unif}(\mathbb F_3^3)$. Furthermore, $\mathbf X_{[3]}\mid W=w$ is uniform over the maximum clique $w$, and thus $H_3(\mathbf X_{[3]}\mid W=w)=\log_3\omega(G_{\mathrm B}^{(n)})$.  
Thus, the clique entropy under the uniform source reduces to the logarithm of its clique number:
\begin{align}
R^*(3,3)\geq \limsup_{n\to\infty}\frac{1}{n}H_\omega\left(G_{\mathrm B}^{(n)},\mathbf X_{[N]}^n\right)= \limsup\limits_{n\to\infty} \frac1n \log_3\omega(G_{\mathrm B}^{(n)})=1
\end{align}
where the last step follows from noting from~\eqref{eq:clique_no_limit_ex2} that  $\frac1n \log_3\omega(G_{\mathrm B}^{(n)})=1$.

Contrasting the schemes outlined above, and employing~\eqref{eq:broadcast-entropy-hierarchy} we obtain
\begin{align}
\label{eq:inequality_split_vs_one-shot_graph}
1\leq R^*(3,3)\leq R_0^*(3,3)\leq R_{\mathrm Compat.}(3,3)=1.5&<H_{\kappa,\mathrm B}^{\mathrm{SI}}\left(G_{\mathrm B},\mathbf X_{[3]}\mid\mathcal S_{[3]}\right)=1.667\nonumber\\
&<H_{\chi,\mathrm B}^{\mathrm{SI}}(G_{\mathrm B},\mathbf X_{[3]}\mid \mathcal S_{[3]}=\mathbf X_{[3]})
=1.72\nonumber\\
&<H_\kappa(G_{\mathrm B},{\bf X}_{[3]})=1.738 \nonumber\\
&<\minrank_q(\{{\bf V}_i, {\bf V}_i'\}_{i\in[3]})=R_{\mathrm{SW}}(3,3)=2 .
\end{align}
The $H_{\kappa,\mathrm B}^{\mathrm{SI}}\left(G_{\mathrm B},\mathbf X_{[3]}\mid\mathcal S_{[3]}\right)=1.667$ trits optimization 
uses the union graph before taking blocklength products. That graph operation loses which user creates each edge. This demonstrates a strict multiletter gain and shows that the single-letter conditional independent-set expression need not tensorize, and should be interpreted as a single-letter achievable bound rather than the operational broadcast graph entropy.

Both the compatibility function-based approach~\cite{ravi2016broadcast} (part ii) with rate $R_{\mathrm Compat.}(3,3)=1.5$ trits given in~\eqref{ex-linear-rate-split}, and the subspace decomposition-based approach of~\cite[Example~2]{yao2024capacity},  optimal under linear encoding, with $R_n^{\mathrm{lin}}$  in~\eqref{eq:normalized-korner-linear-upper} exploit precisely this lost information, which is not surprising because a vector linear code can exploit cross-coordinate alignment that a single-letter randomized independent-set construction cannot represent. 
\end{ex}


\begin{remark}
\label{remark-Non-Linear-Vector-Encoding}
In the $(K, N, \{\mathcal{S}_i\}_{i\in[K]}, \{f_i\}_{i\in[K]})$ linear computation broadcast setting, Example~\ref{ex-linear-comp-jafar} shows that a data-splitting-based linear vector encoding scheme, building on the fixed-length minrank framework~\cite{LeeDimaindex} and exploiting source statistics yields savings in the average broadcast length compared to~\eqref{eq-theo-general-achievable-rate-via-broadcast}. 
\end{remark}


We next present an example, which is a known non-linear index-coding instance into our computation-broadcast model~\cite{sharififar2021broadcast}. We demonstrate that non-linear encoder provides rate savings over the best linear encoder. 
\begin{ex}[A $(26,26,\{\mathcal{S}_i\}_{i\in[26]},\{f_i\}_{i\in[26]})$ Boolean computation broadcast model with linear demands with a strictly better non-linear encoder~\cite{sharififar2021broadcast}]
\label{ex:nonlinear-outperforms-linear}

Consider the $26$-user unicast index-coding instance of~\cite{sharififar2021broadcast}, interpreted as a computation-broadcast problem. The sources $X_1,\ldots,X_{26}\overset{\mathrm{i.i.d.}}{\sim}{\mathrm{Bern}}\left(\frac12\right)$ are mutually independent, user $i$ requests the linear function $f_i(\mathbf X_{[26]})=X_i$, $i\in[26]$, and has access to the side-information set $\mathcal S_i$ specified in~\cite{sharififar2021broadcast}.

For each user, define the interfering set $\mathcal B_i\triangleq[26]\setminus\bigl(\mathcal A_i\cup\{i\}\bigr)$, where $\mathcal A_i$ indexes the sources available to user~$i$ as side information. Consider the interfering sets
\begin{align}
\label{eq:nonlinear-example-interference-sets}
\mathcal B_1&=\{2,3,5,11,12,13,15,22,23,24,25,26\},\quad
\mathcal B_2=\{1,3,6,11,12,13,16,21,22,24,25,26\},\nonumber\\
\mathcal B_3&=\{1,2,4,11,12,13,14,21,22,23,24,26\},\nonumber\\
\mathcal B_4&=\{5,6,14,15,16\},\quad
\mathcal B_5=\{4,6,14,15,16\},\quad
\mathcal B_6=\{4,5,14,15,16\},\quad
\mathcal B_7=\{17\},\nonumber\\
\mathcal B_8&=\{1,5,7,11,15,17,18\},\quad
\mathcal B_9=\{2,6,7,12,16,17,19\},\quad
\mathcal B_{10}=\{3,4,7,13,14,17,20\},\nonumber\\
\mathcal B_{11}&=\{1,2,3,5,12,13,15,21,23,24,25,26\},\quad
\mathcal B_{12}=\{1,2,3,6,11,13,16,21,22,23,25,26\},\nonumber\\
\mathcal B_{13}&=\{1,2,3,4,11,12,14,21,22,23,24,25\},\quad
\mathcal B_{14}=\{4\},\quad
\mathcal B_{15}=\{5\},\quad
\mathcal B_{16}=\{6\},\quad
\mathcal B_{17}=\{7\},\nonumber\\
\mathcal B_{18}&=\{1,5,7,8,11,15,17\},\quad
\mathcal B_{19}=\{2,6,7,9,12,16,17\},\quad
\mathcal B_{20}=\{3,4,7,10,13,14,17\},\nonumber\\
\mathcal B_{21}&=\{4,6,14,16,22\},\quad
\mathcal B_{22}=\{4,6,14,16,21\},\quad
\mathcal B_{23}=\{4,5,14,15,24\},\nonumber\\
\mathcal B_{24}&=\{4,5,14,15,23\},\quad
\mathcal B_{25}=\{5,6,15,16,26\},\quad
\mathcal B_{26}=\{5,6,15,16,25\}.
\end{align}
Thus, user~$i$ observes the side-information set
\begin{align}
\mathcal S_i=\{X_j:j\in\mathcal A_i\},
\qquad
\mathcal A_i=[26]\setminus\bigl(\mathcal B_i\cup\{i\}\bigr) , \quad i\in[26].
\end{align}
The broadcast graph for this instance is $G_{\mathrm B}\triangleq\left(\mathbb F_2^{26},\mathcal E_{\mathrm B}\right)$, where, for distinct $\mathbf x,\widetilde{\mathbf x}\in\mathbb F_2^{26}$,
\begin{align}
\label{eq:26-variable-broadcast-graph}
\{\mathbf x_{[26]},\widetilde{\mathbf x}_{[26]}\}\in\mathcal E_{\mathrm B}\iff\exists i\in[26]:\quad \mathbf x_{\mathcal A_i}=\widetilde{\mathbf x}_{\mathcal A_i}\quad\text{and}\quad f_i(\mathbf x_{[26]})=x_i\neq f_i(\widetilde{\mathbf x}_{[26]})=\widetilde x_i .
\end{align}
Because adjacency depends only on the difference $\mathbf x_{[26]}-\widetilde{\mathbf x}_{[26]}$, $G_{\mathrm B}$ is a Cayley graph and is thus vertex-transitive.

\begin{prop}[The advantage of non-linear encoding over scalar and vector linear encoding]
\label{prop:advantage_nonlinear}
The instance described in Example~\ref{ex:nonlinear-outperforms-linear} has the following properties (cf.~\cite{sharififar2021broadcast}):
\begin{enumerate}[(i)]
\item The set $\mathcal J=\{1,2,3,11,12,13\}$
satisfies $\mathcal S_i\subseteq\mathbf X_{\mathcal J^c}$, $i\in \mathcal J$, where $\mathcal J^c=[26]\backslash \mathcal J$.
\item There exists a scalar non-linear zero-error encoder with six binary transmissions, $M=(Z_1,\ldots,Z_6)\in\mathbb F_2^6$.
\item No scalar or vector linear encoder over $\mathbb F_2$ attains rate $6$.
\end{enumerate}
\end{prop}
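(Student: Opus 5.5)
The plan is to take the three claims in order. Claims (i) and (ii) are finite combinatorial checks on the broadcast graph $G_{\mathrm B}$ in \eqref{eq:26-variable-broadcast-graph}. Claim (iii) is a structural, characteristic-dependent argument that we import from and adapt to~\cite{sharififar2021broadcast}.

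For (i), we read off from \eqref{eq:nonlinear-example-interference-sets} that for every $i\in\mathcal J=\{1,2,3,11,12,13\}$ one has $\mathcal J\setminus\{i\}\subseteq\mathcal B_i$. For example, $\mathcal B_1\supseteq\{2,3,11,12,13\}$ and $\mathcal B_{11}\supseteq\{1,2,3,12,13\}$, and the other four cases are checked the same way. Hence $\mathcal A_i\subseteq\mathcal J^c$, which is the claim. In the language of Definition~\ref{def:side_info_graph}, the side-information graph induced on $\mathcal J$ has no edges, so it is acyclic and $\mathrm{MAIS}\geq 6$. Theorem~\ref{theo:MAIS} then shows that six bits are necessary, so the encoder of (ii) is rate-optimal. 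A direct graph-theoretic version of the same argument also works: fix $\mathbf x_{\mathcal J^c}$ and vary $\mathbf x_{\mathcal J}$ over $\mathbb F_2^6$. Any two such vectors differ in some $x_i$ with $i\in\mathcal J$ and agree on $\mathcal A_i$, so these $64$ vertices form a clique of $G_{\mathrm B}$. Theorem~\ref{thm:zero_error_clique_converse} then gives $H_2(M)\geq 6$.

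For (ii), we will write down the explicit six Boolean functions $Z_1,\dots,Z_6$ of~\cite{sharififar2021broadcast}. Each is a non-linear combination built on the Fano/non-Fano-type gadget formed by the users $\{4,\dots,10,14,\dots,26\}$. We then verify that $\mathbf x\mapsto(Z_1,\dots,Z_6)(\mathbf x)$ is a proper $64$-coloring of $G_{\mathrm B}$. Because $G_{\mathrm B}$ is a Cayley graph, it suffices to check that for each $i$ and each difference pattern $\mathbf d$ with $d_i=1$ and $\mathbf d_{\mathcal A_i}=0$, the map $M$ separates $\mathbf x$ and $\mathbf x+\mathbf d$. Equivalently, we exhibit for each user $i$ an explicit decoder $x_i=g_i(M,\mathbf x_{\mathcal A_i})$. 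This is a long but mechanical case check, and can be confirmed by enumeration over the at most $2^{12}$ interfering patterns per user.

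For (iii), which we expect to be the main obstacle, we follow the matroid route. Suppose a vector linear code over $\mathbb F_2$ of rate $6$ existed, with block length $t$ and $6t$ transmitted symbols. Since the clique in (i) has dimension exactly $6t$ at block length $t$, the encoding matrix restricted to $\mathcal J$ must be invertible. This forces a rigid linear structure: each of the remaining users' decodability conditions becomes a subspace-containment relation among the column spaces assigned to $X_4,\dots,X_{26}$. We will show that these relations encode a representation of the Fano and non-Fano configurations simultaneously. The Fano-type part (users $4$–$10$, $14$–$20$) forces characteristic $2$, while the non-Fano-type part (users $21$–$26$ together with $1,2,3,11,12,13$) forbids characteristic $2$. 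This yields a contradiction. The delicate point is the vector case, where the subspaces have dimension $t$ rather than $1$; there we will use the dimension-counting/common-information argument of~\cite{sharififar2021broadcast} for subspace arrangements instead of scalar determinants. The scalar case then follows as the special case $t=1$.
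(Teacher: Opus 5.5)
Parts (i) and (ii) are fine and follow the paper: the MAIS/clique converse, and the explicit encoder. One caveat on (ii): because $M$ is non-linear, separation must be checked at every base point $\mathbf x$, not only over $2^{12}$ difference patterns. Explicit decoders are easy, though. For example, user~4 reads $X_4\oplus X_6$ and $X_4\oplus X_5$ from $Z_1,Z_3$, and reads $X_4X_5\oplus X_4X_6\oplus X_5X_6$ from $Z_2\oplus Z_4\oplus Z_6$.

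The gap is in (iii): the characteristic dichotomy you propose is false for this instance. Take a basis $e_1,e_2,e_3,e_1',e_2',e_3'$ and assign:
\begin{itemize}
\item $X_1,X_2,X_3\mapsto e_1,e_2,e_3$;
\item $X_4,X_5,X_6,X_7\mapsto e_1+e_2,\,e_2+e_3,\,e_1+e_3,\,e_1+e_2+e_3$;
\item $X_8,X_9,X_{10}\mapsto e_2,e_3,e_1$;
\item indices $11$--$20$ to the primed analogues;
\item $X_{21},\dots,X_{26}\mapsto e_1,e_1',e_2,e_2',e_3,e_3'$.
\end{itemize}
This is just the linear part of $(Z_1,\dots,Z_6)$.

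For the same side-information pattern over $\mathbb F_3$, every user decodes. For instance, user~4 needs $e_1+e_2\notin\mathrm{span}(e_2+e_3,e_1+e_3)$, which holds because it would force coefficient $2\neq 0$ on $e_3$. So a rate-$6$ scalar linear code exists in odd characteristic. Hence no part of the instance forces characteristic~$2$, and an argument producing a contradiction in every characteristic would prove something false.

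Over $\mathbb F_2$, the same assignment satisfies every user except $4,5,6$, since $e_1+e_2=(e_2+e_3)+(e_1+e_3)$. So users $21$--$26$ together with $\mathcal J$ cannot forbid characteristic~$2$ on their own either. The obstruction sits at users $4$--$6$, which you placed in the part meant to force characteristic~$2$. Users $7$--$10$ and $17$--$20$ force no field at all; they force a consistency relation.

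Over $\mathbb F_2$, what must be shown is exactly the inclusion the paper asserts without derivation, $\mathrm{col}(\mathbf H_4)\subseteq\mathrm{span}\{\mathbf H_5,\mathbf H_6,\mathbf H_{15},\mathbf H_{16}\}$, for every rate-$6$ vector linear code. It follows by plain dimension counting at any blocklength $t$. Write $V_j=\mathrm{col}(\mathbf H_j)$, $Q_p=V_p\oplus V_{p+10}$ for $p=1,2,3$, and $L_{12}=V_4+V_{14}$, $L_{23}=V_5+V_{15}$, $L_{13}=V_6+V_{16}$.

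(a) Since $\mathbf H_{\mathcal J}$ is invertible and $\mathcal J\setminus\{i\}\subseteq\mathcal B_i$, each user $i\in\mathcal J$ forces $V_j\subseteq\bigoplus_{k\in\mathcal J\setminus\{i\}}V_k$ for all $j\in\mathcal B_i$. Hence $L_{12}\subseteq Q_1\oplus Q_2$, $L_{23}\subseteq Q_2\oplus Q_3$, $L_{13}\subseteq Q_1\oplus Q_3$, and $(V_{21},\dots,V_{26})=(V_1,V_{11},V_2,V_{12},V_3,V_{13})$.

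(b) Users $14$--$16$ give $\dim L_{12}=2t$. Users $21$--$24$ give $L_{12}\cap Q_1=L_{12}\cap Q_2=\{0\}$. So $L_{12}$ is the graph of an isomorphism $\varphi_{12}:Q_1\to Q_2$, and users $21$--$26$ likewise give $\varphi_{23}$ and $\varphi_{13}$.

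(c) Users $8,18$ force $\dim(V_1+V_5+V_7+V_{11}+V_{15}+V_{17})\le 4t$, i.e., $V_7+V_{17}\subseteq Q_1+L_{23}$. Users $9,19$ and $10,20$ similarly give $V_7+V_{17}\subseteq Q_2+L_{13}$ and $V_7+V_{17}\subseteq Q_3+L_{12}$. The triple intersection is isomorphic to $\ker(\varphi_{13}-\varphi_{23}\varphi_{12})$. Since user~7 gives $\dim(V_7+V_{17})=2t$, this forces $\varphi_{13}=\varphi_{23}\varphi_{12}$.

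(d) $L_{12}\cap(L_{13}+L_{23})$ is the graph of $\varphi_{12}$ over $\ker(\varphi_{13}+\varphi_{23}\varphi_{12})=\ker(2\varphi_{13})$. In characteristic~$2$ this kernel is all of $Q_1$, so $V_4\subseteq L_{12}\subseteq V_5+V_6+V_{15}+V_{16}$ and user~4 fails. In odd characteristic the kernel is trivial, consistent with the $\mathbb F_3$ code above.

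This handles the vector case uniformly in $t$, with no imported common-information inequalities.
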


\begin{proof}[Proof of Proposition~\ref{prop:advantage_nonlinear}]
The sender broadcasts the following six binary symbols (see~\cite[Proposition 6]{sharififar2021broadcast}):
\begin{align}
\label{eq:nonlinear-example-encoder}
Z_1={}&X_1\oplus X_4\oplus X_6\oplus X_7
       \oplus X_{10}\oplus X_{21},\nonumber\\
Z_2={}&X_{11}\oplus X_{14}\oplus X_{16}\oplus X_{17}
       \oplus X_{20}\oplus X_{22}\oplus (X_4X_6\oplus X_4X_7\oplus X_6X_7) ,\nonumber\\
Z_3={}&X_2\oplus X_4\oplus X_5\oplus X_7
       \oplus X_8\oplus X_{23},\nonumber\\
Z_4={}&X_{12}\oplus X_{14}\oplus X_{15}\oplus X_{17}
       \oplus X_{18}\oplus X_{24}\oplus (X_4X_5\oplus X_4X_7\oplus X_5X_7) ,\nonumber\\
Z_5={}&X_3\oplus X_5\oplus X_6\oplus X_7
       \oplus X_9\oplus X_{25},\nonumber\\
Z_6={}&X_{13}\oplus X_{15}\oplus X_{16}\oplus X_{17}
       \oplus X_{19}\oplus X_{26}\oplus (X_5X_6\oplus X_5X_7\oplus X_6X_7) .
\end{align}
Because $Z_2$, $Z_4$, and $Z_6$ contain quadratic Boolean terms, the encoder $\psi:\mathbb F_2^{26}\to\mathbb{F}_2^6$ in~\eqref{eq:nonlinear-example-encoder} is nonlinear. 

Using the broadcast message $M=(Z_1,\ldots,Z_6)$ and its side information, every user can recover its requested source symbol. Hence, the non-linear scheme achieves
\begin{align}
R^{\mathrm{non-lin}}=6 \,\mbox{bits} .
\end{align}
Moreover, the set $\mathcal J=\{1,2,3,11,12,13\}$ forms a maximum acyclic induced subgraph, i.e., none of these six users knows another source indexed by $\mathcal J$, yielding the matching converse (from the MAIS lower bound~\eqref{eq:MAIS_bound})
\begin{align}
R^*(26,26)\geq 6 \,\mbox{bits} .
\end{align}
However, no scalar or vector linear encoder over $\mathbb F_2$ can attain rate $6$ for the side-information pattern in~\eqref{eq:nonlinear-example-interference-sets}, i.e., $R^{*\mathrm{lin}}>6$ bits. 
For blocklength $n$, a linear encoder has the form $M_n=\sum\limits_{j} \mathbf H_j X_j^n$ where $\mathbf H_j\in\mathbb F_2^{R\times n}$ is the matrix multiplying message $X_j^n$. After user $i$ substracts its side information, it observes $\mathbf H_iX_i^n+\sum\limits_{j\in\mathcal B_i} \mathbf H_jX_j^n$. It can recover $X_i^n$ only if the subspace generated by $\mathbf H_i$ contributes $n$ new dimensions beyond the interfering subspace:
\begin{align}
\label{eq:decodability_condition}
\rk(\mathbf H_{\{i\}\cup \mathcal B_i})=\rk(\mathbf H_{\mathcal B_i})+n .
\end{align}
If instead, $\rk(\mathbf H_{\{i\}\cup \mathcal B_i})=\rk (\mathbf H_{\mathcal B_i})$, then $\mathbf H_i$ lies in the span of the interference, and user $i$ cannot decode $X_i^n$. 
If a vector-linear code of rate $6$ existed, its entire broadcast space would have dimension $6n$ for a blocklength $n$, and every encoding matrix $\mathbf H_{j'}$ would have to be expressed as a linear combination of $\mathbf H_j$, $j\in \mathcal J$. 
Every linear combination carried by $(X_4^n,X_{14}^n)$ can already be generated using the encoding subspaces of $(X_5^n,X_{15}^n)$ and $(X_6^n,X_{16}^n)$. Because $\mathrm{col}(\mathbf H_4)\subseteq \mathrm{span}\{\mathbf H_5,\mathbf H_6,\mathbf H_{15},\mathbf H_{16}\}$, the contribution $\mathbf H_4X_4^n$ does not occupy any new dimension beyond interference, i.e., $\rk(\mathbf H_{\{4\}\cup \mathcal B_4})=\rk (\mathbf H_{\mathcal B_4})$. However, decodability requires $\rk(\mathbf H_{\{4\}\cup \mathcal B_4})=\rk(\mathbf H_{\mathcal B_4})+n$. Thus, $X_4^n$ is not decodable. Similarly, users $5$ and $6$ cannot decode their desired functions.
Thus, even though every demanded function $f_i({\bf X}_{[26]})=X_i$ is linear,
\begin{align}
R^*(26,26)=R^{\mathrm{non-lin}}=6<R^{*\mathrm{lin}} .
\end{align}
The strict improvement is entirely due to non-linear broadcast encoding.
\end{proof}

Thus, in this example, non-linear encoding strictly outperforms scalar and vector linear encoding.

\begin{prop}[Zero-error and asymptotically lossless operational broadcast entropies]
\label{prop:nonlinear-graph-separation}
For the computation-broadcast instance described above,
\begin{align}
\omega(G_{\mathrm B})=\chi(G_{\mathrm B})=64,\qquad \alpha(G_{\mathrm B})=2^{20},
\end{align}
and, under the uniform source distribution,
\begin{align}
\label{eq:clique_korner_chrom}
H_\omega(G_{\mathrm B},\mathbf X_{[26]})=H_\kappa(G_{\mathrm B},\mathbf X_{[26]})=H_\chi(G_{\mathrm B},\mathbf X_{[26]})=6 .
\end{align}
Furthermore, the zero-error and asymptotically lossless operational broadcast entropies coincide.
\begin{align}
R_0^*(26,26)
=R^*(26,26)
=6<R^{*\mathrm{lin}}.
\end{align}
\end{prop}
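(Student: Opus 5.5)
The proof computes the three graph parameters of $G_{\mathrm B}$ from two objects: the coordinate set $\mathcal J=\{1,2,3,11,12,13\}$ of Proposition~\ref{prop:advantage_nonlinear}(i), which produces a large clique, and the six-symbol non-linear encoder $\psi$ in~\eqref{eq:nonlinear-example-encoder}, which produces a small coloring. Vertex-transitivity then fixes the independence number. The operational statement follows by repeating the clique argument blockwise, or by a Fano argument restricted to the users in $\mathcal J$.

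\textbf{Single-letter quantities.} Consider $\mathcal K\triangleq\{\mathbf x\in\mathbb F_2^{26}:\mathbf x_{\mathcal J^c}=\mathbf 0\}$, which has $2^6=64$ elements. For distinct $\mathbf x,\widetilde{\mathbf x}\in\mathcal K$, pick $i\in\mathcal J$ with $x_i\neq\widetilde x_i$. By Proposition~\ref{prop:advantage_nonlinear}(i), $\mathcal A_i\subseteq\mathcal J^c$, so $\mathbf x_{\mathcal A_i}=\widetilde{\mathbf x}_{\mathcal A_i}$. By~\eqref{eq:26-variable-broadcast-graph} the two vectors are therefore adjacent. Hence $\mathcal K$ and all its translates $\mathbf x+\mathcal K$ are cliques, and $\omega(G_{\mathrm B})\ge 64$.

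Conversely, $\psi$ is a zero-error encoder by Proposition~\ref{prop:advantage_nonlinear}(ii). By the argument in the proof of Theorem~\ref{thm:zero_error_clique_converse}, a zero-error encoder is a proper coloring, so $\chi(G_{\mathrm B})\le 2^6=64$. Together with $\omega\le\chi$ this gives $\omega=\chi=64$.

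Because $G_{\mathrm B}$ is a Cayley graph, hence vertex-transitive, the clique--coclique bound gives $\alpha(G_{\mathrm B})\le 2^{26}/64=2^{20}$. By pigeonhole, some color class of $\psi$ has size at least $2^{26}/64$, so $\alpha(G_{\mathrm B})=2^{20}$.

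For the entropies I use the ordering $H_\omega\le H_\kappa\le H_\chi$ from~\eqref{eq:ordering_of_various_graph_quantities}, applied with $n=1$. For the upper end, $H_\chi(G_{\mathrm B},\mathbf X_{[26]})\le H_2(\psi(\mathbf X_{[26]}))\le 6$. For the lower end, take the clique-valued variable $W'=\mathbf X_{[26]}+\mathcal K$. Under the uniform source, $\mathbf X_{[26]}$ is uniform on $W'$ given $W'$, so $H_\omega\ge H_2(\mathbf X_{[26]}\mid W')=6$. Hence all three quantities equal $6$, which is~\eqref{eq:clique_korner_chrom}.

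\textbf{Operational entropies.} For the upper bound, apply $\psi$ symbolwise to each block. This is a proper coloring of $G_{\mathrm B}^{(n)}$ with entropy at most $6n$, so $R_0^*(26,26)\le 6$, and by~\eqref{eq:broadcast-entropy-hierarchy} also $R^*(26,26)\le R_0^*(26,26)\le 6$.

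For the lower bound, the cleanest route avoids smoothing altogether. For every $i\in\mathcal J$ we have $\mathcal S_i^n\subseteq\mathbf X_{\mathcal J^c}^n$. So from $(M_n,\mathbf X_{\mathcal J^c}^n)$ one can run all six decoders $g_{i,n}$, $i\in\mathcal J$, and recover $\mathbf X_{\mathcal J}^n$ with error probability at most $P_{\mathrm e}^{(n)}$. Fano's inequality then gives
\begin{align}
H_2(M_n)\ge I_2(M_n;\mathbf X_{\mathcal J}^n\mid\mathbf X_{\mathcal J^c}^n)\ge 6n-h_2(P_{\mathrm e}^{(n)})-6nP_{\mathrm e}^{(n)} ,
\end{align}
using that $\mathbf X_{\mathcal J}^n$ is uniform and independent of $\mathbf X_{\mathcal J^c}^n$. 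Hence $R^*(26,26)\ge 6$.

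As a graph-theoretic cross-check, the translates of $\{\mathbf x^n:\mathbf x^n_{\mathcal J^c}=\mathbf 0\}$ are cliques of $G_{\mathrm B}^{(n)}$. Indeed, any two such sequences differ at some coordinate $t$ in some $i\in\mathcal J$, while user $i$'s side information agrees over the whole block. Thus $\tfrac1nH_\omega(G_{\mathrm B}^{(n)},\mathbf X^n_{[26]})\ge 6$, and Theorem~\ref{thm:vanishing_error_clique_converse} with Lemma~\ref{lem:smoothing_lemma} gives the same bound. Combining the two directions with $R^{*\mathrm{lin}}>6$ from Proposition~\ref{prop:advantage_nonlinear}(iii) yields $R_0^*(26,26)=R^*(26,26)=6<R^{*\mathrm{lin}}$.

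\textbf{Expected obstacle.} The step that needs the most care is the vanishing-error converse. In the clique route, the smoothing term $\epsilon n\log_2|\mathcal V(G_{\mathrm B})|$ must be controlled uniformly in $n$ before letting $\epsilon\downarrow0$. The Fano route sidesteps this issue, which is why I would present it as the main argument. It does, however, depend on Proposition~\ref{prop:advantage_nonlinear}(i) holding exactly as stated: every user in $\mathcal J$ must have side information inside $\mathbf X_{\mathcal J^c}$. That inclusion is the single fact I would re-verify against the interfering sets in~\eqref{eq:nonlinear-example-interference-sets}.
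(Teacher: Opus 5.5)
Your proposal is correct and follows the paper's proof essentially step for step: the $64$-clique obtained by varying $\mathbf X_{\mathcal J}$, the coloring given by $\psi$, the clique--coclique bound plus pigeonhole for $\alpha(G_{\mathrm B})=2^{20}$, componentwise repetition of $\psi$ for the upper bound, and Fano's inequality over the users in $\mathcal J$ after revealing $\mathbf X_{\mathcal J^c}^n$ for the lower bound. The only variation is that you obtain the three single-letter entropies from the sandwich $6\le H_\omega\le H_\kappa\le H_\chi\le 6$, whereas the paper uses the vertex-transitive formulas $H_\kappa=\log_2(|\mathcal V|/\alpha)$ and $H_\omega=\log_2\omega$; your route is slightly more self-contained, and your Fano step (a bound on $I_2(M_n;\mathbf X_{\mathcal J}^n\mid\mathbf X_{\mathcal J^c}^n)$) is stated more precisely than the paper's.
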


\begin{proof}[Proof of Proposition~\ref{prop:nonlinear-graph-separation}]
Let $G_{\mathrm B}=(\mathcal V(G_{\mathrm B}), \mathcal E(G_{\mathrm B}))$
, with $\mathcal V(G_{\mathrm B})=\mathbb F_2^{26}$ and $|\mathcal V(G_{\mathrm B}|=2^{26}$. Two realizations are adjacent if some user has the same side information under both realizations but requests different values.

The users indexed by $\mathcal J=\{1,2,3,11,12,13\}$ form an independent set in the side-information graph, i.e., none of these users knows another source indexed by $\mathcal J$. 
Fixing the sources $\mathbf X_{\mathcal J^c}$ and allowing $\mathbf X_{\mathcal J}$ to vary freely produces $2^6$ source realizations. Any two distinct realizations differ in at least one coordinate $X_i$, $i\in\mathcal J$, while user $i$'s side information is unchanged. Hence, the corresponding vertices are adjacent in $G_{\mathrm B}$, and $\omega(G_{\mathrm B})\geq 2^6=64$. 
On the other hand, the non-linear construction broadcasts $M=(Z_1,\ldots,Z_6)$ (cf.~\eqref{eq:nonlinear-example-encoder}), and permits zero-error reconstruction at every user. Thus, the message induces a proper coloring $G_{\mathrm B}$ with at most $64$ colors. Hence,
\begin{align}
\label{eq:omega_lower_bound_ex3}
64\leq\omega(G_{\mathrm B})\leq\chi(G_{\mathrm B})\leq64 .
\end{align}

Because adjacency depends only on the difference between two source vectors, $G_{\mathrm B}$ is a Cayley graph and is therefore vertex-transitive. The clique--coclique bound gives
\begin{align}
\label{eq:clique_coclique_ex3}
\alpha(G_{\mathrm B})\omega(G_{\mathrm B})
\leq|\mathcal V(G_{\mathrm B})|=2^{26} .
\end{align}
Hence, from~\eqref{eq:omega_lower_bound_ex3} and~\eqref{eq:clique_coclique_ex3}, $\alpha(G_{\mathrm B})\leq2^{20}$. 
Conversely, the $2^6$-coloring partitions $2^{26}$ vertices into $2^6$ independent color classes. At least one class must contain at least $2^{20}$ vertices, so $\alpha(G_{\mathrm B})\geq 2^{20}$. Hence, $\alpha(G_{\mathrm B})=2^{20}$, and 
\begin{align}
\label{eq:kappa_bound_ex3}
H_\kappa(G_{\mathrm B},\mathbf X_{[26]})=\log_2\frac{|\mathcal V(G_{\mathrm B}|}{\alpha(G_{\mathrm B})}=\log_2\frac{2^{26}}{2^{20}}=6 .
\end{align}
The non-linear six-bit coloring construction in~\eqref{eq:nonlinear-example-encoder} yields 
$\psi_1^{-1}(M)\triangleq \{\mathbf x_{[26]}\in\mathbb F_2^{26} : \phi(\mathbf x_{[26]})=M\}\in\Gamma(G_{\mathrm B})$ such that $|\psi_1^{-1}(M)|=2^{20}=\alpha(G_{\mathrm B})$ for every $M\in \mathbb F_2^6$. Thus, the non-linear encoder partitions $G_{\mathrm B}$ into $64$ maximum independent sets. 
Moreover, under the uniform distribution, every independent color class has the same probability, $\frac{\alpha(G_{\mathrm B})}{2^{26}}=2^{-6}$. 
Thus, the encoder in~\eqref{eq:nonlinear-example-encoder} achieves the optimal  one-shot zero-error chromatic entropy:
\begin{align}
H_\chi(G_{\mathrm B},\mathbf X_{[26]})=6 .
\end{align}
Finally, vertex transitivity and $\omega(G_{\mathrm B})=2^6$ give
\begin{align}
H_\omega(G_{\mathrm B},\mathbf X_{[26]})=\log_2\omega(G_{\mathrm B})=6 .
\end{align}

Repeating the non-linear code componentwise gives a zero-error rate of $6$ at every blocklength. Conversely, after revealing 
$\mathbf X_{\mathcal J^c}$, the message must enable the recovery of 
sources indexed by 
$\mathcal J$. Thus, Fano’s inequality gives $H_2\left(\mathbf X_{\mathcal J}^n\middle|,M_n,\mathbf X_{\mathcal J^c}^n\right)=6n-o(n)$. 
After normalization and passage to the limit, this proves $R^*(26,26)
\geq6$.

From the impossibility result~\cite[Theorem 3]{sharififar2021broadcast} for scalar and vector linear coding in the baseline $26$-user instance  
\begin{equation}
R_0^*(26,26)
=R^*(26,26)
=6<R^{*\mathrm{lin}}.\qedhere
\end{equation}
\end{proof}

\end{ex}

Example~\ref{ex:nonlinear-outperforms-linear} shows that a non-linear vector encoding scheme exploits the fundamental limits of compressibility based on characteristic graphs, which assigns codewords to MISs, and reduces average code length over scalar and vector linear encoding, highlighting the benefit of our framework. 
The same graph-based approach can also benefit NFCB, although deriving order-optimal non-linear encoding mappings remains open. 

We next consider a Boolean computation broadcast example where $G_{\mathrm B}$ is self-complementary, i.e., is isomorphic to its graph complement, denoted as $G_{\mathrm B}\simeq\overline{G}_{\mathrm B}$,~\cite {farrugia1999self}, 
where our scheme in Theorem~\ref{theo-general-achievable-rate-via-broadcast} achieves the optimum.

\begin{ex}[A {\bf \texorpdfstring{$(3,3,\{X_i\}_{i\in[3]},\{f_i\}_{i\in[3]})$}{} Boolean computation broadcast model}]
\label{ex-nonlinear-comp_self-complementary_graph}

Consider a NFCB model $(3,3,\{X_i\}_{i\in[3]},\{f_i\}_{i\in[3]})$, where ${\bf X}_{[3]}\sim\mathrm{Unif}(\mathbb{F}_2^3)$, and the desired functions are 
\begin{align}
\label{eq:function_demands_self_complementary}
f_1=(X_2\oplus X_3,X_2(X_2\oplus X_3)) , \quad
f_2=X_1\oplus X_3, \quad
f_3=X_1\oplus X_2 . 
\end{align}
These functions yield the respective set of edges in the corresponding $G_{f_i}$ for each user $i\in[3]$:
\begin{align}
\mathcal E_1=\{&\{000,001\},\{000,010\},\{001,010\},\{001,011\},\{010,011\},\nonumber\\
&\{100,101\},\{100,110\},\{101,110\},
\{101,111\},\{110,111\}\} , \\
\mathcal E_2=\{&\{000,001\},\{000,100\},\{001,101\},\{100,101\},\{010,011\},\{010,110\},\{011,111\},\{110,111\}\} , \\
\mathcal E_3=\{&\{000,010\},\{000,100\},\{010,110\},\{100,110\},\{001,011\},\{001,101\},\{011,111\},\{101,111\}\} .
\end{align}
Thus, taking their union gives a fourteen-edge broadcast graph $G_{\mathrm B}=(\mathcal{V}(G_{\mathrm B}),\mathcal{E}(G_{\mathrm B}))$, with $\mathcal{V}(G_{\mathrm B})=\mathbb F_2^3$ and $\mathcal{E}(G_{\mathrm B})=\bigcup_{i=1}^3 \mathcal{E}_i$, 
and is shown in Figure~\ref{fig:Graphs_ex4}. 
The following vertex permutation $\pi$ maps $G_{\mathrm B}$ to $\overline{G}_{\mathrm B}$: 
\begin{align}
\begin{array}{c|cccccccc}
\mathbf X_{[3]}
&000&001&010&011&100&101&110&111\\ \hline
\pi(\mathbf X_{[3]})
&001&100&111&010&110&011&000&101 .
\end{array}
\end{align}
Hence, $G_{\mathrm B}\simeq\overline{G}_{\mathrm B}$, implying $\{u,v\}\in\mathcal E(G_{\mathrm B})\,\,\Longleftrightarrow\,\, \{\pi(u),\pi(v)\}\in\mathcal E(\overline{G}_{\mathrm B})$. 
Both $G_{\mathrm B}$ and $\overline{G}_{\mathrm B}$ have $|\mathcal{V}(G_{\mathrm B})|=8$ vertices and $|\mathcal{E}(G_{\mathrm B})|=14$ edges, and the same degree sequence. Furthermore, 
\begin{align}
\label{chromatic_numbers_self_complementary}
\alpha(G_{\mathrm B})=\omega(G_{\mathrm B})=\chi(G_{\mathrm B})=3=\omega(\overline{G}_{\mathrm B})=\chi(\overline{G}_{\mathrm B})=3 \ .
\end{align}

\begin{figure*}[t!]
    \centering
\includegraphics[width=0.5\textwidth]{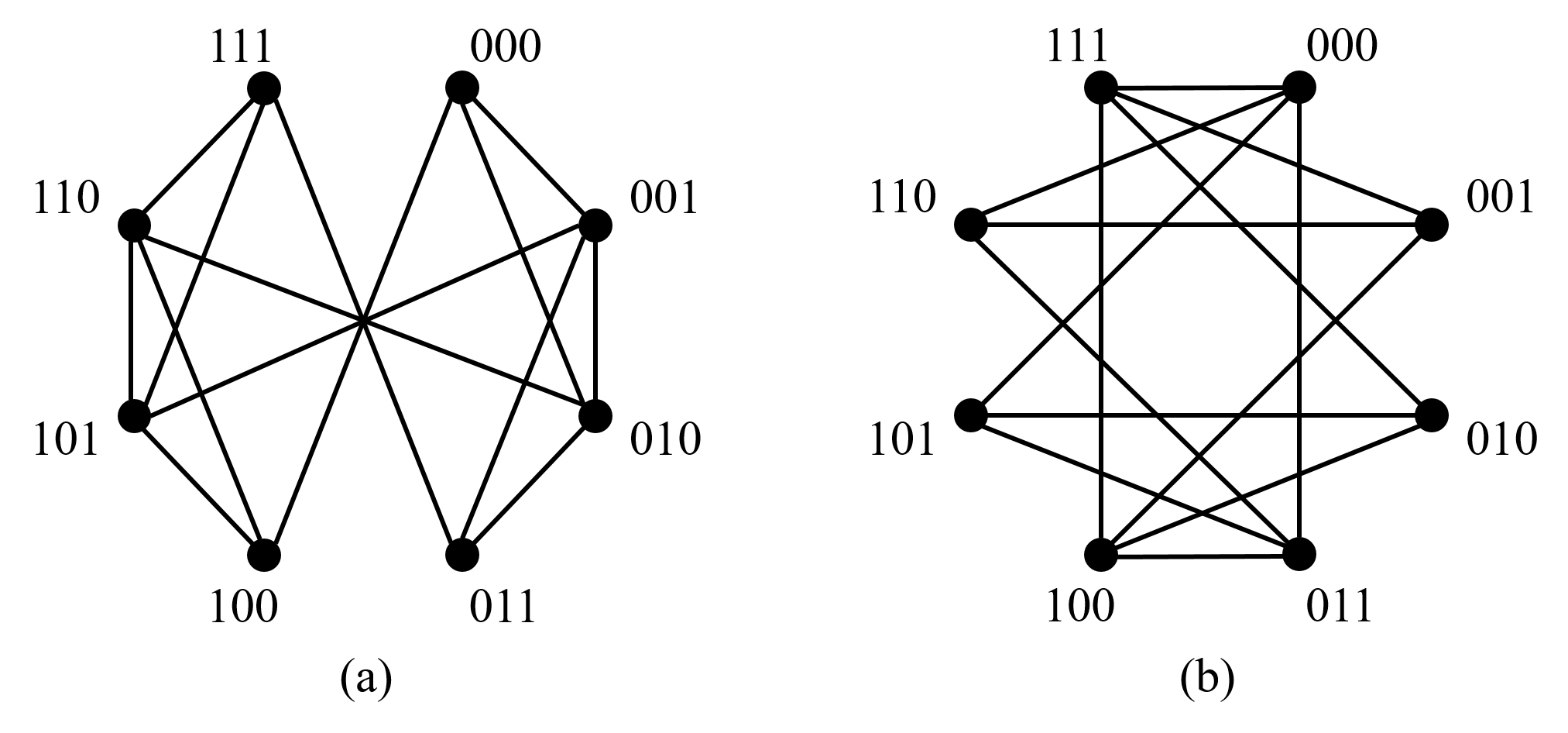}
    \caption{\small (a) A self-complementary broadcast graph $G_{\mathrm B}$ on $8$ vertices and (b) its complement $\overline{G}_{\mathrm B}$.}
    \label{fig:Graphs_ex4}
\end{figure*}

The graph consists of four triangles $\{000,001,010\}$, $\{001,010,011\}$, $\{100,101,110\}$, and $\{101,110,111\}$, which form two overlapping triangle pairs, where the left block is composed of $\{100,101,110\}$, $\{101,110,111\}$, and the right block consists of $\{000,001,010\}$, $\{001,010,011\}$, where each pair shares an edge. 
There are $7$ different $4$-vertex sets that contain a $4$-cycle, of which $5$ are chordless (induced $C_4$) and $2$ are chorded. Specifically, chordless $4$-cycles are $\{000,001,100,101\}$, $\{000,010,100,110\}$, $\{001,010,101,110\}$, $\{001,011,101,111\}$, and $\{010,011,110,111\}$, and chorded $4$-cycles are $\{000,001,010,011\}$, and $\{100,101,110,111\}$. 
Furthermore, $G_{\mathrm B}$ contains neither an induced $C_5$ nor an induced $C_7$. Since $\mathcal V(G_{\mathrm B})=8$, these are the only possible induced odd cycles of length at least five. By self-complementarity, $\overline{G}_{\mathrm B}$ also contains neither an induced $C_5$ nor an induced $C_7$. 
By the Strong Perfect Graph Theorem, $G_{\mathrm B}$ is perfect. 
%
Hence, from~\cite[Lemma~12]{alon1996source} and~\cite{csiszar1990entropy}, 
\begin{align}
\label{eq:clique_entropy_self-complementary_perfect_ex3}
H_\omega(G_{\mathrm B},{\bf X}_{[3]})=H_\kappa(G_{\mathrm B},{\bf X}_{[3]})=1.5 \,\mbox{bits} .
\end{align}

Employing the Slepian-Wolf theorem~\cite{SlepWolf1973}, $R_{\mathrm{SW}}(3,3)=\max\limits_{i\in[3]}\,  H_2({\bf X}_{[3]\backslash i})=2$ bits.

The optimal one-shot zero-error rate is $R_{1,0}^*(3,3)=\log_2 \chi(G_{\mathrm B})=\log_2 3=1.585$ bits.
To determine the entropy of one-shot coloring, let $C=c_{G_{\mathrm B}}(\mathbf X_{[3]})\in\{R,B,G\}$ denote the color assigned to the source realization, where
\begin{align}
c_{G_{\mathrm B}}^{-1}(R)=\{000,011,101\},\quad
c_{G_{\mathrm B}}^{-1}(B)=\{010,100,111\},\quad
c_{G_{\mathrm B}}^{-1}(G)=\{001,110\} .
\end{align}
Under $\mathbf X_{[3]}\sim\mathrm{Unif}(\mathbb F_2^3)$, its distribution is $\bigl(\mathbb{P}(C=R),\mathbb{P}(C=B),\mathbb{P}(C=G)\bigr)=\left(0.375,0.375,0.25\right)$. Thus,
\begin{align}
H_\chi(G_{\mathrm B},{\bf X}_{[3]})=\mathsf H_2((0.375,0.375,0.25))=1.561 \ \mbox{bits} .
\end{align}

For $X_1$, conditioning on $X_1=0$ leaves the four vertices $\{000,001,010,011\}$. Their induced graph is $K_4$ with one edge removed. The same structure appears when $X_1=1$. A minimum-entropy coloring of each induced graph has color-class sizes $2,1,1$. Because the four vertices are conditionally uniform, the color probabilities are $(0.5,0.25,0.25)$, leading to $H_\chi(G_{\mathrm B},\mathbf X_{[3]} \mid X_1)=1.5$ bits. 
For $X_j$, $j=2,3$, conditioning on $X_j=0$ or $X_j=1$ produces an induced $C_4$. Thus, $H_\chi(G_{\mathrm B},\mathbf X_{[3]} \mid X_j)=1$ bit. Hence,~$H_{\chi,\mathrm B}^{\mathrm{SI}}(G_{\mathrm B},\mathbf X_{[3]}\mid \mathcal S_{[3]}=\mathbf X_{[3]})
=1.5$ bits.

Given the functions in~\eqref{eq:function_demands_self_complementary}, for $f_1$, we have $\mathbb{P}(f_1=(0,0))=0.5$, and $\mathbb{P}(f_1=(1,0))=\mathbb{P}(f_1=(1,1))=0.25$. Thus, $H_2(f_1)=H_2(f_1\mid X_1)=\mathsf H_2((0.5,0.25,0.25))=1.5$ bits because $f_1$ is independent of $X_1$. 
For any zero-error blocklength-$n$ code, user one must recover $f_1^n$ from $(M_n,X_1^n)$. Thus, $H_2(f_1^n\mid M_n,X_1^n)=0$. Therefore,
\begin{align}
H_2(M_n)\geq H_2(M_n\mid X_1^n)\geq I_2(f_1^n;M_n\mid X_1^n)=H_2(f_1^n\mid X_1^n)=1.5n \ .
\end{align}
Thus, $R_0^*(3,3)
\geq 1.5$ bits. In other words, the K\"orner-entropy upper bound and the single-user converse coincide.

For asymptotically lossless reconstruction, employing Fano's inequality
\begin{align}
H_2(f_1^n\mid M_n,X_1^n)=o(n) ,
\end{align}
and therefore, $H_2(M_n)\geq 1.5n-o(n)$, yielding $R^*(3,3)
\geq 1.5$ bits.

Thus, for the given perfect graph setting,  the operational quantities are optimal:
\begin{multline}
\label{eq:rate_comparison_ex3}
R^*(3,3)=R_0^*(3,3)
= H_\kappa(G_{\mathrm B},\mathbf X_{[3]})=H_{\chi,\mathrm B}^{\mathrm{SI}}(G_{\mathrm B},\mathbf X_{[3]}\mid \mathcal S_{[3]}=\mathbf X_{[3]})
=1.5\\
\leq  H_\chi(G_{\mathrm B},{\bf X}_{[3]})=1.561 \leq R_{\mathrm{SW}}(3,3)=2 \ .
\end{multline}


In this example, where $G_{\mathrm B}$ is self-complementary and perfect, our proposed technique achieves the optimum.

\end{ex}

\section{Conclusion}
\label{sec:conc}

In this paper, we studied the general $K$-user NFCB problem for $K\geq 2$, with arbitrary data distributions and function demands over finite fields. On the achievability side, we proposed a new broadcast coding scheme based on K\"orner’s characteristic graph approach, where the broadcast message is designed as a function of each user’s desired function and side information (Theorem~\ref{theo-general-achievable-rate-via-broadcast}). Compared with the state of the art, the scheme reduces communication rate while accommodating both linear and non-linear demands. On the converse side, by carefully capturing each user’s demand and side information, Theorem~\ref{theo-lower-bound-mutual-info} gives tighter converses than a simple side-information-based converse induced by $({\bf X}_{[N]}, \{\mathcal{S}_i\}_{i\in[K]},\{f_i\}_{i\in[K]})$. Compared with the achievable rate in Theorem~\ref{theo-general-achievable-rate-via-broadcast}, which optimizes the coloring distribution over MISs of $\GU$, the converse bound in Theorem~\ref{theo-lower-bound-mutual-info} relies on a carefully designed variable to approximate the fundamental rate limits induced by $({\bf X}_{[N]}, \{\mathcal{S}_i\}_{i\in[K]},\{f_i\}_{i\in[K]})$. 
We also detailed two clique entropy-based converses (Theorem~\ref{thm:zero_error_clique_converse} and~\ref{thm:vanishing_error_clique_converse}) using the length-$n$ operational block broadcast graph $G_{\mathrm B}^{(n)}$. 
We also upper-bounded the gap between our achievability results and the converse in Proposition~\ref{prop-genie}, showing tightness when the broadcast graph satisfies certain structural properties and optimality when the desired functions are compatible. Further work is needed to sharpen these bounds and quantify the fundamental rate limits. 

Future directions include optimizing the placement of side information at users to reduce communication complexity, especially when side information is a function of the sender’s datasets and users have heterogeneous storage capacities. Another direction is to identify classes of NFCB problems for which the gap is provably small. Security and privacy are also promising directions: one may design a broadcast message that allows each user to recover only its requested function, without learning other users’ requests, while keeping communication and computation overhead low. Our tools are also relevant to practical non-asymptotic compression settings, where the limits depend on the entropy of the characteristic graph for finite-length sources, and to lossy settings, where broadcast graph edges are defined relative to a distortion criterion. These quantities are generally not known in closed form.

\begin{spacing}{1}
\bibliographystyle{IEEEtran}
\bibliography{ref}

@STRING{tcom= {IEEE Trans. Commun.}}

@STRING{jsac = {IEEE J. Sel. Areas Commun.}}

@STRING{tit = {IEEE Trans. Inf. Theory}}

@STRING{allerton ={Proc., Allerton Conf. Commun., Control, and Comput.}}

@STRING{asilomar = {Proc., Asilomar Conf. Signals, Systems, and Computers}}

@STRING{gcw = {Proc., IEEE Global Commun. Conf. (Globecom) Wkshps}}

@STRING{isit = {Proc., IEEE Int. Symp. Inf. Theory (ISIT)}}

@STRING{itw = {Proc., Inf. Theory Wksh. (ITW)}}

@STRING{spawc = {Proc., IEEE Int. Wksh. Signal Proces. Advances in Wireless Commun.}}

@STRING{preprint = {Preprint}}

@article{sun2019capacity,
  title={On the capacity of computation broadcast},
  author={Sun, Hua and Jafar, Syed Ali},
  journal=tit,
  volume={66},
  number={6},
  pages={3417--3434},
  year={2019},
  month={Dec.}
}

@article{yao2024capacity,
  title={The capacity of 3 user linear computation broadcast},
  author={Yao, Yuhang and Jafar, Syed A},
  journal=tit,
  year={2024},
  month={Apr.},
}

@article{yao2024generic,
  title={On the generic capacity of {K}-user symmetric linear computation broadcast},
  author={Yao, Yuhang and Jafar, Syed A},
  journal=tit,
  year={2024},
  month={Mar.},
  publisher={IEEE}
}

@article{wan2021optimal,
  title={On the optimal load-memory tradeoff of cache-aided scalar linear function retrieval},
  author={Wan, Kai and others},
  journal=tit,
  volume={67},
  number={6},
  pages={4001--4018},
  year={2021},
  month={Mar.}
}

@inproceedings{ravi2016broadcast,
  title={Broadcast Function Computation with Complementary Side Information},
  author={Ravi, Jithin and Dey, Bikash Kumar},
  booktitle=gcw,
  pages={1--6},
  year={2016},
  month={Dec.},
  address={Washington, DC, USA}
}

@article{khalesi2025hyper,
  title={Hyper-Minrank: A Unified Hypergraph Characterization of Multi-Sender Index Coding},
  author={Khalesi, Ali and Elia, Petros},
  journal={arXiv preprint arXiv:2512.13615},
  year={2025},
  month={Dec.}
}

@inproceedings{korner1971coding,
  title={Coding of an information source having ambiguous alphabet and the entropy of graphs},
  author={K{\"o}rner, J{\'a}nos},
  booktitle={Proc., Prague Conf. Inf. Theory},
  pages={411--425},
  year={1971},
  month={Sep.},
  address={Prague, Czech}
}

@book{el2011network,
  title={Network Information Theory},
  author={El Gamal, Abbas and Kim, Young-Han},
  year={2011},
  month={Dec.},
  publisher={Cambridge University Press}
}

@Article{OrlRoc2001,
  author  = {A. Orlitsky and J. R. Roche},
  title   = {Coding for computing},
  journal = tit,
  year    = {2001},
  volume  = {47},
  number  = {3},
  month   = {Mar.},
  pages   = {903--17},
}

@Article{AlonOrlit1996,
  author  = {N. Alon and A. Orlitsky},
  title   = {Source coding and graph entropies},
  journal = tit,
  year    = {1996},
  volume  = {42},
  number  = {5},
  month   = {Sep.},
  pages   = {1329--39},
}

@article{feizi2014network,
  title={On network functional compression},
  author={Feizi, Soheil and M{\'e}dard, Muriel},
  journal=tit,
  volume={60},
  number={9},
  pages={5387--5401},
  year={2014},
  month={Jun.}
}

@article{alon1996source,
  title={Source coding and graph entropies},
  author={Alon, Noga and Orlitsky, Alon},
  journal={IEEE Trans. Inf. Theory},
  volume={42},
  number={5},
  pages={1329--1339},
  year={1996},
  month={Aug.}
}

@article{elkordy2021compressed,
  title={Compressed coded distributed computing},
  author={Elkordy, Ahmed and others},
  journal=tcom,
  volume={69},
  number={5},
  pages={2773--2783},
  year={2021},
  month={Jan.},
  publisher={IEEE}
}

@article{tian2022fedbroadcast,
  title={Fedbroadcast: Exploit broadcast channel for fast convergence in wireless federated learning},
  author={Tian, Hao and others},
  journal={IEEE Internet Things J.},
  volume={10},
  number={5},
  pages={4652--4663},
  year={2022},
  month={Nov.},
  publisher={IEEE}
}

@article{salehkalaibar2021distributed,
  title={Distributed hypothesis testing over noisy broadcast channels},
  author={Salehkalaibar, Sadaf and Wigger, Mich{\`e}le},
  journal={Information},
  volume={12},
  number={7},
  pages={268},
  year={2021},
  month={Jun.},
  publisher={MDPI}
}

@inproceedings{salehi2023achievable,
  title={An achievable low complexity encoding scheme for coloring cyclic graphs},
  author={Deylam Salehi, Mohammad Reza and Malak, Derya},
  booktitle=allerton,
  pages={1--8},
  year={2023},
  month={Sep.},
  address={Monticello, IL, USA}
}

@article{korner1986fredman,
  title={Fredman--{K}oml{\'o}s bounds and information theory},
  author={K{\"o}rner, J{\'e}nos},
  journal={J. SIAM on Alg. Disc. Methods},
  volume={7},
  number={4},
  pages={560--570},
  year={1986},
  month={Oct.}
}

@inproceedings{zaharia2010spark,
  title={Spark: Cluster computing with working sets},
  author={Zaharia, Matei and others},
  booktitle={Proc., USENIX HotTop Cloud Comp. Wksh.},
  year={2010},
  month={Jun.},
  address={Boston, MA, USA}
}

@article{dean2008mapreduce,
  title={{MapReduce:} {Simplified} data processing on large clusters},
  author={Dean, Jeffrey and Ghemawat, Sanjay},
  journal={J. Commun. ACM},
  volume={51},
  number={1},
  pages={107--113},
  year={2008},
  month={Jan.}
}

@article{csahin2023survey,
  title={{A Survey on Over-the-Air Computation}},
  author={{\c{S}}ahin, Alphan and Yang, Rui},
  journal={IEEE Comm. Surveys \& Tutorials},
  volume={25},
  number={3},
  pages={1877--1908},
  year={2023},
  month={Apr.}
}

@article{montalban2021broadcast,
  title={{Broadcast Core-Network: Converging Broadcasting With the Connected World}},
  author={Montalban, Jon and others},
  journal={IEEE Trans. on Broadcast.},
  volume={67},
  number={3},
  pages={558--569},
  year={2021},
  month={Aug.}
}

@article{malak2024multis,
  title={{Multi-Server Multi-Function Distributed Computation}},
  author={Malak, Derya and others},
  journal={Entropy},
  volume={26},
  number={6},
  pages={448},
  year={2024},
  month={Jun.},
  publisher={MDPI}
}

@article{dutta2019optimal,
  title={On the optimal recovery threshold of coded matrix multiplication},
  author={Dutta, Sanghamitra and others},
  journal=tit,
  volume={66},
  number={1},
  pages={278--301},
  year={2019},
  month={Jul.}
}

@article{jia2021capacity,
  title={On the capacity of secure distributed batch matrix multiplication},
  author={Jia, Zhuqing and Jafar, Syed Ali},
  journal=tit,
  volume={67},
  number={11},
  pages={20--37},
  year={2021},
  month={Sep.}
}

@inproceedings{tandon2017gradient,
  title={{Gradient Coding: Avoiding Stragglers in Distributed Learning}},
  author={Tandon, Rashish and others},
  booktitle ={Proc., Int. Conf. Machine Learn.},
  year={2017},
  month={Jul.},
  address={Sydney, Australia},
  publisher ={PMLR},
}

@article{sun2018capacity,
  title={The capacity of private computation},
  author={Sun, Hua and Jafar, Syed Ali},
  journal=tit,
  volume={65},
  number={6},
  pages={3880--3897},
  year={2019},
  month={Jun.}
}

@inproceedings{el2010coding,
  title={On coding for cooperative data exchange},
  author={El Rouayheb, Salim and Sprintson, Alex and Sadeghi, Parastoo},
  booktitle=ITW,
  pages={1--5},
  year={2010},
  month={Jan.},
  address={Cairo, Egypt}
}

@inproceedings{sharififar2021broadcast,
  title={Broadcast rate requires nonlinear coding in a unicast index coding instance of size 36},
  author={Sharififar, Arman and Sadeghi, Parastoo and Aboutorab, Neda},
  booktitle={Proc., IEEE Int. Symp. Inf. Theory},
  pages={208--213},
  year={2021},
  month={Jul.}
}

@inproceedings{MaddahAli2013,
  author    = {M. A. Maddah-Ali and U. Niesen},
  title     = {Fundamental limits of caching},
  booktitle = {Proc., IEEE Int. Symp. Inf. Theory},
  year      = {2013},
  month     = {July},
  pages     = {1077-1081},
  address   = {Istanbul, Türkiye},
}

@article{khalesi2022multi,
  title={{Multi-User Linearly-Separable Distributed Computing}},
  author={Khalesi, Ali and Elia, Petros},
  journal=tit,
  year={2023},
  month={Jun.},
  volume={69},
  number={10},
  pages={6314-39}
}

@Article{NadMadAve2016,
  author  = {N. Naderializadeh and M. A. Maddah-Ali and A. S. Avestimehr},
  title   = {Fundamental limits of cache-aided interference management},
  journal = {IEEE Trans. Inf. Theory},
  year    = {2017},
  volume  = {63},
  number  = {5},
  month   = {May},
  pages   = {3092-107},
}

@Article{BirKol2006,
  author  = {Y. Birk and T. Kol},
  title   = {Coding on demand by an informed source {(ISCOD)} for efficient broadcast of different supplemental data to caching clients},
  journal = {IEEE/ACM Trans. Netw.},
  year    = {2006},
  volume  = {14},
  number  = {SI},
  month   = {Jun.},
  pages   = {2825 - 2830},
}

@Article{BarBirJayKol2011,
  author  = {Z. Bar-Yossef and Y. Birk and T. S. Jayram and T. Kol},
  title   = {Index coding with side information},
  journal = {IEEE Trans. Inf. Theory},
  year    = {2011},
  volume  = {57},
  number  = {3},
  month   = {Mar.},
  pages   = {1479 - 1494},
}

@article{farrugia1999self,
  title={Self-complementary graphs and generalisations: a comprehensive reference manual},
  author={Farrugia, Alastair},
  journal={University of Malta},
  year={1999},
  month={Aug.}
}

@article{berge1961farbung,
  title={Farbung von Graphen, deren samtliche bzw. deren ungerade Kreise starr sind},
  author={Berge, Claude},
  journal={Wissenschaftliche Zeitschrift},
  year={1961},
  publisher={Martin Luther Universitat Halle-Wittenberg}
}

@Book{CovTho1991,
  author  = {T. M. Cover and J. A. Thomas},
  title   = {Elements of Information Theory},
  year    = {1991},
  month   ={Dec.},
  address = {New York, NY: USA},
  publisher={John Wiley \& Sons},
  edition={2}
}

@book{godsil2001algebraic,
  title={Algebraic graph theory},
  author={Godsil, Christopher David and Royle, Gordon and Godsil, CD},
  volume={207},
  year={2001},
  publisher={Springer New York}
}

@book{magnus2004combinatorial,
  title={Combinatorial group theory: Presentations of groups in terms of generators and relations},
  author={Magnus, Wilhelm and Karrass, Abraham and Solitar, Donald},
  year={2004},
  publisher={Courier Corporation}
}

@article{conlon2024clique,
  title={On the clique number of random Cayley graphs and related topics},
  author={Conlon, David and Fox, Jacob and Pham, Huy Tuan and Yepremyan, Liana},
  journal={arXiv preprint arXiv:2412.21194},
  year={2024},
  month={Dec.}
}

@article{hoang1987conjecture,
  title={On a conjecture of Meyniel},
  author={Ho{\`a}ng, Ch{\'\i}nh T},
  journal={J. Comb. Theory, Ser. B},
  volume={42},
  number={3},
  pages={302--312},
  year={1987},
  publisher={Elsevier}
}

@article{alon2020minrank,
  title={The minrank of random graphs over arbitrary fields},
  author={Alon, Noga and Balla, Igor and Gishboliner, Lior and Mond, Adva and Mousset, Frank},
  journal={Israel Journal of Mathematics},
  volume={235},
  number={1},
  pages={63--77},
  year={2020},
  month={Jan.},
  publisher={Springer}
}

@article{csiszar1990entropy,
  title={Entropy splitting for antiblocking corners and perfect graphs},
  author={Csisz{\'a}r, Imre and K{\"o}rner, J{\'a}nos and Lov{\'a}sz, L{\'a}szl{\'o} and Marton, Katalin and Simonyi, G{\'a}bor},
  journal={Combinatorica},
  volume={10},
  number={1},
  pages={27--40},
  year={1990},
  month={Mar.},
  publisher={Springer}
}

@Article{SlepWolf1973,
  author  = {D. Slepian and J. K. Wolf},
  title   = {Noiseless coding of correlated information sources},
  journal = tit,
  year    = {1973},
  volume  = {19},
  number  = {4},
  month   = {Jul.},
  pages   = {471--480},
}

@Article{CarFloJor2008,
  author  = {J. Cardinal and S. Fiorini and G. Joret},
  title   = {Tight results on minimum entropy set cover},
  journal = {Algorithmica},
  year    = {2008},
  volume  = {51},
  number  = {1},
  month   = {May},
  pages   = {49--60},
}

@article{li2017fundamental,
  title={A fundamental tradeoff between computation and communication in distributed computing},
  author={Li, Songze and Maddah-Ali, Mohammad Ali and Yu, Qian and Avestimehr, A Salman},
  journal={IEEE Trans. Inf. Theory},
  volume={64},
  number={1},
  pages={109--128},
  year={2017},
  month={Sep.}
}

@article{halpern1991clock,
  title={Clock synchronization and the power of broadcasting},
  author={Halpern, Joseph Y and Suzuki, Ichiro},
  journal={Distributed Computing},
  volume={5},
  number={2},
  pages={73--82},
  year={1991},
  month={Sep.},
  publisher={Springer}
}

@article{reimers1998digital,
  title={Digital video broadcasting},
  author={Reimers, Ulrich},
  journal={IEEE Commun. Mag.},
  volume={36},
  number={6},
  pages={104--110},
  year={1998},
  month={Jun.},
  publisher={IEEE}
}

@inproceedings{li2016coded,
  title={Coded distributed computing: Fundamental limits and practical challenges},
  author={Li, Songze and others},
  booktitle={Proc., Asilomar Conf. Sig. Sys.  Comp. (ACSSC)},
  pages={509--513},
  year={2016},
  month={Nov.},
  address={Pacific Grove, CA, USA}
}

@ARTICLE{yang2023federated,
  author={Yang, Huiyuan and Ding, Tian and Yuan, Xiaojun},
  journal=tcom, 
  title={{Federated Learning With Lossy Distributed Source Coding: Analysis and Optimization}}, 
  year={2023},
  month={May},
  volume={71},
  number={8},
  pages={4561-4576}
}

@inproceedings{kate2010distributed,
  title={Distributed private-key generators for identity-based cryptography},
  author={Kate, Aniket and Goldberg, Ian},
  booktitle={Proc., Springer Int. Conf. Security and Crypto. for Networks},
  pages={436--453},
  year={2010},
  month={Sep.},
  address={Amalfi, Italy}
}

@article{gamal1982achievable,
  title={Achievable rates for multiple descriptions},
  author={Gamal, AE and Cover, Thomas},
  journal=tit,
  volume={28},
  number={6},
  pages={851--857},
  year={1982},
  month={Nov.}
}

@article{arbabjolfaei2018fundamentals,
  title={Fundamentals of index coding},
  author={Arbabjolfaei, Fatemeh and others},
  journal={Found. and Trends. Com. and Info. Theo.},
  volume={14},
  number={3-4},
  pages={163--346},
  year={2018},
  month={Oct.},
  publisher={Now Publishers, Inc.}
}

@inproceedings{MaTunin2025,
  author={Ma, Yinbin and Tuninetti, Daniela},
  booktitle=isit, 
  title={An Achievable Scheme for the K-User Linear Computation Broadcast Channel}, 
  year={2025},
  month={Jun.},
  pages={1-6},
  address={Ann Arbor, MI, USA}
}

@article{yu2020straggler,
  title={{Straggler mitigation in distributed matrix multiplication: Fundamental limits and optimal coding}},
  author={Yu, Qian and others},
  journal=tit,
  volume={66},
  number={3},
  pages={1920--1933},
  year={2020},
  month={Jan.}
}

@article{lee2017speeding,
  title={Speeding up distributed machine learning using codes},
  author={Lee, Kangwook and others},
  journal=tit,
  volume={64},
  number={3},
  pages={1514--1529},
  year={2017},
  month={Aug.}
}

@misc{cryptoeprint:2024/154,
      author = {Aurélien Dupin and Simon Abelard},
      title = {{Broadcast Encryption using Sum-Product decomposition of Boolean functions}},
      howpublished = {Cryptology {ePrint} Archive, Paper 2024/154},
      year = {2024}, 
      month={Jul.}
}

@inproceedings{malak2023distributed,
  title={Distributed computing of functions of structured sources with helper side information},
  author={Malak, Derya},
  booktitle=spawc,
  pages={276--280},
  year={2023},
  month={Sep.},
  address={Shanghai, China}
}

@article{jafar2006capacity,
  title={Capacity with causal and noncausal side information: A unified view},
  author={Jafar, Syed},
  journal=tit,
  volume={52},
  number={12},
  pages={5468--5474},
  year={2006},
  month={Nov.}
}

@article{LeeDimaindex,
  author={Lee, Namyoon and Dimakis, Alexandros G. and Heath, Robert W.},
  journal={IEEE Commun. Letter}, 
  title={{Index Coding With Coded Side-Information}}, 
  year={2015},
  month={Mar.},
  volume={19},
  number={3},
  pages={319-322},
}

@article{bar2011index,
  title={Index coding with side information},
  author={Bar-Yossef, Ziv and Birk, Yitzhak and Jayram, TS and Kol, Tomer},
  journal=tit,
  volume={57},
  number={3},
  pages={1479--1494},
  year={2011},
  month={Feb.},
  publisher={IEEE}
}

@article{tahmasbi2014critical,
  title={Critical graphs in index coding},
  author={Tahmasbi, Mehrdad and Shahrasbi, Amirbehshad and Gohari, Amin},
  journal=jsac,
  volume={33},
  number={2},
  pages={225--235},
  year={2014},
  month={Feb.},
  publisher={IEEE}
}

@article{li2018multi,
  title={Multi-sender index coding for collaborative broadcasting: A rank-minimization approach},
  author={Li, Min and Ong, Lawrence and Johnson, Sarah J},
  journal=tit,
  volume={67},
  number={2},
  pages={1452--1466},
  year={2018},
  month={Oct.},
  publisher={IEEE}
}

@article{el2010index,
  title={On the index coding problem and its relation to network coding and matroid theory},
  author={El Rouayheb, Salim and Sprintson, Alex and Georghiades, Costas},
  journal=tit,
  volume={56},
  number={7},
  pages={3187--3195},
  year={2010},
  month={Jun.},
  publisher={IEEE}
}

@article{deylam2025graph,
  title={{Graph-Theoretic Limits of Distributed Computation: Entropy, Eigenvalues, and Chromatic Numbers}},
  author={Deylam Salehi, Mohammad Reza and Malak, Derya},
  journal={Entropy},
  volume={27},
  number={7},
  pages={757},
  year={2025},
  month={Jul.},
  publisher={MDPI}
}

@inproceedings{DeyKizMal2025nonlinear,
  title={{Non-Linear Function Computation Broadcast}},
  author={Deylam Salehi, Mohammad Reza and Kizhakke Purakkal, Vijith Kumar  and Malak, Derya},
  booktitle=isit,
  year={2025},
  month={Jun.},
  address={Ann Arbor, MI, USA}
}

@article{gupta2016relation,
  title={A relation between network computation and functional index coding problems},
  author={Gupta, Anindya and Rajan, B Sundar},
  journal=tcom,
  volume={65},
  number={2},
  pages={705--714},
  year={2016},
  month={Dec},
  publisher={IEEE}
}

@article{tsukiyama1977new,
  title={A new algorithm for generating all the maximal independent sets},
  author={Tsukiyama, Shuji and Ide, Mikio and Ariyoshi, Hiromu and Shirakawa, Isao},
  journal={SIAM J. on Computing},
  volume={6},
  number={3},
  pages={505--517},
  year={1977},
  month={Sep.},
  publisher={SIAM}
}

@article{namboodiri2026fundamental,
  title={Fundamental Limits of Linearly Separable Distributed Computing without File Partitioning},
  author={Namboodiri, KK Krishnan and Peter, Elizabath and Malak, Derya and Elia, Petros},
  journal={IEEE Trans. Inf. Theory},
  volume={72},
  number={9},
  pages={6679-6701},
  year={2026},
  month={Sep.}
}

@article{bidokhti2018noisy,
  title={Noisy broadcast networks with receiver caching},
  author={Bidokhti, Shirin Saeedi and Wigger, Michele and Timo, Roy},
  journal=tit,
  volume={64},
  number={11},
  pages={6996--7016},
  year={2018},
  month={May}
}

@article{yao2022capacity,
  title={Capacity of 2-user linear computation broadcast with partially informed server},
  author={Yao, Yuhang and Jafar, Syed A},
  journal={IEEE Commun. Letters},
  volume={26},
  number={9},
  pages={2014--2017},
  year={2022},
  month={Jul.}
}

@book{karp2009reducibility,
  title={Reducibility among combinatorial problems},
  author={Karp, Richard M},
  booktitle={50 Years of Integer Programming 1958-2008: from the Early Years to the State-of-the-Art},
  pages={219--241},
  year={2009},
  month={Nov.},
  publisher={Springer}
}
\end{spacing}

\end{document}